\let\csname equation*\endcsname\relax
\let\csname endequation*\endcsname\relax
\renewcommand{\harvardurl}[1]{\textbf{URL:} \url{#1}}
\providecommand\phantomsection{}
\newcommand{\bra}[1]{\langle #1|}
\newcommand{\ket}[1]{|#1\rangle}
\newcommand{\id}{\mathbbm{1}}
\newtheorem{thm}{Theorem}
\newtheorem{lem}[thm]{Lemma}
\newtheorem{defin}{Definition}
\newtheorem*{coro}{Corollary}
\newtheorem*{thmnonumb}{Theorem}
\DeclareMathOperator{\md}{d}
\DeclareMathOperator{\rmsupp}{supp}
\newcommand{\smoothhmax}{H_{\max}^\varepsilon}
\let\oldmarginpar\marginpar
\renewcommand\marginpar[1]{\-\oldmarginpar[\raggedleft\marginparsize #1]%
{\raggedright\marginparsize #1}}
\begin{document}

\setlength{\tabcolsep}{1ex}

\title{A measure of majorisation emerging from single-shot statistical mechanics}
\author{D Egloff$^1$ $^4$\footnote[1]{These authors contributed equally to this work}, O C O Dahlsten$^2$ $^3$\footnotemark[1] $^1$\footnote{Corresponding author: oscar.dahlsten@physics.ox.ac.uk
}, R Renner$^1$ and V Vedral$^2$ $^3$}
\address{$^1$ Institute for Theoretical Physics, ETH Z\"urich, 8093 Zurich, Switzerland}

\address{$^2$ Atomic and Laser Physics, Clarendon Laboratory,
University of Oxford, Parks Road, Oxford OX13PU, United Kingdom}
\address{$ ^3$ Center for Quantum Technologies, National University of Singapore, Republic of Singapore}
\address{$^4$  Institute for Theoretical Physics, Universit\"at Ulm,  89069 Ulm, Germany}
\date{\today}

\begin{abstract}
The use of the von Neumann entropy in formulating the laws of thermodynamics has recently been challenged. It is associated with the average work whereas the work guaranteed to be extracted in any single run of an experiment is the more interesting quantity in general. We show that an expression that quantifies majorisation determines the optimal guaranteed work. We argue it should therefore be the central quantity of statistical mechanics, rather than the von Neumann entropy. In the limit of many identical and independent subsystems (asymptotic i.i.d) the von Neumann entropy expressions are recovered but in the non-equilbrium regime the optimal guaranteed work can be radically different to the optimal average. Moreover our measure of majorisation governs which evolutions can be realized via thermal interactions, whereas the nondecrease of the von Neumann entropy is not sufficiently restrictive. Our results are inspired by single-shot information theory.
\end{abstract}

\maketitle

\phantomsection{}\addcontentsline{toc}{subsection}{Introduction}
{\large S}tatistical mechanics is a corner-stone of modern physics. Many of its basic paradigms and mathematical methods were set in an era where the experimental abilities were much more limited and modern information theory not developed. Accordingly there is currently significant momentum in investigating the theory's foundations in the quantum and nano regimes, see e.g.~\cite{Jarzynski97,Lloyd97,GemmerM2004,AllahverdayanBN04,LindenPS09,ToyabeSUMS10,BrandaoHORS11,JenningsRHNM12} to mention but a few recent contributions. We here derive an alternative type of statistical mechanics from scratch. Our approach is inspired by recent results in information theory~\cite{Renner05,RennerW04} and builds on~\cite{DahlstenRRV11,delRioARDV11,Aberg11,HorodeckiO11}. We argue this approach is both significantly more general than the standard theory and addresses questions more relevant to modern experiments. 

It is more general in that we will not assume that the states of systems of interest are thermal, but rather just that there is a heat bath which when interacting with a system gradually takes that system towards a thermal state. Thus the system of interest is not necessarily in equilibrium. In fact we will allow for any probability distribution over energy levels. We do in particular not assume that the system under consideration is large or that internal correlations are negligible. This makes the approach significantly more relevant to modern experiments where small sub-systems can be addressed individually and in time-scales faster than the thermalisation time.

A key difference regarding which questions are addressed is that we focus not on averages of distributions as in standard statistical mechanics. Instead we ask,  for any given single run of an experiment, which threshold values are guaranteed to be exceeded, or more generally guaranteed to be exceeded up to some probability $\varepsilon$, not necessarily small. This is referred to as the {\em single-shot} paradigm, as opposed to the average paradigm. This distinction is important when distributions of quantities have a significant spread around the average, as is often the case for small systems. 

To see why we choose the single-shot paradigm, consider work extraction from a system. Work is a particularly important quantity, appearing in the first and second laws of thermodynamics and of crucial importance in the context of engines. As usually this is the case, let there be more than one way to extract work, e.g. different ways of changing the Hamiltonian of the system from which work is to be extracted. Say for concreteness that there are two different strategies: strategy 1 (S1) and strategy 2 (S2). Let S1 (S2) be associated with probability distributions over extracted work $w$ denoted by $p_1(W)\, (p_2(W))$. Suppose that the averages are equal, i.e. $\langle W\rangle_{S1}=\langle W\rangle_{S2}$, but $p_1(W)$ has no spread around the average, whereas $p_2(W)$ has a significant spread. Are these protocols now equally `good', as one might think by looking at the averages? This is certainly not the case in general. Suppose that there is a threshold for $W$, $W^*$ that needs to be exceeded. Such thresholds often exist as e.g. an activation energy for some process, or a band-gap to jump. Suppose moreover, to make this example interesting, that $\langle W\rangle_{S1}=\langle W\rangle_{S2}>W^*$. Now with S1 we will indeed achieve the threshold with probability 1, but with S2 the probability of exceeding the threshold can be arbitrarily small, as there may be a small probability of significantly exceeding the threshold but a large probability of just about failing to achieve it(!) 

If we instead of the average considered the work guaranteed up to probability $\varepsilon$, writing this as 
$W^\varepsilon_S$, where $S$ is the strategy, we see that $W^\varepsilon_{S1}=\langle W\rangle_{S1}>W^*\,\forall \varepsilon\in[0,1]$ whereas $W^\varepsilon_{S2}<W^*$ for all $\varepsilon$ smaller than whatever the probability of being below the threshold is. This example demonstrates that the single-shot quantity $W^\varepsilon_S$ does, in contrast to the average $\langle W\rangle_{S}$, make it clear that the two protocols perform very differently. We find this example most interesting if one considers different $\varepsilon$ and not only $\varepsilon=0$.

In this article we derive an expression concerning the optimal work $W^\varepsilon_S$ for various initial and final conditions. More specifically we consider a system with an initial Hamiltonian $H_i$ and density matrix $\rho$, and a given final Hamiltonian $H_f$ and density matrix $\sigma$. We only consider states $\rho$ and $\sigma$ diagonal in the energy basis. The experimenter may choose from a set of possible strategies $S$, which are arbitrary combinations of infinitessimal changes in the Hamiltonian, and interactions with a thermalising heat bath associated with temperature $T$. The work guaranteed to be exceeded with a failure probability up to $\varepsilon$ is then written as $W^\varepsilon_S(\rho,H_i\rightarrow \sigma, H_f)$.  As the main technical result of this paper we derive an expression for the optimal guaranteed work: $W^\varepsilon(\rho,H_i\rightarrow \sigma, H_f)=\max_S W^\varepsilon_S(\rho,H_i\rightarrow \sigma, H_f)$. We show it is given---if we suppress certain details to be specified later---by 
\begin{eqnarray*}
	W^\varepsilon(\rho,H_i\rightarrow \sigma, H_f)=kT\ln{{\bf\sf M(G(\rho,H_i)||G(\sigma,H_f))}},
\end{eqnarray*} 
where ${\bf\sf M(G(\rho,H_i)||G(\sigma,H_f))}$ is a measure of how much $\rho$ {\em majorises} $\sigma$. This measure of majorisation emerges from our considerations. A way of calculating the deterministic work for the zero-risk case in terms of diagrams has been given in~\cite{HorodeckiO11}. In this case the results coincide. In \cite{Aberg11} deterministic work is defined as work that will be extracted, no more no less, with probability $1$. $(\epsilon,\delta)$-deterministic work $W$ means the work will be in the interval  $[W-\delta,W+\delta]$ up to an error probability of $\epsilon$. Here in contrast we have considered guaranteed work. The difference between guaranteed and deterministic work can be most easily seen for $\epsilon$ and $\delta$ both being $0$. Then having non-zero deterministic work necessitates no spread in the distribution whereas guaranteed work means that the spread lies above the wanted threshold. One can get an upper bound for the deterministic work by the guaranteed work, but in general they are different objects.

In standard thermodynamics it is the free energy difference $\Delta F=\Delta (U-TS_{\mathrm{vN}})$ which determines the optimally extractable work, and moreover gives a criterion for which state transformations are realizable by interactions with a heat bath, via $\Delta F\leq 0$, as can be shown to be true for many reasonable models of thermalisation.  We argue however that ${\bf \sf M}$ should be the central quantity of statistical mechanics, by virtue of: (i) characterising optimal guaranteed work and (ii) providing a tight condition for which evolutions are consistent with our thermalisation model, as opposed to $\Delta F\leq 0$ which we show is necessary but not sufficient. These statements will be made precise later in this Letter. We call ${\bf \sf M}$ the {\em relative mixedness}. In certain limits ${\bf \sf M}$ reduces to differences in entropy of so-called single-shot entropies, which in turn in the asymptotic i.i.d. limit ($\rho^{\otimes n}$, $n\rightarrow \infty$) reduce to the von Neumann entropy $S_{\mathrm{vN}}$. But in general the relative mixedness of two states can be very different to the standard free energy difference $\Delta F$.   

We go on to make use of the results relating to the relative mixedness to formulate the laws of thermodynamics in the single-shot paradigm.  The first law is modified to be about guaranteed work rather than average work. Several versions of the second law are all modified in important ways. Apart from the already mentioned replacement of free energy decrease, the optimal extractable work turns out not to be a function of state but a relative notion between two states. The relative mixedness acts as a unifying feature which means that the new laws nevertheless have a simple structure.

As there are strong connections between the structure of entanglement theory and that of thermodynamics, we moreover consider the impact on entanglement theory, showing how to quantify entanglement as a relative notion between two states using relative mixedness rather than as a state function given by the von Neumann entropy.

\section*{RESULTS}\phantomsection{}\addcontentsline{toc}{section}{Results}
\phantomsection{}\addcontentsline{toc}{subsection}{Existing results}
{\bf Existing results.} We begin with briefly reviewing key results that we shall later recover as special cases of our expression. (This is thus not an exhaustive list of all previous results).  The results concern extracting work in the presence of a heat bath at temperature $T$ . The details of the models of work extraction in the different papers are not a priori identical, but we shall recover the same expressions within the model here.

In~\cite{DahlstenRRV11} an $n$-cylinder Szilard engine was considered and the following expression derived:
\begin{eqnarray}
\label{eq:firstpaper}
W^\varepsilon=\left(n-\smoothhmax \right)kT\ln2.
\end{eqnarray}
 Here $W^\varepsilon$ is the work that can be extracted in a process with maximum probability of failure $\varepsilon$. $\smoothhmax$ is the {\em smooth max entropy} of the density matrix representing a work-extracting agent's initial knowledge about the state of the working medium. This is defined as $\smoothhmax(\rho )=\log \left(\text{rank}^\varepsilon(\rho)\right)$, with $\text{rank}^\varepsilon(\rho)$ the number of non-zero eigenvalues minimised over all states within $\varepsilon$ trace distance of $\rho$. (Actually there is an alternative definition as well but they are both known to coincide up to an additive $\log\frac{1}{\varepsilon}$ term, so for simplicity  we focus on one definition here.) $T$ is as mentioned above the temperature of the heat bath, and $k$ Boltzmann's constant.  $\smoothhmax (\rho)$ reduces to the von Neumann entropy in the  in the i.i.d. limit, i.e., when $\rho=\tau^{\otimes n}$, $n\rightarrow \infty$ and $\varepsilon \rightarrow 0$. Physically this corresponds to systems composed of very large numbers of identical and uncorrelated subsystems.

A key result obtained independently in the more recent papers~\cite{Aberg11,HorodeckiO11} is that given an initial state $\rho$ and a final thermal state $\rho_T$ over the same energy levels, the work that can be extracted given access to a heat bath of temperature $T$, and with up to $\varepsilon$ failure probability is:
\begin{equation}
\label{eq:AbergOppenheim}
W^\varepsilon =kT\ln (2) D_0^\varepsilon (\rho||\rho_T), 
\end{equation}
where $D_0^\varepsilon(\rho||\rho_T)$ is the $\varepsilon$-smooth relative entropy of order 0 (see~\cite{Datta09}). In~\cite{Aberg11} $\rho$ is taken to be diagonal in the energy eigenbasis and in the a priori distinct set-up in \cite{HorodeckiO11} the state if not already diagonal in the energy eigenbasis may be replaced by the corresponding diagonal (decohered) state without changing the expression for the extractable work (in \cite{HorodeckiO11} also the probabilistic work for the opposite process was given and the deterministic work for arbitrary (initially energy-diagonal) state conversion). The RHS of Eq.~\ref{eq:AbergOppenheim} reduces to $W=kT\ln (2) D(\rho||\rho_T)$ for the standard relative entropy in the  asymptotic i.i.d.\ (von Neumann entropy) regime. That latter expression is well-established, see e.g.~\cite{Donald87}. Eq.~\ref{eq:AbergOppenheim} reduces to Eq.~\ref{eq:firstpaper} in the case of degenerate energy levels, as shown in~\cite{Aberg11}. In this present article we impose no restrictions on the energy spectra or occupation probabilities, they may take arbitrary form independently of one another.

{\bf The model for work extraction.}\phantomsection{}\addcontentsline{toc}{subsection}{The model for work extraction}
Our work extraction model can be thought of as a game with simple but minimal rules. (It will nevertheless not be trivial to analyse as there is a multitude of different strategies one may choose for the task of work extraction given the initial and final conditions.) The model is inspired by~\cite{AlickiHHH04} and very similar to that used in~\cite{Aberg11}. There are three systems and an  implicit work-extraction agent representing the external experimenter who can control certain parameters. 
As depicted in Figure 1(b) one system is the working medium, another is a heat bath of temperature $T$, and the last is the work reservoir. 
\begin{figure}[htb]
 \centering \includegraphics[width=0.8 \linewidth]{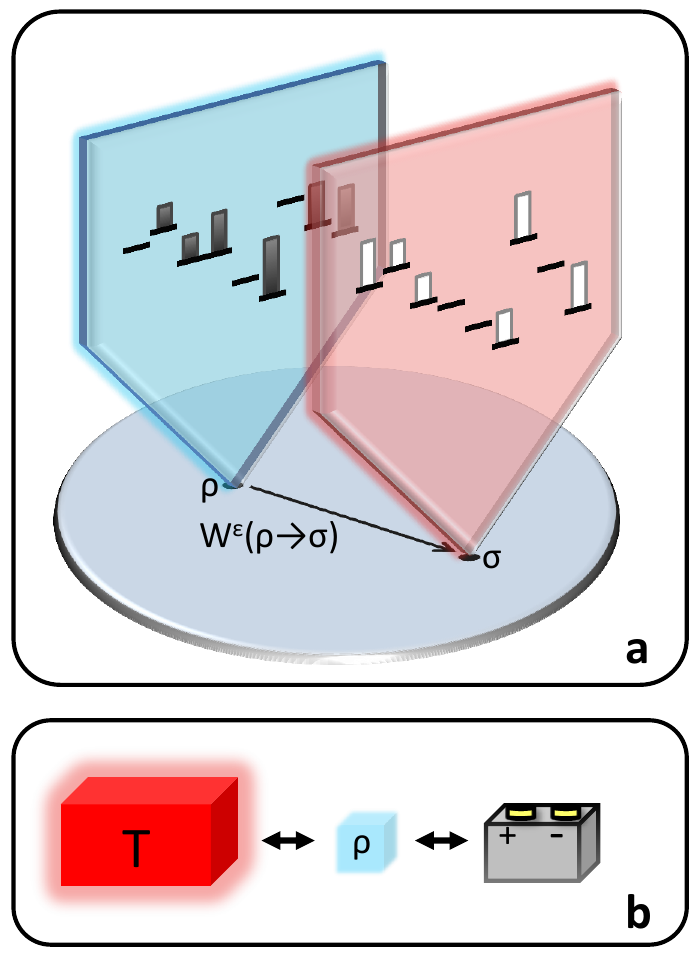}
 \caption{(a) Abstract depiction of the set of states, including the initial state $\rho$ and final state $\sigma$. Each state is associated with a set of energy levels and occupation probabilities. We derive an expression for how much work one can optimally extract with a maximum probability of failure of $\varepsilon$ for any such $\rho$ and $\sigma$. This quantity is called $W^\varepsilon (\rho, H_i \rightarrow \sigma H_f)$. Only in certain limits does it reduce to the standard free energy difference. (b) The generic setup we are considering involves three systems: a heat bath at temperature $T$, a working medium system associated with some initial state $\rho$, and a work reservoir system. One may for instance couple the system to the heat bath and the work reservoir alternately and thereby transfer energy from the heat bath to the work reservoir, at the cost of randomising the working medium system.}
 \label{fig:Levels}
\end{figure}

The initial and final energy spectra $\{E\}$ and $\{F\}$ of the working medium are arbitrary. The initial and final density matrices of the working medium, $\rho$ and $\sigma$, are not assumed to be thermal, they can take any form as long as they are diagonal in the energy basis. This is because we assume, as is non-trivial but standard, that the decoherence time is much faster than the thermalisation time~\cite{AlickiHHH04}. These initial and final conditions are depicted in Figure 1 (a).

One of the two elementary processes the agent can compose to build the full strategy is {\em thermalisation} of the working medium. With thermalisation we mean gradual thermalisation, i.e. we do not mean that the state after the thermalisation process is thermal, but merely that it is nearer to the thermal state than before the process. This is modelled by the probabilities of the energy-levels being transformed by a matrix from the set of stochastic matrices which have the thermal state corresponding to temperature T as the fixed state. This process does not change the Hamiltonian of the working medium. There is by definition no work gain or cost from this process.

The second elementary process is {\em changing the Hamiltonian} of the system through shifting an energy level by some chosen amount $\delta E$. One may for example think of moving a magnet or a charge closer to the system as a way of shifting the levels. This may involve a work gain/cost, because if the system occupies the particular energy eigenstate(s) that gets shifted by $\delta E$ this counts as work done on the system. If the system does not occupy the eigenstate that gets shifted there is no work cost. Importantly, we enforce energy conservation by changing the energy of the work reservoir by the same amount ($\delta E$ if the shifted level is occupied, 0 otherwise). As the system's state is in general not fully known, each Hamiltonian-changing step induces a probability distribution over energy transferred to the work reservoir. For example, if level $i$ only is raised by $\delta E_i$ and the others are stationary the probability of the work reservoir losing $\delta E_i$ of energy is $p_i$, the probability of occupation of level $i$, and the probability of the work reservoir not changing its energy is $1-p_i$. Finally, it is assumed that the experimenter implements Hamiltonian changes without affecting which energy level is occupied. This is justified by the adiabatic theorem which says that it is possible to avoid hopping between levels by shifting them sufficiently slowly. In general this will not be the case but we are interested in fundamental limits and allow the experimenter this level of control.

The agent's choice of how to combine the elementary processes is called its strategy $\mathcal{S}$. Any given strategy will in general generate an associated probability distribution over work costs/gains, i.e. of total energy transfers from/to the work reservoir. When strategy $\mathcal{S}$ is guaranteed to transfer a certain amount of energy up to probability $\varepsilon$ we call this the ($\varepsilon$-) {\em guaranteed work} and denote it by $W_\mathcal{S}^\varepsilon$. In a given realization the strategy $\mathcal{S}$ may then (with a probability bounded by $\varepsilon$) fail to achieve $W_\mathcal{S}^\varepsilon$, otherwise we say the work extraction was successful (in achieving $W_\mathcal{S}^\varepsilon$).

{\bf Relative mixedness gives the optimal guaranteed work.} \phantomsection{}\addcontentsline{toc}{subsection}{Relative mixedness gives the optimal guaranteed work}
In this section we focus on deriving the {\em optimal} amount of work that can be guaranteed to be extracted (up to failure probability $\varepsilon$), writing this as 
$W^\varepsilon(\rho\rightarrow \sigma):=\max_{\mathcal{S}} W_\mathcal{S}^\varepsilon(\rho\rightarrow \sigma)$. The bound we get from these considerations is one of the main results of this paper.

We will show that this is determined by a measure of how much more mixed one state $\rho$ is than another, $\sigma$. We call this the relative mixedness and write it as ${\bf \sf M}(\rho||\sigma)$. As we consider states diagonal in the energy basis, the only relevant information about a state will be its spectrum. For our purposes it will therefore be enough to define the relative mixedness for probability distributions.
\begin{defin}
Consider two probability distributions $\lambda(x)$ and $\mu(x)$ defined over $x\in \mathbbm {R}^{(\geq 0)}$. Let $\lambda(x)\!\downarrow$ and 
$\mu(x)\!\downarrow$ denote these distributions after a (measure-preserving) rearrangement so that they are in descending order. Let the cumulative distribution function associated with a function $\gamma$ be denoted as $\mathcal{F}_\gamma (x):=\int_0^{x} dx' \gamma (x')$. Then the relative mixedness of $\lambda (x)$ and $\mu (x)$ is defined as 
\begin{eqnarray*}
{{\bf \sf M}}(\lambda||\mu):=\max\, m \,\mathrm{s.t.}\, \mathcal{F}_{\lambda\downarrow} \left(\frac{x}{m}\right)\geq \mathcal{F}_{\mu\downarrow} (x)\,\,\,\forall x, 
\end{eqnarray*}
where $m\in \mathbbm {R}$. In words: the relative mixedness of $\lambda$ and $\mu$ is the maximal amount by which one can stretch $\lambda\!\downarrow$ under the condition that its integral upper bounds the integral of $\mu\!\downarrow$ at all points. 
\end{defin}

By the definition of majorisation, if and only if ${\bf \sf M}\geq 1$ does (the spectrum of) $\rho$ majorise $\sigma$, $\rho \succ \sigma$. The actual number ${{\bf \sf M}}$ can thus be viewed as putting a number to how much $\rho$ majorises $\sigma$. 

We shall make use of a powerful insight from~\cite{RuchM76,Ruch75,Mead77}, who were---to our knowledge---the first to note that the decreasing of the von Neumann entropy might not be a sufficient criterion for characterizing thermodynamical processes and they proposed a criterion based on majorization; this insight is also used in~\cite{HorodeckiO11} where they showed this criterion to be necessary and sufficient for a class of quantum operations introduced in \cite{janzing2000thermodynamic}. A relation between majorization and thermodynamics has also been noted in~\cite{janzing2000thermodynamic,janzing2006computer,horodecki2003reversible,AllahverdayanBN04}. The insight bridges a particular gap between information theory and statistical mechanics: the fact that the former does not care about energy. In information theory, the Shannon/von Neumann entropy of a state, $-\sum_i \lambda_i\log \lambda_i$ is independent of the energies of the states involved. As the extractable work should depend on the energy levels involved it follows that it is not expected to be uniquely determined by an entropy.

A key way in which energy enters into statistical mechanics is that in a Gibbs state the probability of any given energy eigenstate with energy $E$ is given by $p_T(E)=\exp(-\frac{E}{kT})/Z$, where $Z$ is the partition function. The insight 
we adapt from~\cite{RuchM76,Ruch75,Mead77} is that we can take this bias into account by what essentially amounts to rescaling the density matrix's eigenvalue distribution by $p_T(E)$.  After the rescaling the occupation probabilities will turn out to uniquely determine our expression for the extractable work.  More specifically, we shall be employing an operation we term {\em Gibbs-rescaling} to the eigenvalue spectrum. Consider states with discrete spectra $\{\lambda_i\}$. We firstly transform the spectrum into the associated step-function. Then we take each block, rescale its height as $\lambda_i\mapsto \lambda_i/ \exp\left(-\frac{ E_i}{kT}\right)$, and its width $l=1\mapsto \exp\left(-\frac{ E_i}{kT}\right)$ such that the area of the new block is $\lambda_i$ as before.
We write this operation applied to a density matrix $\rho$ as  $G^T(\rho)$,  or $G^{(T,H)}(\rho)$ to make the  dependence on the Hamiltonian $H$ explicit.

A way of understanding the Gibbs-rescaling is to think of it as splitting events into finer events in such a way that a Gibbs state becomes a uniform distribution, i.e. higher probability events get split into more fine events than those with lower probability. This fine-graining may even be thought of as physically associated with the number of joint states on the system and the heat-bath, with high probability states associated with more joint states on the system plus environment than low probability states.

Having defined the relative mixedness ${{\bf\sf M}}(.||.)$ and Gibbs-rescaling $G^T(.)$ we can now give the main result. This result states that given that the chosen strategy must take an initial state $\rho$ to a final state $\sigma$ and the initial Hamiltonian $H_i$ to $H_f$, the optimal work that can be guaranteed up to probability $\varepsilon$ to be extracted, $W^\varepsilon (\rho, H_i\rightarrow \sigma, H_f)$, is given by the relative mixedness of the Gibbs-rescaled states.
\begin{thm}\label{th:main}
In the work extraction game defined above, consider an initial density matrix $\rho=\sum_i \lambda_i\ket{e_i}\bra{e_i}$ and final density matrix $\sigma=\sum_j \nu_j\ket{f_j}\bra{f_j}$ with $\{\ket{e_i}\}$, $\{\ket{f_j}\}$ the respective energy eigenstates of $H_i$ and $H_f$. Then for any strategy $\mathcal{S}$, $W^\varepsilon_\mathcal{S} (\rho, H_i \!\rightarrow \! \sigma, H_f)\leq W^\varepsilon (\rho, H_i \!\rightarrow \! \sigma, H_f)$, where 
\begin{eqnarray*} W^\varepsilon (\rho, H_i\rightarrow \sigma, H_f) = kT\ln \left({\bf \sf M}\left(\frac{G^{(T,H_i)}(\rho)}{1-\varepsilon}||G^{(T,H_f)}(\sigma)\right)\right). 
\end{eqnarray*}
 Furthermore an explicit strategy we propose  always saturates this bound, provided that the agent can access a single extra two-level system (the  catalyst system) which is fixed to be in one of its energy eigenstates with $\ket{\xi}\bra{\xi}$ both initially and finally, i.e. $\rho=...\otimes \ket{\xi}\bra{\xi}$ and $\sigma=... \otimes \ket{\xi}\bra{\xi}$ with the same initial and final Hamiltonian on the catalyst.
\end{thm}
\noindent Here we give the main arguments for the theorem, a full proof is given in the appendix.

The first claim concerning the relative mixedness expression on the RHS being an upper bound is arrived at from the following line of reasoning. There are two elementary processes and each have the effect of making the state more (or at least not less) mixed according to the relative mixedness measure. Work extraction, by definition, only occurs during a change of the Hamiltonian. In this case the optimal is to only move occupied levels, for which the energy gain is given precisely by $kT\ln \left({\bf \sf M}\left(G^{(T,H_i)}(\rho)||G^{(T,H_f)}(\sigma)\right)\right)$ (see the Appendix).

The second claim concerns a universal strategy that we formulate. To illustrate it we now describe a very simple instance: the case of Landauer's bit reset with certainty ($\varepsilon =0$). Here there is a qubit associated with two energy levels $E_1$ and $E_2$ with $H=E_1\ket{1}\bra{1}+E_2\ket{2}\bra{2}$. We demand $E_1=E_2=0$ at the beginning and at the end, $\rho_i=1/2\ket{1}\bra{1}+1/2\ket{2}\bra{2}$, $\rho_f=\ket{1}\bra{1}$. The change in the state is why this is called `bit reset' (it is often called, ambiguously, bit {\em erasure}). 
Our universal strategy reduces in this simple case to the following: (i) lift both energy levels up by $\Delta E=kT\ln2$. This costs $kT\ln2$ of work with probability 1, (ii) split the levels quasistatically and isothermally such that $E_1\rightarrow 0$ and $E_2\rightarrow \infty$. In this step the Gibbs rescaled distributions are not changed, they are all `Gibbs-equivalent'.  This level splitting actually costs 0 work with probability 1. This can be seen by making use of the powerful Mc Diarmid's inequality~\cite{McDiarmid}. The key step is to argue that {\em lifting an individual level quasistatically and isothermally gives a probability distribution over work that has arbitrarily small spread around the average}. This can be shown by considering a series of discrete lifts of the same size $\Delta E$ with the work cost a random variable for each one. The work cost of one step is independent of that of any other step, because the state is by assumption thermal before each lift (as follows from the process being isothermal and quasistatic). Mc Diarmid's inequality states: {\em Let $X_1$, $X_2$...$X_n$ be independent random variables all taking values in the same set. Call the realised value of $X_i$ $x_i$. Further, let $f(x_1,x_2...)$ be a real-valued function with the property that changing one of the $x_i$ only can at most change $f$ by $c_i$. Then for all $\epsilon >0$, $Pr(|f-\mathbbm{E}(f)|\geq\epsilon)\leq exp\left(\frac{-2\epsilon^2}{\sum_{i=1}^{n}{c_i}^2}\right)$}. Letting the random variables be the energy transferred to the work reservoir in each step, and $f$ be the total energy transferred, one can with a little effort show that there is indeed no deviation from the mean. We note that~\cite{Aberg11} contains alternative techniques for showing concentration around the mean and that, moreover, in the a priori different setting used in~\cite{HorodeckiO11} what amounts to Gibbs-equivalent transforms at zero work cost are also possible. (iii) Finally the system is decoupled from the heat bath and the empty level 2 is moved down to $E_2=0$ (without any work cost/gain), completing the process.

It is an interesting question how one could generalize our theorem. In the more general case of off-diagonal terms in the energy eigenbasis, one expects entanglement to arise between the work reservoir and the working medium system during the work extraction steps and it is subtle how to define work as the energy of the work-reservoir is  not well-defined. One analytically clean approach is to allow decoherence in the systems energy basis a free operation for the experimenter, as in~\cite{HorodeckiO11}. Then the corresponding decohered state can be inserted into the above expression, implying that the relative mixedness of the decohered state relative to the final state gives a lower bound on the extractable work in the case of off-diagonal terms.

Several existing results are recovered as special cases of Theorem 1. Eq.~\ref{eq:AbergOppenheim} above (from~\cite{Aberg11,HorodeckiO11}) and accordingly Eq.~\ref{eq:firstpaper} (from~\cite{DahlstenRRV11}) are special cases of our main result---see the supplementary information (we reiterate that \cite{HorodeckiO11} uses an a priori distinct set-up and note that the work referred to there is 'deterministic' work associated with deterministic energy transfers to a constantly pure work reservoir and is a priori distinct from the 'guaranteed' work considered here). Eq.~\ref{eq:AbergOppenheim} corresponds to the case where the final state $\rho_T$ is demanded to have the same eigenspectrum and be a Gibbs state ($\rho_T=\sum p_T(E_i)\ket{e_i}\bra{e_i})$).   If the initial and final states are both thermal with associated partition functions $Z_i$ and $Z_f$ the expression reduces to $kT\ln\frac{Z_f}{Z_i}$ (as is consistent with~\cite{HorodeckiO11,Aberg11}). To our knowledge our paper is the first to give an expression for the optimal work (guaranteed) to be extractable from a general
energy-diagonal state to another, with changing Hamiltonians and
possibly non-zero risk. In~\cite{HorodeckiO11} they also consider how
one can calculate the work that can be extracted with arbitrary initial
and final Hamiltonian, with either the initial or the final state being
thermal and showing how the thermo-majorization condition describes the
zero-risk, deterministic work for arbitrary energy-diagonal initial and final states.

{\bf Generalised Laws of Thermodynamics in terms of relative mixedness.}\phantomsection{}\addcontentsline{toc}{subsection}{Generalised Laws of Thermodynamics in terms of relative mixedness}
As the laws of thermodynamics are centered around the notions of energy, work and entropy, these laws should according to our argument also be formulated in terms of relative mixedness for them to be more suitable beyond the asymptotic i.i.d.\ regime.

{\em $0^{\text{th}}$  law:}
The $0^{\text{th}}$ law can be stated as: {\em  
There exists for every thermodynamic system in equilibrium a property called temperature. Equality of temperature is a necessary and sufficient condition for thermal equilibrium.} This also holds after our generalisation. In particular we are still assuming heat baths that take the working medium closer to a Gibbs thermal state upon interaction. 

{\em  First law:}
The first law can be viewed as both asserting the conservation of energy as well as stating that it can be divided into two parts, work and heat, which are normally defined in the description accompanying the first law equation: $dU=dQ-dW$. $U=\text{tr}(\rho H)$ is the expected internal energy of the working medium with Hamiltonian $H$, $Q$ is `heat' and $W$ `work'.
The associated physical setting is that there is a working medium system which can either exchange energy with another system in a thermal state dubbed a heat bath, or with a work reservoir system normally implicitly assumed to be in some energy eigenstate of its own Hamiltonian. Exchanges of energy with the heat bath are dubbed heat and those with the work reservoir work. This essentially carries over into our approach but with some important subtleties. We assume energy conservation (in every single extraction), as well as allowing for interactions with a heat bath and a work reservoir. Thus the following is respected when the actual energy of the system $E_{\text{sys}}$ changes: $dE_{\text{sys}}=-dE_{\text{bath}}-dE_{\text{reservoir}}$. 
We, more subtly, break  $dE_{\text{reservoir}}$ into two parts: $dE_{\text{reservoir}}=dW^{\varepsilon}_\mathcal{S}+dE_{\text{extra}}$. There is the energy transfer which is predictable (up to $\varepsilon$ probability of failure) in that it corresponds to $dW^{\varepsilon}_\mathcal{S}(\rho \rightarrow \sigma)$ for the infinitessimal state change $\rho \rightarrow \sigma$ using strategy $\mathcal{S}$. We view anything beyond that, given by $dE_{\text{extra}}$, as heat (even though this energy flows into the work reservoir at first). The idea behind this is that only predicted energy transfer should count as work. One may for example imagine buckets lifting water out of a mine up to a certain height (or as a quantum example an electron excited into the conduction band). The height at which the buckets are tipped into a reservoir is specified in advance. If they go higher than this, the extra potential energy will be transferred to other degrees of freedom associated with the reservoir system, e.g.\ into movement of the water (or heating of the semi-conductor). We may express the following first law for this approach:\\
{\em In any given extraction, with probability p$\geq 1-\varepsilon$}
\begin{equation}
dE_{\text{sys}}=-dE_{\text{bath}}-dW^{\varepsilon}_\mathcal{S}-dE_{\text{extra}}\equiv dQ-dW^{\varepsilon}_\mathcal{S},
\end{equation}

{\em Second law:}
Consider next the so-called Kelvin statement of the second law: {\em No process is possible in which the sole result is the absorption of heat from a reservoir and its complete conversion into work.} This does not say anything about processes with a non-zero probability of failure. We show in the appendix that for given states of the working medium A and B respectively, ${W^\varepsilon(A\rightarrow B)}+{W^\varepsilon(B \rightarrow A)}\leq {W^{2\varepsilon}(A \rightarrow A)}.$ We call this the {\em triangle inequality}. It implies together with the main theorem that all strategies in our game respect the following generalisation of Kelvin's second law:
\begin{equation}
\sum_{i=0}^{m-1} W_{\mathcal{S}_i}^\varepsilon \left( A_i\rightarrow A_{i+1} \right )  \leq  W^{m\varepsilon}(A \rightarrow A) \text{ if }A_m=A_1, 
\end{equation}
 where $\mathcal{S}_i$ is the choice of strategy in the i-th step of the cycle. Note that $W^0(A \rightarrow A)=0$ (see main theorem), implying that deterministically no work can be extracted in such a cycle. One may still gain work in a single cycle at the cost of having $\varepsilon >0$ for one or more of the steps. 
 
The second law is also closely related to entropy increasing with time and one may wonder what the corresponding generalisation of the statement is. A particular standard expression is that
\begin{equation}
\label{eq:entropyincrease}
\Delta \left(S-\beta \langle E \rangle \right) \geq 0,
\end{equation}
 where $S$ and $\langle E \rangle$ are the von Neumann entropy and expected energy of a system interacting with a heat-bath with inverse temperature $\beta$. ($\Delta$ indicates the change in these values during the interaction.)  
This actually still holds in our more general model; we show this in the supplementary information. However, crucially, Eq.\ref{eq:entropyincrease} is not {\em sufficient} to guarantee that an evolution $\rho \rightarrow \rho'$ is realizable through an interaction with a heat bath. Instead it should be replaced by the statement that a state change $\rho \rightarrow \rho'$ due to a thermalisation with a heat-bath at temperature $T$ is possible if and only if
\begin{equation}
\label{eq:altentropyincrease}
W^0(\rho, H \rightarrow \rho',H)\geq 0. 
\end{equation}
This is significant as there are processes that respect Eq.~\ref{eq:entropyincrease} but violate Eq.~\ref{eq:altentropyincrease}. A simple example is to consider degenerate energy levels, so that $\Delta \langle E \rangle=0$, and  three levels with probabilities $(1/2\,\,\,1/2\,\,\,0)^T \rightarrow (2/3\,\,\,1/6\,\,\,1/6)^T$. Then $\Delta S\approx 0.25$ but $W^0$ is negative. Strikingly, such evolutions enable the deterministic violation of Kelvin's second law (if the evolution is stochastic---see  supplementary information).

The inequivalence of entropy and majorisation has been noted previously in the context of the second law~\cite{RuchM76,Ruch75}. Presumably this has not received more attention to date because in the von Neumann regime this inequivalence disappears. More precisely, if we consider a tensor product of $n$ identical states each with von Neumann entropy $S$ and let $n\rightarrow \infty$, then with asymptotically small error we may approximate the spectrum as a uniform probability distribution on the set [0,$2^{-nS}$]. For such distributions the partial orders induced by $S$ and majorisation respectively coincide.

We finally make a remark on the mathematical structure that emerges here. We note that the extractable work is no longer a function of state, whereas in standard statistical mechanics the optimal extractable work between two states is given by $\delta F_{12}=F_2-F_1$ with $F=U-TS$. Here one must consider the extractable work between two states, assigning a free energy as a state function is not possible. It is not even optimal to go via thermal states in general, i.e., there exist cases where $W^{\varepsilon}(\rho\rightarrow \sigma_T)+W^{\varepsilon}(\sigma_T\rightarrow \sigma)< W^{\varepsilon}(\rho\rightarrow \sigma)$. 

Very recently it has been argued that our generalized formulation of the second law should be replaced with a slightly weaker condition~\cite{brandao2013second}. As this appeared after our paper on the arXiv we defer discussion of the relation between these papers to later work. In between this paper appearing on the arXiv and being published several other related, interesting and relevant contributions have appeared, including~\cite{2012arXiv1211.1037F,gour2013resource,lostaglio_2014}.

{\bf Relative mixedness as entanglement measure.} \phantomsection{}\addcontentsline{toc}{subsection}{Relative mixedness as entanglement measure}The structures of entanglement theory and thermodynamics are closely linked and often considered in connection with one another, see e.g.~\cite{PlenioV98}. We now consider the implications of our results for entanglement theory. This section demonstrates that relative mixedness is natural to use in quantum information theory also outside of thermodynamical contexts.
It is customary to quantify entanglement via entropy, in particular the standard measure of entanglement of a bipartite pure state $\rho_{AB}$ is the von Neumann entropy of the reduced state, $S(\rho_A)=S(\rho_B)$. This is called the {\em entanglement entropy}. However there is good reason to think that, as we have argued in the case of statistical mechanics, entropy should be replaced with relative mixedness also in the context of entanglement theory. We propose a notion of relative entanglement between two states $\rho_{AB}$ and $\sigma_{AB}$ which is quantified as the (logarithmic) relative mixedness of the reduced states: $\log_2{{\bf \sf M}}(\sigma_A || \rho_{A})$.

This has the following appealing operational meaning. Consider the Bell state $\ket{\phi^+}_{AB}:=\frac{1}{\sqrt{2}}(\ket{0}_A\ket{0}_B+\ket{1}_A\ket{1}_B)$. Consider two arbitrary finite-dimensional bipartite pure states $\rho_{AB}$ and $\sigma_{AB}$. How many such Bell pairs are needed to transform $\rho_{AB}$ to $\sigma_{AB}$? More specifically, for what condition on $n_i$ and $n_f$ is the LOCC (Local Operations and Classical Communication) conversion $\rho_{AB}\otimes (\ket{\phi^+}\bra{\phi^+}_{AB})^{\otimes n_i} \rightarrow \sigma_{AB}\otimes (\ket{\phi^+}\bra{\phi^+}_{AB})^{\otimes n_f}$ possible? The answer is that this is possible iff 
\begin{eqnarray*}
n_f-n_i \leq \log_2 {{\bf\sf M(\sigma_A||\rho_A)}}.
\end{eqnarray*}
(We prove this in the Appendix, making heavy use of the results of~\cite{Nielsen99} and the setting of~\cite{BuscemiD11}).

As a very simple example, for $\ket{\psi}=\alpha\ket{00}+\beta\ket{11}$  (and $\alpha \geq \beta$)  and $\ket{\phi}=\ket{\phi^+}_{AB}$ one finds $\log_2 {{\bf\sf M}}(Tr_B\ket{\psi}\bra{\psi}||Tr_B\ket{\phi}\bra{\phi})=\log_2 \left(2\|\alpha\|^2\right)$. This takes values between 1  
($\alpha =1$)  
and 0 ($\alpha=\frac{1}{\sqrt{2}}$).

\ack
We gratefully acknowledge discussions with J. Aaberg, J. Baez, B. Fong, P. Perinotti, J. Vicary, M. Horodecki and J. Oppenheim, as well as support from the National Research Foundation (Singapore), the Ministry of Education (Singapore), the Swiss National Science Foundation (grant No. 200020-135048), the Swiss National Centre of Competence in Research QSIT, the European COST Action on Quantum Thermodynamics, the EU Integrating Project SIQS, the European Research Council (grant No. 258932) and the EU collaborative project TherMiQ (Grant agreement No. 618074). DE is grateful for the hospitality of the Clarendon Laboratory, University of Oxford, whilst undertaking part of this work. This research was partly carried out in connection with DE's Master's thesis at ETH Zurich.

\section*{References}
\phantomsection{}\addcontentsline{toc}{section}{Bibliography}
\bibliography{SIWErefs}
\bibliographystyle{jphysicsB}
\phantomsection{}\addcontentsline{toc}{part}{Appendix}
\appendix
\setcounter{section}{0}
\section*{Appendix}
The appendix is structured in the following manner. A: The work extraction game, B: Upper bounding the extractable work, C: The universal strategy that achieves the bound, D: Implications for the second law, and E-G: Properties of the relative mixedness.

\section{The work extraction game}\label{sec:game}
In this section we define the setting more carefully, and derive certain lemmas which shall be needed for the later sections. 

\subsection{Combining energy and occupation probabilities into one distribution: Gibbs rescaling}
There are two central pieces of information about the system, the energy eigenvalues, and their occupation probabilities. 
We shall find it very powerful to follow~\cite{RuchM76,Ruch75,Mead77} and combine them into one object, the Gibbs-rescaled distribution. 

Consider states with discrete spectra $\{\lambda_i\}$. We firstly transform the spectrum into the associated step-function. Then we take each block, rescale its height as $\lambda_i\mapsto \lambda_i/ \exp\left(-\frac{ E_i}{kT}\right)$, and its width $l=1\mapsto \exp\left(-\frac{ E_i}{kT}\right)$ such that the area of the new block is $\lambda_i$ as before.
We write this operation applied to a density matrix $\rho$ as  $G^T(\rho)$. It is depicted in figure~\ref{fig:gibbs}.  Gibbs-rescaling can, as will prove useful in later proofs, be written out in the language of continuous functions in the following manner:
\begin{defin}[Gibbs rescaling]\label{def:gibbsr}
Consider a density matrix $\rho=\sum_{i=1}^n \lambda_i\ket{e_i}\bra{e_i}$ with eigenvalues $\{\lambda_i\}_{i=1}^n$ and take the energy eigenstates of the system to be $\{\ket{e_i}\}_{i=1}^n$ with energies $\{E_i\}_{i=1}^n$ respectively. There is an associated step function for the spectrum, $\lambda(xn)=\lambda_{\lceil{xn}\rceil}$ where $x \in (0,1]$. Similarly there is an energy step function $E(xn)=E_{\lceil{xn}\rceil}$ where $x \in (0,1]$. The Gibbs rescaling associated with temperature $T$ combines $\lambda(xn)$ and $E(xn)$ to a new function $G^T(y)$   implicitly defined by  
\[G^T\left(\int\limits_0^x e^{-\frac{E_{\left\lceil zn \right\rceil}}{kT}}  \md z \right)
  =
    \frac{\lambda_{\left\lceil xn\right\rceil}}{e^{-\frac{E_{\left\lceil  xn \right\rceil}}{kT}}}.
 \] 
\end{defin}

It follows that $G^T (y)$ is defined on $(0,Z]$, with $Z=\sum_{j=1}^n \exp\left(-\frac{E_{j}}{kT}\right)$ the partition function. Moreover $G^T (y)$ is a probability distribution satisfying $\int_0^Z G^T(y)dy=1$. 
\begin{figure}
 \centering
 \includegraphics{./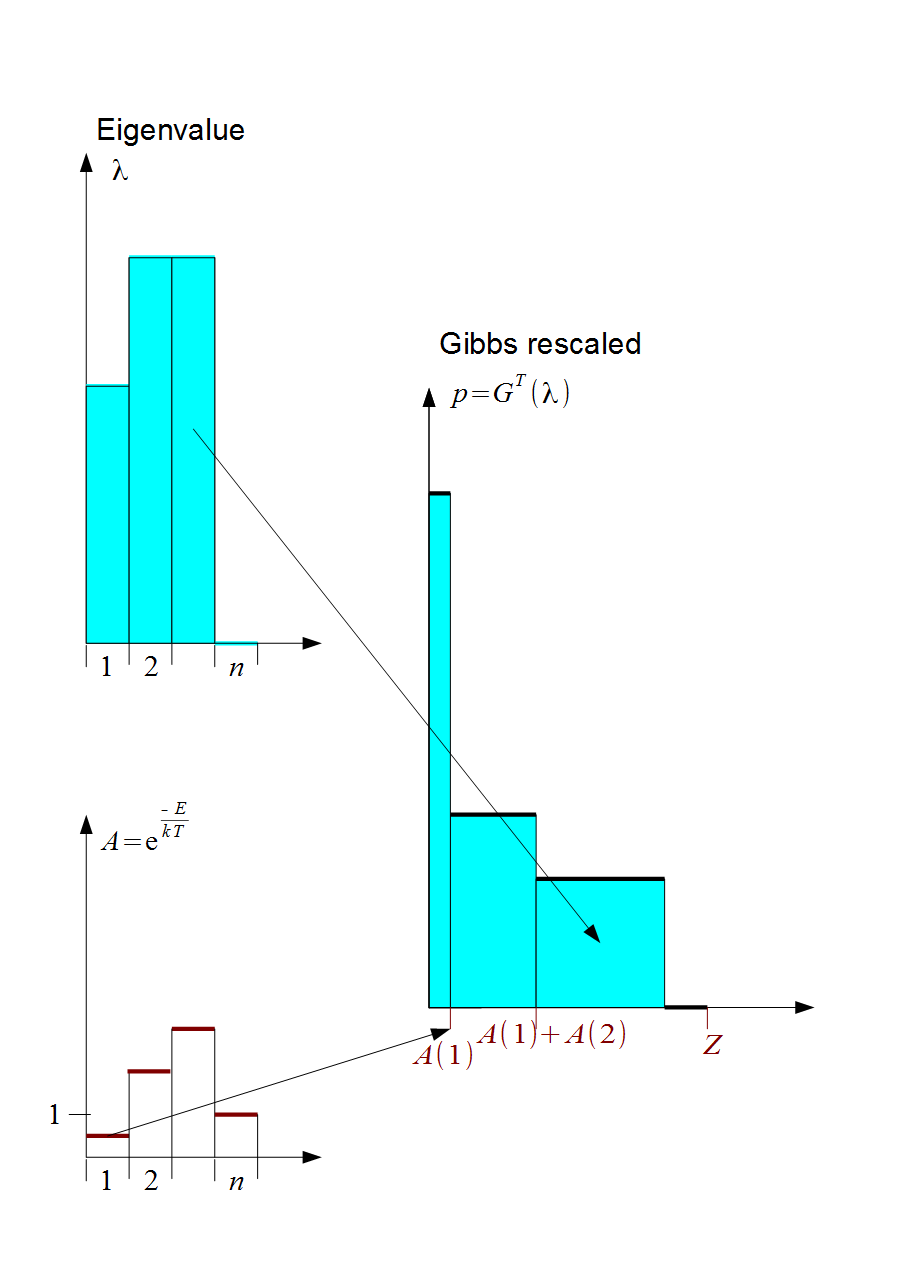}
 \caption{Gibbs rescaling: the width of each block $k$ corresponding to the level $k$ after rescaling is given by $A(k)=\exp(-E(k)/kT)$, while its height is $\lambda(k) / A(k)$ so that its area is $\lambda(k)$, where $\lambda(k)$ is the occupation probability of the level and $E(k)$ its energy eigenvalue.}
 \label{fig:gibbs}
\end{figure}

\subsection{Thermalisations}
We now turn to how interactions with the heat bath, thermalizations, act on the state of the system. Roughly speaking these take the density matrix closer to the associated Gibbs state, similar statements can be found in~\cite{RuchM76,Ruch75,Mead77} (especially see section 4 of~\cite{RuchM76}, where also a different argument is given for the result below concerning thermalizations). As already mentioned the thermalization is taken to only change occupation probabilities and not energy eigenvalues. We take the thermalisation to act as a stochastic process on the energy eigenstates, in that the probability of occupying a given energy state, $P(i)$, becomes $P'(i)=\sum_{j}P(j\rightarrow i)P(j)$ where the summation is over all eigenstates, $P(j\rightarrow i)$ is a transition probability, and $P(j)$ an occupation probability (before the interaction with the heat bath). This can equivalently be written as $\vec{P'}=B\vec{P}$ where $B$ is a stochastic matrix (entries are probabilities and columns sum to 1). 

Not every stochastic matrix $B$ is allowed however. The Gibbs state (associated with temperature $T$) is taken to be invariant under a thermalisation. Consider the implications firstly for the fully degenerate case of all energies being the same. In this case the Gibbs state is the uniform distribution. The only stochastic matrices that leave the uniform distribution invariant are bistochastic ones (rows also sum to 1). Thus in the fully degenerate case $B$ must be bistochastic. We see no reason to impose further restrictions, so any such $B$ is allowed. 

Consider secondly the non-degenerate case. Here it is again convenient to use the Gibbs rescaled distribution. Note that the Gibbs state becomes uniform after the Gibbs rescaling. Thus one may hope that a thermalisation, i.e.\ a Gibbs state preserving stochastic matrix on the occupation probabilities, acts as a {\em bi}-stochastic matrix on the Gibbs-rescaled distribution, and we now show that is indeed the case.

Before considering the general case, we look at a simple example of a two-level system.\\
Let B be the stochastic matrix\footnote{stochastic matrices have entries in $[0,1]$ with columns summing to $1$, therefore they map probability vectors to probability vectors.} defined by the transition-probabilities, i.e:
\[\begin{pmatrix}
P(1) \\
P(2) \\
\end{pmatrix} \rightarrow \begin{pmatrix}
P'(1) \\
P'(2) \\
\end{pmatrix} =
\begin{pmatrix}
p_{(1\rightarrow 1)} & p_{(2\rightarrow 1)} \\
p_{(1\rightarrow 2)} & p_{(2\rightarrow 2)} \\
\end{pmatrix}
\begin{pmatrix}
P(1) \\
P(2) \\
\end{pmatrix}
= B \begin{pmatrix}
P(1) \\
P(2) \\
\end{pmatrix}\]

The stochastic matrix should leave the thermal state invariant: 
\[\begin{pmatrix}
e(1)/Z \\
e(2)/Z \\
\end{pmatrix} \rightarrow \begin{pmatrix}
e'(1)/Z \\
e'(2)/Z \\
\end{pmatrix} =
B \begin{pmatrix}
e(1)/Z \\
e(2)/Z \\
\end{pmatrix}
=\begin{pmatrix}
e(1)/Z \\
e(2)/Z \\
\end{pmatrix}
\]
where $e^{(')}(i)/Z=\exp(-E^{(')}(i)/(kT))/Z$ is the Gibbs state (which should be invariant as the energy does not change) and $Z=e(1)+e(2)$.

Look at what happens with $e(1)= 2$ and $e(2)=1$. For the Gibbs rescaling this means that $P(1)\rightarrow P(1)/2$ on the length $2$ and $P(2)\rightarrow P(2)$ on the length $1$. We can split the first level into two parts (in our mind) and consider new levels $(P(1_1),P(1_2),P(2_1))=P(1)/2,P(1)/2,P(2)/1)$ all having the same length after Gibbs rescaling. For the thermal state this means:

\[\begin{pmatrix}
2/3 \\
1/3 \\
\end{pmatrix} \rightarrow \begin{pmatrix}
1/3 \\
1/3 \\
1/3 \\
\end{pmatrix}
\]
The transition matrix becomes:
\[\begin{pmatrix}
p_{(1\rightarrow 1)} & p_{(2\rightarrow 1)} \\
p_{(1\rightarrow 2)} & p_{(2\rightarrow 2)} \\
\end{pmatrix}
\rightarrow
 \begin{pmatrix}
p_{(1\rightarrow 1)}/2 & p_{(1\rightarrow 1)}/2 & p_{(2\rightarrow 1)}/2\\
p_{(1\rightarrow 1)}/2 & p_{(1\rightarrow 1)}/2 & p_{(2\rightarrow 1)}/2 \\
p_{(1\rightarrow 2)} & p_{(1\rightarrow 2)} & p_{(2\rightarrow 2)} \\
\end{pmatrix}
\]
which is still stochastic, because the initial matrix was. Since the thermal state has to be invariant under the action of this matrix and the thermal state in this case is proportional to the identity, it is straightforward to check that the matrix has to be bistochastic (rows and columns sum to 1).

For the general case   consider dividing the Gibbs-rescaled distribution into fine blocks such that all fine blocks have the same width $w$. Let $N$ be the number of fine blocks. (As the maximum support is given by the partition function $Z$ we have $w=Z/N$). Let $N_k$ be the number of fine grained blocks associated with level $k$, such that $\sum_{k=1}^n N_k=N$. Each energy level is associated with one block only labelled by $k$. Each $l$-th fine block is associated with a level $k_l$. 

Fine blocks associated with the same energy level $k$ must all have the same height, given by $P(k_l)/e(k_l)$   (where $e(k_l) = \exp(-E(k_l)/kT)= w N_{k_l} = Z N_{k_l}/N$ is the total width of the level $k_l$ after Gibbs rescaling. See the comment after the definition of Gibbs rescaling \ref{def:gibbsr}). 
Let $\vec{f}$ contain the $N$ heights of the fine blocks, with $P(k_l)/e(k_l)=P(k_l)N/(Z{N_{k_l}})$ as its $l$-th entry.
Now when the occupation probabilities transform under $B$, $\vec{f}$ undergoes an associated transform. We will argue it is given by a matrix $F$ whose entry in the l-th row and m-th column is given by 
\begin{equation}
F_{lm}:=\frac{B_{k_l k_m}}{N_{k_l}}.
\end{equation}
To see this note firstly that $P'_i=\sum_{j}B_{ij}P_j=\sum_{j}\frac{N_j}{N_j}B_{ij}P_j$, and recall that $f'_l=P'_{k_l}N/(Z{N_{k_l}})$. Thus
\begin{eqnarray}
f'_l &=&\sum_{j=1}^n\frac{N_j}{N_j}B_{k_l j}P_j\frac{N}{ZN_{k_l}}\\
&=&\sum_{j=1}^n\frac{N_j B_{k_l j} }{N_{k_l}}\frac{P_jN}{ZN_{j}}\\
&=&\sum_{j=1}^n\sum_{m | k_m=j }\frac{B_{k_l j}}{N_{k_l}}f_m\\
&=&\sum_{m=1}^N\frac{B_{k_l k_m}}{N_{k_l}}f_m.
\end{eqnarray}
As $B_{ij}$ and $N$ are nonnegative real numbers $F$ has non-negative real entries only. To see that the columns sum to 1 so that $F$ is a stochastic matrix, note that the column sums are the same as for $B$ which is stochastic. 
Moreover as $B$ must leave the Gibbs state invariant, and this is a uniform distribution after the Gibbs rescaling, $F$  must leave the uniform distribution (or anything proportional to it) invariant. Then for any row $i$: $\sum_j F_{ij} (1/N)= 1/N$ so each row of F must sum to 1. Therefore $F$ is a bistochastic matrix. Note that $F$ is additionally restricted, through being defined via $B$, to keep the heights of fine blocks the same whenever these are associated with the same level.

Accordingly we define interactions with the heat-baths, thermalizations, to act in the following way on the system.
\begin{defin}[Thermalization]\label{def:therm}
A thermalization leaves the energy eigenvalues invariant. It acts on the occupation probabilities, i.e. the eigenvalues of the density matrix, as a stochastic matrix. This stochastic matrix leaves the Gibbs state $\exp (\beta H)/Z$ invariant. It follows from this definition and the definition of the Gibbs-rescaled distribution that a thermalisation acts on the Gibbs-rescaled distribution as a bistochastic matrix.
\end{defin}

\subsection{Work extractions}

The second elementary process is {\em changing the Hamiltonian} of the system through shifting a set of energy levels by some predetermined amount $\Delta E(j)$, where $j$ labels the j-th work extraction. This may involve a work gain/cost, because if the system occupies one of the energy eigenstates that get shifted by $\Delta E(j)$ this counts as work done on the system and we write $W_j=\Delta E(j)$. It is assumed that this entails an energy transfer of $\Delta E(j)$ to the work reservoir system, so that energy is conserved. If the system does not occupy the eigenstate that gets shifted there is no work cost, $W_j=0$. To reduce the notation later on we will also find it convenient to define the `logarithmic' work $w^j$ s.t.  $W_j:=kT\ln w^j$ (or equivalently
$w^j:=\exp \left( W_j/kT\right )$). 

There is thus for each elementary work extraction a probability distribution over work transfer, with two elements, [$p(W_j=0)$, $p(W_j=\Delta E(j))$]. A sequence of work extractions generates a randomly picked sequence of energy transfers to the work reservoir by, e.g. $\{$0, 0, $\Delta E(3)$, 0, $\Delta E(5)$...$\}$. There is an associated vector of 0's and 1's where a 1 as the j-th entry indicates that there was indeed a work transfer of $\Delta E(j)$ in the j-th step. We call this latter vector $\vec{s}$, and the j-th entry thereof $s_j$. $s_j=0$ means that the levels shifted in work extraction step $j$ were not occupied, and $s_j=1$ means that they were. 

From the perspective of someone who learns $s_j$, the occupation probabilities $\{\lambda_i\}$ change. If $s_j=1$ one projects the state $\rho$ with projector $\Pi_{\text{shifted}}$ onto the set of levels shifted so that the new state is 
\begin{eqnarray*}
\frac{\Pi_{\text{shifted}}\sum_i \lambda_i \ket{i}\bra{i}\Pi_{\text{shifted}}}{\tr(\rho \Pi_{\text{shifted}})}.
\end{eqnarray*}
If instead $s_j=0$ one replaces the projector with one onto the levels that were not shifted. 

We accordingly represent a work extraction in the following manner:
\begin{defin}[Work extraction]\label{def:wext}

We define a work extraction on the first $l$ levels, which are all to get shifted in energy by $\varDelta E(j) = -kT \ln\left(w^j_{\vec{s}|s_j=1}\right)$, while the remaining levels are untouched as follows. Letting $\Theta_{U}(y)$ denote the function that is 1 if $y \in U$ and else 0, the new occupation probabilities and energies are given by:
\begin{itemize}
\item In the case when $s_j=1$ (state of the system is found to be in the levels $(1, \ldots, l)$):
 \[\lambda_{\vec{s}|s_j=1}^j\left( k \right) =  \Theta_{\{1,\ldots,l\}}(k) \frac{\lambda_{\vec{s}}^{j-1}\left( k \right)}{\eta_{\vec{s}|s_j=1}^{j}}\]
 where $\eta_{\vec{s}|s_j=1}^{j} = \sum\limits_{i=1}^l \lambda_{\vec{s}}^{j-1}\left( i \right) $. In this case there is an energy transfer to the reservoir given, in terms of the logarithmic work, by $w^j_{\vec{s}|s_j=1} =\exp (\Delta E(j)/kT)$.

\item In the case when $s_j=0$ (state of the system is {\em not} found to be in the levels $(1, \ldots, l)$): 
\[\lambda_{\vec{s}|s_j=0}^j\left( k \right) = \Theta_{\{l+1,\ldots,n\}}(k) \frac{\lambda_{\vec{s}}^{j-1}\left( k \right)}{\eta_{\vec{s}|s_j=0}^{j}}\]
 where $\eta_{\vec{s}|s_j=0}^{j} = \sum\limits_{i=l+1}^n \lambda_{\vec{s}}^{j-1}\left( i \right) $. In this case there is no energy transfer to the work reservoir, i.e. $w^j_{\vec{s}|s_j=0}=1$.
\end{itemize}
\end{defin}
\begin{figure}
 \centering
\includegraphics{./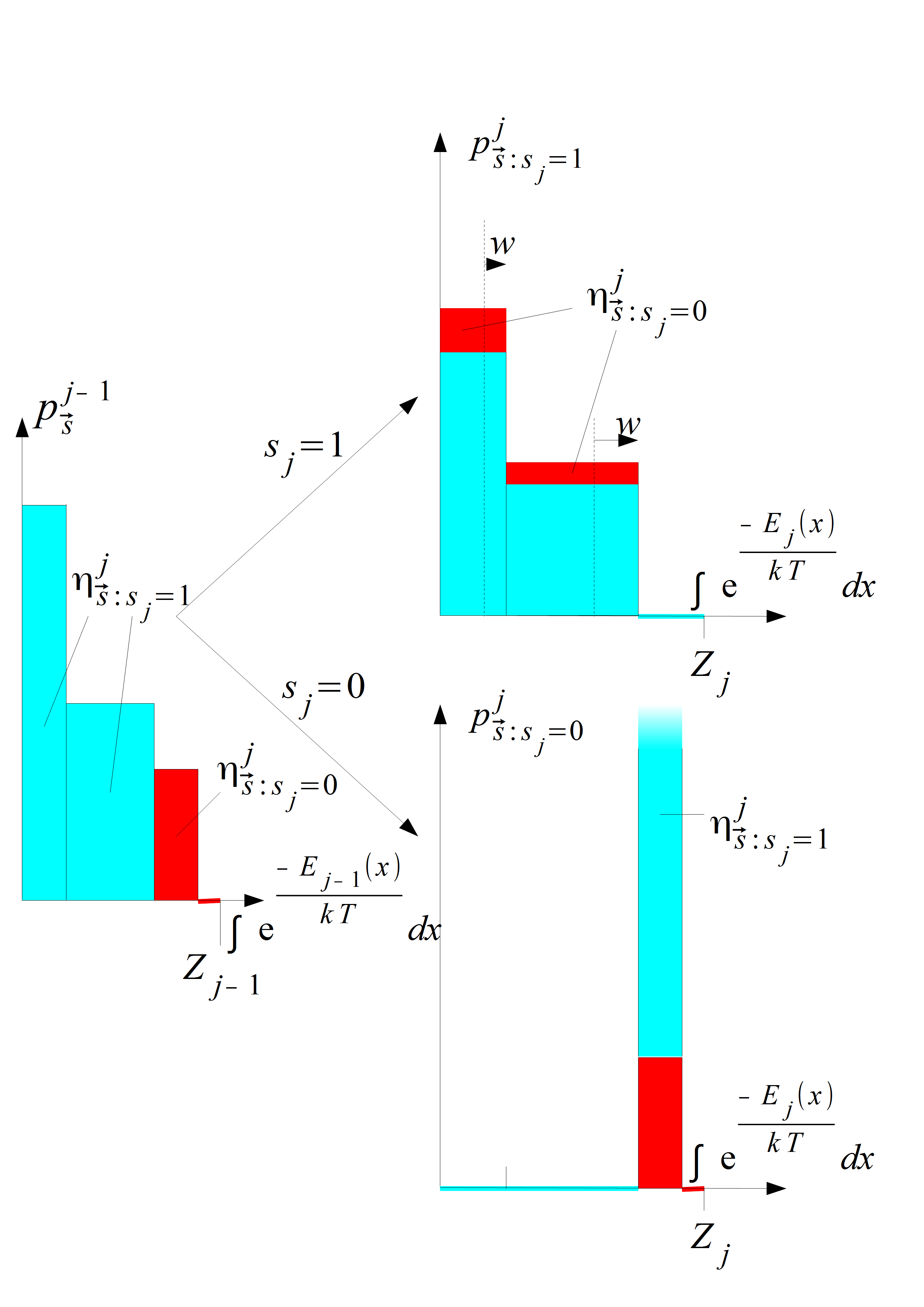}
 \caption{Work extraction: The action of the work extraction on the Gibbs rescaled probability distribution can be seen as a stretching by $w$ of the part from which one tries to extract the work $kT\ln(w)$, followed by a projection onto either the levels from which one tried to extract work (case $s_j=1$) or the rest (case $s_j=0$) followed by a renormalization.}
\label{fig:workextraction} 
\end{figure}

This next lemma considers how the work extraction in the preceding definition acts on the {\em Gibbs-rescaled} distribution. This is also depicted in Figure \ref{fig:workextraction}.

\begin{lem}\label{lem:wext}
Let the levels $\{1,\ldots, l\}$ be used for work extraction as in the above definition.
Let $a \in \mathbb{R}$ be the combined width of the blocks of the Gibbs-rescaled distribution corresponding to the levels $\{1,\ldots, l\}$, i.e. $a=\sum_{i=1}^le^{\frac{-E_i^{j-1}}{kT}}$.
Let $x \in (0,Z_j]$ (with $Z_j$ the partition function after step $j$).

Then following a work extraction in step $j$, the resulting Gibbs rescaled probability distribution, conditioned on the previous steps on path $\vec{s}$, is given by the following.
 In the case where $s_j=1$:
 \[p_{\vec{s}|s_j=1}^j(x)
   =
   \Theta_{(0, aw^j_{\vec{s}|s_j=1}]}(x)
   \frac
     {p_{\vec{s}}^{j-1}
      \left(
	  \frac{x}{w^j_{\vec{s}|s_j=1}}  
      \right)
     }{w^j_{\vec{s}|s_j=1} \eta^j_{\vec{s}|s_j=1}
     }
 \]
In the case where $s_j=0$:
 \[p_{\vec{s}|s_j=0}^j(x)
   =
   \Theta_{(a w^j_{\vec{s}|s_j=1},Z_j]}(x)
   \frac
    {p_{\vec{s}}^{j-1}
      \left(x - a w^j_{\vec{s}|s_j=1} + a\right)
    }{ \eta^j_{\vec{s}|s_j=0}
    }
\]
\end{lem}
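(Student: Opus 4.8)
The plan is to obtain Lemma~\ref{lem:wext} by feeding the eigenvalue step functions prescribed in Definition~\ref{def:wext} into the Gibbs-rescaling map $G^T$ and then performing the change of variables it induces on the cumulative-Boltzmann-weight coordinate. Concretely, before step $j$ we are handed the rescaled distribution $p^{j-1}_{\vec{s}}=G^T(\phi^{j-1}_{\vec{s}})$ living on $(0,Z_{j-1}]$, together with the eigenvalues $\lambda^{j-1}_{\vec{s}}$ and the energy function $E^{j-1}$. The work extraction replaces these by the conditioned eigenvalues $\lambda^j_{\vec{s}|s_j}$ of Definition~\ref{def:wext} and by shifted energies in which the selected levels (those filling the interval $(0,a]$ of the rescaled axis) have their energy lowered by $\Delta E=-kT\ln(w^j_{\vec{s}|s_j=1})$. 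I would simply apply $G^T$ to this new spectrum-plus-energy data and read off the result.

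The key step is to track what the energy shift does to the rescaled picture. Since a level of energy $E$ occupies width $e^{-E/kT}$ and height $\lambda/e^{-E/kT}$, lowering the energy of every selected level by $\Delta E$ multiplies each of their widths by $e^{-\Delta E/kT}=w^j_{\vec{s}|s_j=1}$ and divides their heights by the same factor, preserving each block's area. Inserting $E^j=E^{j-1}+\Delta E$ into the defining integral $\int_0^x e^{-E^j(\lceil yn\rceil)/kT}\,\mathrm dy$ shows that over the selected levels the new cumulative coordinate equals $w^j_{\vec{s}|s_j=1}$ times the old one, so $(0,a]$ is stretched to $(0,aw^j_{\vec{s}|s_j=1}]$, while every untouched level keeps its width but is rigidly displaced by the extra amount $a\,(w^j_{\vec{s}|s_j=1}-1)$ generated beneath it, so that $(a,Z_{j-1}]\mapsto(aw^j_{\vec{s}|s_j=1},Z_j]$ with $Z_j=Z_{j-1}+a(w^j_{\vec{s}|s_j=1}-1)$. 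Combining the height rescaling, the coordinate map, and the conditioning (restricting the support with $\Theta$ and renormalising by the occupation probability $\eta^j_{\vec{s}|s_j}$ of the relevant region) reproduces the two displayed expressions: the $s_j=1$ branch inherits the height factor $1/(w^j_{\vec{s}|s_j=1}\,\eta^j_{\vec{s}|s_j=1})$ together with the stretched argument, whereas the $s_j=0$ branch inherits only the renormalisation $1/\eta^j_{\vec{s}|s_j=0}$ and the rigid shift of its argument.

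Before reading off the formulas I would record the two bookkeeping facts that make them well defined: that $\eta^j_{\vec{s}|s_j=1}$ and $\eta^j_{\vec{s}|s_j=0}$ are exactly the $p^{j-1}_{\vec{s}}$-areas of the selected and untouched regions (so each $p^j_{\vec{s}|s_j}$ integrates to one), and that the two supports $(0,aw^j_{\vec{s}|s_j=1}]$ and $(aw^j_{\vec{s}|s_j=1},Z_j]$ tile $(0,Z_j]$. The latter identity also pins down $Z_j$ and justifies writing both conditional distributions on the common axis $(0,Z_j]$.

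I expect the one genuinely delicate point to be the change of variables in $G^T$ itself: the cumulative-weight coordinate is a nonlinear function of the probability coordinate, so one must check that the energy shift acts multiplicatively on that coordinate \emph{only} over the selected levels and as a pure translation over the untouched ones. Getting the displacement of the untouched blocks right---they move even though nothing is done to them, purely because the blocks below them have widened---is where an error would most easily creep in, and it is the step I would verify most carefully against the normalisation and support identities above.
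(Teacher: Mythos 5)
Your proposal is correct and is essentially the paper's own argument: the paper proves the lemma by exactly the change of variables you describe, writing $x=\int_0^b e^{-E^{j-1}_{\lceil yn/w\rceil}/kT}\,\mathrm{d}y$, substituting $z=y/w$, and unpacking the definitions of $G^T$ and of the conditioned eigenvalues to obtain $p^j(x)=\frac{1}{w\,\eta^j}\,p^{j-1}\!\left(\frac{x}{w}\right)$ on the stretched support, treating $s_j=0$ analogously as your rigid-translation case (your identity $Z_j=Z_{j-1}+a(w_1-1)$ is proved by the paper as a separate small lemma immediately afterwards). One caveat: your (correct) computation yields the argument $x/w^j_{\vec{s}|s_j=1}$ in the $s_j=1$ branch, not the $x/w^j_{\vec{s}|s_j=1}+a$ appearing in the displayed statement; that ``$+a$'' is a typo in the lemma as printed --- as confirmed by the paper's own proof and by how the formula is used in the subsequent lemma --- so you should not claim to reproduce the displayed expression verbatim.
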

\begin{proof}
 Case $s_j=1$:\\
 Let   the logarithmical work in step $j$ be denoted by $w=w^j_{\vec{s}}$,\\
let the Gibbs rescaled probability distribution after step $j$ be $p^j=p_{\vec{s}}^j$ and the one before the step $j$: $p^{j-1}=p_{\vec{s}}^{j-1}$,\\
let the occupation probabilities be $\lambda^j=\lambda_{\vec{s}}^j$ and the sum of the relevant occupation probabilities (as in definition \ref{def:wext}): $\eta^j=\eta_{\vec{s}}^j$   \\
 Let $x \in (0,aw]\bigcap(0,Z_j]$ and $b \in (0,\infty)$ such that $\int_0^b \exp\left(-\frac{E^{j-1}_{\left\lceil \frac{yn}{w} \right\rceil}}{kT}\right) \md y=x$.
 \begin{eqnarray*}  
  p^j(x)&=& p^j\left(\int\limits_0^b \exp\left(-\frac{E^{j-1}_{\left\lceil \frac{yn}{w} \right\rceil}}{ kT}\right) \md y\right)\\
  &=& p^j\left(\int\limits_0^b \exp\left(-\left(\frac{E^{j-1}_{\left\lceil \frac{yn}{w} \right\rceil}-kT\ln(w)}{kT}\right)\right) \frac{1}{w}\md y\right)\\
  &=& p^j\left(\int\limits_0^{b/w} \exp\left(-\left(\underbrace{\frac{E^{j-1}_{\left\lceil zn \right\rceil}-kT\ln(w)}{kT}}_{\stackrel{(\ast)}{=}E^{j}_{\left\lceil zn \right\rceil}/(kT)}\right)\right)\md z\right)\\
  &\stackrel{(\ast\ast)}{=}& \frac{\lambda^j_{\left\lceil \frac{bn}{w} \right\rceil}}{\exp\left(-\left(\frac{E^{j-1}_{\left\lceil \frac{bn}{w} \right\rceil}-kT\ln(w)}{kT}\right)\right)}\\
  &=& \frac{\lambda^{j-1}_{\left\lceil \frac{bn}{w} \right\rceil}}{\exp\left(-\left(\frac{E^{j-1}_{\left\lceil \frac{bn}{w} \right\rceil}}{kT}\right)\right) w \eta^j}\\
  &=& \frac{1}{w \eta^j}p^{j-1}\left(\int\limits_0^{b/w} \exp\left(-\left(\frac{E^{j-1}_{\left\lceil zn \right\rceil}}{kT}\right)\right)\md z\right)\\
  &=& \frac{1}{w \eta^j}p^{j-1}\left(\int\limits_0^{b} \frac{\exp\left(-\left(\frac{E^{j-1}_{\left\lceil \frac{yn}{w} \right\rceil}}{kT}\right)\right)}{w}\md y\right)\\
  &=& \frac{1}{w \eta^j}p^{j-1}\left(\frac{x}{w}\right)
 \end{eqnarray*}
  Where the equation $(\ast)$ follows by definition \ref{def:wext} and the equation $(\ast\ast)$ follows by definition \ref{def:gibbsr}.   \\
 One easily sees that $p^j(x) = 0$ for   $x\geq a w$, since then $\Theta_{\{0,\ldots,l\}}(x)=0$ in definition \ref{def:wext}  .\\
 The proof for the case $s_j=0$ is analogous.
\end{proof}

This next lemma shows how the partition function changes during a work extraction, as a function of how much the chosen levels are stretched (encoded in $w$) and how many levels are shifted (encoded in $a$ as described above). 

\begin{lem}
 The partition function $Z_j$ immediately after step $j$ is given by:
 \[Z_j= Z_{j-1}+a (w_1 -1),\]
 where $(0,a]$ is the interval on which the Gibbs-rescaled distribution is associated with the stretched levels, and $w_1$ is the logarithmic work extracted if the extraction is successful.
\end{lem}
\begin{proof}
Let the $(0,a]$ interval be associated with blocks corresponding to the levels $\{1,\ldots,l\}$ and  split the interval $(a,Z_j]$ into $n-l$ blocks for some $n$.
 \begin{eqnarray*}
  Z_j&=&\sum\limits_k e^{\frac{-E_k^j}{kT}}\\
    &=&\sum\limits_{k=1}^l e^{\frac{-E_k^j}{kT}} + \sum\limits_{k=l+1}^n e^{\frac{-E_k^j}{kT}}\\
    &=&\sum\limits_{k=1}^l e^{\frac{-\left(E_k^{j-1} -kT \ln\left(w_1\right)\right)}{kT}} + \sum\limits_{k=l+1}^n e^{\frac{-E_k^j}{kT}}\\
    &=&w_1  \underbrace{\sum\limits_{k=1}^le^{\frac{-E_k^{j-1}}{kT}}}_{a} + \sum\limits_{k=l+1}^n e^{\frac{-E_k^{j-1}}{kT}}\\
    &=& w_1 a - a + \sum\limits_{k=1}^l e^{\frac{-E_k^{j-1}}{kT}} + \sum\limits_{k=l+1}^n e^{\frac{-E_k^{j-1}}{kT}}\\
    &=& Z_{j-1}+a w_1 -a
 \end{eqnarray*}
out of which the lemma follows.
\end{proof}

\subsection{The work extraction game}
We consider scenarios where there is an external agent who wants to use thermalisations and work extractions to transform a system with an initial Hamiltonian $H_i$ and density matrix $\rho$, to a given final Hamiltonian $H_f$ and density matrix $\sigma$. In the process the agent will want to keep the energy of the work reservoir as high as possible, in a way that will be made more precise below.

\begin{defin}[The work extraction game]\label{def:game}
There are three systems and a work-extraction agent. 
One system is the working medium, another is a heat bath of temperature T, and the last is the work reservoir.

The initial energy spectrum $\{E\}$ of the working medium is arbitrary but given. The initial density matrix $\rho$ of the same is diagonal in the energy basis. The final energy spectrum $\{F\}$ and diagonal density matrix $\sigma$ are also arbitrary but given. 

The agent can combine thermalization (defined above) and work extraction (also defined above) in any sequence. This sequence, together with the specifications for each step is called the agent's {\em strategy}.

In a single-shot implementation of the strategy there will be a transfer of some energy $\nu$ to the work extraction reservoir. Before the extraction the agent must specify $W$. If $\nu\geq W$ and the final state conditioned on $\nu\geq W$ is $\sigma$, the work extraction is termed {\em successful} (or else a failure). The probability of success is called $1-\varepsilon$.   
\end{defin}

A crucial quantity we will be interested in calculating is the optimal work that the agent can be {\em guaranteed} to extract or need to insert. Before defining this quantity mathematically we recall a motivation for being interested in it: consider a scenario where some process is activated only if the the work reservoir energy goes above a certain threshold. One is then interested in whether this threshold is guaranteed to be exceeded. This is as opposed to the standard paradigm of focussing on the average energy increase in the reservoir. This is a key difference between the single-shot paradigm and average paradigm.

\begin{defin}[Guaranteed work]\label{def:guaranteedwork}
For a given strategy $S$, and a given initial state there is a probability distribution of work transferred to the reservoir, $p_S(\mathcal{W})$. We denote the work guaranteed up to a probability of failure $\varepsilon$ associated with that strategy as 
$W^\varepsilon_S$, and define it through the equation  
\begin{eqnarray*}
W^\varepsilon_S=\max y : \int\limits_0^y p_S(\mathcal{W})d\mathcal{W}\leq \varepsilon. 
\end{eqnarray*}

For an initial Hamiltonian $H_i$, density matrix $\rho$ and tolerated probability of failure $\varepsilon$, there is a set $\mathbbm{S}$ of allowed strategies which succeed with probability greater than or equal to $1-\varepsilon$. We denote the optimal work guaranteed (up to failure probability $\varepsilon$) for the given initial and final conditions by  
$W^\varepsilon(\rho,H_i\rightarrow \sigma, H_f)$ and define it as the optimal work over all the allowed strategies in the set:
\begin{eqnarray*}
W^\varepsilon(\rho,H_i\rightarrow \sigma, H_f):=\sup_{S\in\mathbbm{S}} W^\varepsilon_S(\rho,H_i\rightarrow \sigma, H_f).
\end{eqnarray*}
(Note that this quantity may be negative in the case where work is required to effect the given change in state and Hamiltonian).
\end{defin}

\subsection{Notation reminder}
To assist the reading of the proofs below we collect key notation in the following:
\begin{defin}[Notation] We shall use the following notation:\\
   $\vec{s} \in \{0,1\}^m$ :  a vector with one entry for each of $m$ work extractions (subsequently called ``steps''): 
    $s_j=1$: system is in one of the energy levels chosen for work extraction\\
    $s_j=0$: system is {\em not} in one of the states chosen for work extraction\\
    $\vec{s}$ is called a {\em path}.
$\hat{s}_j$ is the complement of $s_j$: $s_j=1 \Leftrightarrow \hat{s}_j=0$ and $s_j=0 \Leftrightarrow \hat{s}_j=1$\\
$w_{\vec{s}}^j$: logarithmical work ($kT\ln(w_{\vec{s}}^j) = W_{\vec{s}}^j$) extracted in step $j$ on path $\vec{s}$.\\
$w^j$: The logarithmical work one extracts in step $j$ if the specified level is occupied.\\
$W$: work demanded in order to call the total extraction {\em successful} (see definition \ref{def:game}).\\
$w = \exp ( W/(kT) )$: total logarithmical work demanded in order to call the total extraction {\em successful}.\\
$G$ is the set of successful paths, i.e. those yielding as much work as demanded:
 \[
  G = \left\{ \vec{s}\left|\prod\limits_{j=1}^m w_{\vec{s}}^j \geq w\right. \right\}
 \]
$\eta^j_{\vec{s}}$: probability of picking step j on the path $\vec{s}$.   I.e. as in definition \ref{def:wext}: $\eta_{\vec{s}|s_j=1}^{j} = \sum\limits_{i=1}^l \lambda_{\vec{s}}^{j-1}\left( i \right) $, if the chosen energy levels for work-extraction in step $j$ are $\{1,\ldots,l\}$ and $\lambda_{\vec{s}}^{j-1}$ as defined below.  \\
$P_S$: total probability of success:
$P_S = \sum\limits_{\vec{s} \in G} \prod \limits_j \eta_{\vec{s}}^j$.\\
$\lambda_{\vec{s}}^j$: occupation probabilities after step $j$ if the previous evolution of the system is given by the path $\vec{s}$.\\
  $p_{\vec{s}}^j = G\left(\lambda_{\vec{s}}^j\right)$:   Gibbs rescaled probability distribution after step j (before thermalizing) conditioned on the previous steps on path $\vec{s}$.\\
$p_{\vec{s},t}^j$: Gibbs rescaled probability distribution after step j (after thermalizing) conditioned on the previous steps on path $\vec{s}$.\\
{\em a Block}: For $a<b$ the interval $(a,b]$ is said to be a block corresponding to a level $k$, if $p_{\vec{s}}^j$ is constant on this interval $\forall \vec{s}$.\\
$q$: final Gibbs rescaled probability distribution, conditioned on successful work extraction:
\[
 q = \sum \limits_{\vec{s} \in G} p_{\vec{s},t}^m \frac{\prod\limits_{j}\eta^j_{\vec{s}}}{P_S}
\]
$B_j$: Bistochastic matrix one chooses after step $j$ by thermalizing the system (this has to be the same for all paths).\\
$E^j(x)$: Energy of the level labelled by $x$ after step $j$.\\
$\Theta_{U} (x)$: Step function associated with an interval $U$:
\[\Theta_{U} (x) =\left\{\begin{array}{ll}1&\text{: for } x \in U\\0&\text{: else}\end{array}\right.\]
\end{defin}



\newpage
\section{Upper bounding $W^{\varepsilon}_{\mathcal{S}}$}\label{sec:upperbound}

We shall be interested in bounding $W^{\varepsilon}_{\mathcal{S}}$ given $\varepsilon$ and the initial and final conditions. We break the calculation into several lemmas which will later be combined to prove the main theorem. But firstly we give the argument for a special case of a more restricted set of strategies, in order to give the reader a sense of why relative mixedness enters as the bounding quantity. 

\subsection{Instructive special case}
Consider zero-risk work extraction such that all levels with non-zero occupation probability are shifted. Note firstly that after a work extraction by $W=kT \ln(w)$ the height of the Gibbs-rescaled probability distribution is given by $\lambda_i/\exp\left(-\left(\frac{(E_i-W)}{kT}\right)\right) 
= \lambda_i/\left(\exp\left(-\frac{E_i}{kT}\right)w\right)$, while the width gets stretched by a factor $w$. So the new Gibbs-rescaled probability distribution is given in terms of the old one as follows:
$p_{new}(x)=\frac{P_{old}(x/w)}{w}$ (see lemma \ref{lem:wext} for more details). 

Thermalization acts as a bistochastic matrix on the Gibbs-rescaled probability distribution and therefore (see \cite{HardyLP52}) $\int_0^{l} p(x) \md x \geq \int_0^{l} p_{thermalized} (x) \md x$, 
if both distributions are monotonically falling, which we will now assume w.l.o.g. Thus after a thermalization and a work extraction the following holds: 
\begin{eqnarray*}
\int\limits_0^l p_{new, thermalized} (x) \md x &\leq&  \int\limits_0^l p_{new} (x) \md x\\
&&= \int\limits_0^l \frac{p_{old} (x/w)}{w} \md x \\
&&=\int\limits_0^{wl} p_{old} (x) \md x.
\end{eqnarray*}
Inductively, after any number of work extractions and thermalizations and total work $kT \ln (w)$:
\[
\int\limits_0^{wl} p_{initial} (x) \md x \geq \int\limits_0^{l} p_{final}(x) \md x
\]
It follows that the maximal logarithmical work given the initial and final Gibbs-rescaled distributions is given by 
\[
\max w \text{ s.t. }\, \int\limits_0^{wl} p_{initial} (x) \md x \geq \int\limits_0^{l} p_{final}(x) \md x,
\]
or equivalently in terms of the cumulative distribution functions $\mathcal{F}$,
\[\max w \text{ s.t. }\, \mathcal{F}_{\text {p(initial)}} \left(\frac{x}{w}\right)\geq \mathcal{F}_{\text {p(final)}} (x)\,\,\,\forall x.\] 
This is precisely the relative mixedness defined in the main section. In~\cite{HorodeckiO11} they also arrive at the same result for the zero-risk case (starting from an a priori different model and using different arguments).

\subsection{General case}
We now turn to the general case.
We combine the two previous lemmas to gain another relation between the Gibbs rescaled distribution at steps $j$ and $j-1$. We shall use this later in an iterative manner to relate the very first and final Gibbs rescaled distributions. 
 
 \begin{lem}\label{lem:wstep_exact}
The Gibbs rescaled probability distributions at steps $j$ and $j-1$ respectively satisfy the relation 
 \[
  p_{\vec{s},t}^{j-1}(x) = \sum\limits_{k=0,1} w_{\vec{s}|s_j=k}^{j} \eta_{\vec{s}|s_j=k}^{j} p_{\vec{s}|s_j=k}^{j}\left(x w_{\vec{s}|s_j=k}^{j} + c_{\vec{s}|s_j=k}^{j}\right)
 \]
 with constants $c_{\vec{s}|s_j=1}^{j}=0$ and $c_{\vec{s}|s_j=0}^{j}=a w^j -a$.
\end{lem}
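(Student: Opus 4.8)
The plan is to prove the identity by \emph{inverting} the two relations of Lemma~\ref{lem:wext} and then reassembling the pre-step distribution from its two post-step branches. A work extraction at step $j$ splits the distribution present just before the step---the thermalised distribution $p^{j-1}_{\vec{s},t}$, which is the input fed into step $j$---into the two conditional branches $p^{j}_{\vec{s}|s_j=1}$ and $p^{j}_{\vec{s}|s_j=0}$. Lemma~\ref{lem:wext} writes each branch as an affinely reparametrised and rescaled copy of a portion of $p^{j-1}_{\vec{s},t}$, so the claimed formula is precisely the statement that, run backwards, the weighted sum of the two reparametrised branches recovers $p^{j-1}_{\vec{s},t}$ pointwise. (This is what will later let us iterate the relation to connect the initial and final Gibbs-rescaled distributions.)

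Concretely I would treat the two cases separately. For the branch $s_j=1$, Lemma~\ref{lem:wext} presents $p^{j}_{\vec{s}|s_j=1}$ as a copy of $p^{j-1}_{\vec{s},t}$ dilated by the factor $w^j_{\vec{s}|s_j=1}$ and divided by $w^j_{\vec{s}|s_j=1}\,\eta^j_{\vec{s}|s_j=1}$. Substituting $x \mapsto x\,w^j_{\vec{s}|s_j=1}$ (i.e.\ taking $c^j_{\vec{s}|s_j=1}=0$) undoes the dilation, and multiplying by $w^j_{\vec{s}|s_j=1}\,\eta^j_{\vec{s}|s_j=1}$ cancels the prefactor, so this term returns $p^{j-1}_{\vec{s},t}(x)$ on its subdomain. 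For the branch $s_j=0$ I would use that a work extraction leaves the untouched levels' probabilities fixed, so $w^j_{\vec{s}|s_j=0}=1$ and there is no dilation; the only effect on these levels is an upward displacement in the Gibbs-rescaled coordinate equal to the amount by which the extracted block widened, from width $a$ to width $a\,w^j$. The shift $c^j_{\vec{s}|s_j=0}=a\,w^j-a$ is exactly this displacement, which is pinned down by the partition-function update $Z_j=Z_{j-1}+a(w^j-1)$ established in the preceding lemma; multiplying by $\eta^j_{\vec{s}|s_j=0}$ again cancels the prefactor.

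It then remains to glue the two inverted pieces. Here I would invoke the $\Theta$-supports of Lemma~\ref{lem:wext}: after the substitutions the $s_j=1$ term is supported on $(0,a]$ and the $s_j=0$ term on $(a,Z_{j-1}]$, so the two indicators are mutually exclusive and their union is exactly the support $(0,Z_{j-1}]$ of $p^{j-1}_{\vec{s},t}$. Consequently at each $x$ precisely one summand is non-zero and equals $p^{j-1}_{\vec{s},t}(x)$, which gives the identity.

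\textbf{Main obstacle.} There is no genuine analytic content; the work lies entirely in the bookkeeping of the affine changes of variables and the half-open endpoints. The one step that must be gotten exactly right is the shift constant $c^j_{\vec{s}|s_j=0}=a\,w^j-a$: it has to match the upward displacement of the untouched block, which relies on the precise partition-function update $Z_j=Z_{j-1}+a(w^j-1)$, and one must check that the branch supports tile $(0,Z_{j-1}]$ with neither gap nor overlap so that the sum does not double-count. The only conceptual point worth flagging is the identification of the input distribution written $p^{j-1}$ in Lemma~\ref{lem:wext} with the thermalised distribution $p^{j-1}_{\vec{s},t}$ that is actually fed into the $j$-th extraction.
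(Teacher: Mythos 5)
Your proposal is correct and follows essentially the same route as the paper's own proof: invert the two branch formulas of Lemma~\ref{lem:wext} (using $w^j_{\vec{s}|s_j=0}=1$ and the partition-function update $Z_j=Z_{j-1}+a(w^j-1)$ to fix the shift $c^j_{\vec{s}|s_j=0}=aw^j-a$), cancel the prefactors, and observe that the resulting $\Theta$-supports $(0,a]$ and $(a,Z_{j-1}]$ tile the support of $p^{j-1}_{\vec{s},t}$ disjointly. Your flagged subtlety---identifying the $p^{j-1}_{\vec{s}}$ of Lemma~\ref{lem:wext} with the thermalised $p^{j-1}_{\vec{s},t}$ fed into step $j$---is exactly the implicit identification the paper also makes.
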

\begin{proof}
 Let $w_k=w^j_{\vec{s}|s_j=k}$, $p_k^j=p_{\vec{s}|s_j=k}^j$, $p^{j-1}=p_{\vec{s},t}^{j-1}$, $\eta_k=\eta_{\vec{s}|s_j=k}^j$.
 Let $c_0=a w_1 -a$ and $c_1=0$. 
 Then:
 \begin{eqnarray*}
  &&\eta_0 w_0 p^j_0 (x w_0 + c_0) + \eta_1 w_1 p^j_1 (x w_1 + c_1)\\ 
  &&= \eta_0 p^j_0 (x + a w_1 - a) + \eta_1 w_1 p^j_1 (x w_1)\\
  &&= \Theta_{(a w_1, Z_j]} (x + a w_1 -a) p^{j-1}(x) + \Theta_{(0,a w_1]}( x w_1) p^{j-1}(x)\\
  &&= \Theta_{(a,Z_j-a w_1 +a]}(x) p^{j-1}(x) + \Theta_{(0,a]}(x)p^{j-1}(x)\\
  &&= \Theta_{(0,Z_{j-1}]}(x)p^{j-1}(x)
 \end{eqnarray*}
\end{proof}
We now use the above to make a statement about the relation between the {\em integrals} of the Gibbs rescaled distribution at steps $j$ and $j-1$. We show that the distribution before step $j$ majorizes the distribution after the step, even after the latter has been stretched by the logarithmical work done ($w$ in the case $s_j=1$, $1$ else). This can be seen as a generalisation of the inequality: $\int_0^l \frac{p_{old} (x/w)}{w} \md x \geq \int_0^l p_{new, thermalized} (x) \md x $ from the above special instructive case to the case where $s_j=0$ is also possible.
 \begin{lem}\label{lem:ind}
 Let $j\in\{1,\ldots,m\}$. Let $l \in (0,Z_j]$. Let $\vec{s}'\in \{0,1\}^{m-j-1}$. Define $\vec{s}_1=(s_1,\ldots,s_j,1,s'_{1},\ldots,s'_{m-j-1})$ and $\vec{s}_0=(s_1,\ldots,s_j,0,s'_{1},\ldots,s'_{m-j-1})$. Then:
\begin{eqnarray*}
  \lefteqn{\sum\limits_{\vec{s}\in \{0,1\}^j}\int\limits_0^l\tau^j_t\circ p^j_{\vec{s}_0,t} (x) \md x }\\
  &\geq \sum\limits_{\vec{s}\in \{0,1\}^j}\int\limits_0^l&\left(w^{j+1} \eta^{j+1}_{\vec{s}_1}\tau^{j+1}_t\circ p^{j+1}_{\vec{s}_1,t} (x w^{j+1})\right. \md x\\
    &&\left.+ \eta^{j+1}_{\vec{s}_0}\tau^{j+1}_t\circ p^{j+1}_{\vec{s}_0,t} (x)\right) \md x
\end{eqnarray*}
where $\tau^j_t$ is the permutation of any blocks, which maximizes the left hand side, while $\tau^{j+1}_t$ is the one which maximizes the right hand side.
\end{lem}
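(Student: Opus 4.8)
The plan is to track a single Lorenz-type functional through one elementary step of the protocol and show it cannot increase. For a nonnegative step function $f$ on an interval write
\[
\Phi_l(f) := \max_{\tau}\int_0^l (\tau\circ f)(x)\,\md x,
\]
the integral up to $l$ of the decreasing rearrangement of $f$, with $\tau$ ranging over rearrangements of its blocks. Both sides of the lemma are exactly such prefix integrals, evaluated on the post-thermalization ensemble at steps $j$ and $j+1$, so the statement asserts that one elementary step — a work extraction followed by a thermalization — cannot raise $\Phi_l$. The two properties I would lean on are: (a) $\Phi_l$ is invariant under rearrangements and under horizontal translations of the support (it only sees the multiset of block heights weighted by widths); and (b) it is monotone under bistochastic maps, since $Bf\prec f$ for bistochastic $B$ (a standard majorization fact) gives $\Phi_l(Bf)\le\Phi_l(f)$ for every $l$.

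First I would use Lemma~\ref{lem:wstep_exact}, re-indexed from $j$ to $j+1$, to rewrite the left integrand. That lemma expresses $p^j_{\vec s,t}$ exactly as the rescaled branch $w^{j+1}\eta^{j+1}_{\vec s_1}\,p^{j+1}_{\vec s_1}(\,\cdot\,w^{j+1})$ on $(0,a]$ plus the shifted branch $\eta^{j+1}_{\vec s_0}\,p^{j+1}_{\vec s_0}(\,\cdot\,+c_0)$, where, using the partition-function lemma, $c_0=aw^{j+1}-a$ glues the two pieces so they fill $(0,Z_j]$. These branches are the \emph{pre}-thermalization step-$(j+1)$ distributions, so dropping the translation $c_0$ — allowed for $l\le Z_j$ by property (a) — turns the left integrand into the \emph{same} layout as the right integrand (identical $w^{j+1}$ scaling of the first branch, identical placement of the second), but with $p^{j+1}_{\vec s_k}$ in place of the thermalized $p^{j+1}_{\vec s_k,t}$. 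Thus the work-extraction reshaping is $\Phi_l$-preserving and reduces the claim to the effect of the step-$(j+1)$ thermalization alone.

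It then remains to insert the thermalization. By Definition~\ref{def:therm} this is a single bistochastic matrix $B_{j+1}$, the same for all paths, acting on the fine-grained Gibbs-rescaled blocks as the bistochastic matrix $F$ constructed in the main text; hence $p^{j+1}_{\vec s_k,t}\prec p^{j+1}_{\vec s_k}$, and by property (b) replacing each branch by its thermalized version can only lower its prefix integral. Pulling the right-hand maximizer $\tau^{j+1}_t$ back through the invertible reshaping to a permutation of the step-$j$ blocks, and using that the left-hand side is the maximum over all such permutations, would then deliver LHS $\ge$ RHS once the two conditional branches are recombined.

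The hard part will be keeping property (b) honest across the horizontal rescaling by $w^{j+1}$ and the change $Z_j\to Z_{j+1}$: I must verify that truncating at $l\le Z_j$ while the $s_{j+1}=1$ branch has been stretched in width by $w^{j+1}$ does not break the monotonicity, and that the continuous majorization on blocks of \emph{unequal} widths is faithfully represented by the discrete bistochastic action of $F$ — this is precisely where the fine-graining argument preceding Definition~\ref{def:therm} is needed. The subtler point is the recombination of the two branches under a \emph{single} permutation common to the whole $\vec s$-ensemble: because $\Phi_l$ is only subadditive over sums (the wrong direction for a naive term-by-term bound), the argument must exploit the disjoint supports of the two branches together with the fact that $B_{j+1}$ is the same for every path, so that one pulled-back permutation is simultaneously good for the entire ensemble. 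Coupling the per-branch bistochastic comparison to this shared-permutation maximization is the real crux.
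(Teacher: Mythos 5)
Your proposal follows essentially the same route as the paper's proof: an exact decomposition of $p^j_{\vec{s},t}$ via Lemma~\ref{lem:wstep_exact} (with $c_0=aw^{j+1}-a$ from the partition-function lemma), treating the work extraction as a prefix-integral-preserving reshaping using the disjoint, path-independent supports of the two branches, and then the Hardy--Littlewood--P\'olya fact that a bistochastic map can only lower prefix integrals of the decreasing rearrangement to absorb the step-$(j+1)$ thermalization. The two ``cruxes'' you flag are resolved in the paper exactly as you anticipate — the common reordering $\tilde{\tau}$ exists because $a$ and $B_{j+1}$ are chosen independently of the path, and the fine-graining argument preceding Definition~\ref{def:therm} justifies the discrete bistochastic action — so your plan is sound and matches the paper's argument.
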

\begin{proof}
 Let $p_1=p^{j+1}_{\vec{s}_1,t}$, $p_0=p^{j+1}_{\vec{s}_0,t}$, $\eta_1=\eta^{j+1}_{\vec{s}_1}$, $\eta_0=\eta^{j+1}_{\vec{s}_0}$, $w=w^{j+1}$.
\begin{eqnarray*}
 \lefteqn{\sum\limits_{\vec{s}\in \{0,1\}^j}\int\limits_0^l\tau^j_t\circ p^j_{\vec{s}_0,t} (x) \md x }\\
  &=& \sum\limits_{\vec{s}\in \{0,1\}^j}\int\limits_0^l\left(\eta_1 w \tau^j_t \circ p_1(x w) + \eta_0 \tau^j_t \circ p_0(x+aw-a)\right) \md x\\
  &=& \sum\limits_{\vec{s}\in \{0,1\}^j} \int\limits_0^{l_1}\eta_1 w \tilde{\tau} \circ p_1(x w) \\
    &&+ \sum\limits_{\vec{s}\in \{0,1\}^j} \int\limits_{a}^{l +a -l_1} \eta_0 \tilde{\tau} \circ p_0(x+aw-a) \md x\\
\end{eqnarray*}
Where the first equality is exactly lemma \ref{lem:wstep_exact}.
In the second equality $l_1 \in (0, \min(a,l)]$ is a value which maximizes the right hand side of the last line and $\tilde{\tau}$ reorders $\sum_{{\vec{s}}\in \{0,1\}^j} p_1$ in descending order in $(0,aw]$ and $\sum_{{\vec{s}}\in \{0,1\}^j} p_0$ in $(aw,Z_j]$. 
This is possible since $p_1$ and $p_0$ have disjoint support, also for different $\vec{s}$, since $a$ in definition \ref{def:wext} has to be chosen independently of the path. (See lemma \ref{lem:wext}). This reordering maximizes the last line, thus it is equal to the line above.\\
After changing variables in the second integral we can translate its bounds by $-aw+l_1$, if we translate the integrand in the opposite direction applying a second permutation. Thus:
\begin{eqnarray*}
 \lefteqn{\sum\limits_{\vec{s}\in \{0,1\}^j}\int\limits_0^l\tau^j_t\circ p^j_{\vec{s}_0,t} (x) \md x }\\
  &=& \sum\limits_{\vec{s}\in \{0,1\}^j} \int\limits_0^{l_1}\eta_1 w \tilde{\tau} \circ p_1(x w) 
      \\
			&&+ \sum\limits_{\vec{s}\in \{0,1\}^j} \int\limits_{aw}^{l +a w -l_1} \eta_0 \tilde{\tau} \circ p_0(x) \md x\\
  &=& \sum\limits_{\vec{s}\in \{0,1\}^j}\int\limits_0^{l}\left(\eta_1 w \tau^{j+1} \circ p_1(x w) 
    +  \eta_0 \tau^{j+1} \circ p_0(x) \right)\md x\\
\end{eqnarray*}
Applying any bistochastic matrix $\tilde{B}$ on the probabilities $p_0$ and $p_1$ and reordering in descending order with $\tau^{j+1}_t$ afterwards, we get (we write $\tilde{B}=B\circ(\tau^{j+1})^{-1}$ for convenience, then B is again bistochastic):
\begin{eqnarray*}
  \lefteqn{\sum\limits_{\vec{s}\in \{0,1\}^j}\int\limits_0^l\tau^j_t\circ p^j_{\vec{s}_0,t} (x) \md x }\\
 &\geq& \int\limits_0^{l}\tau^{j+1}_t \circ B \circ (\tau^{j+1})^{-1}\circ \\
    &&\sum\limits_{\vec{s}\in \{0,1\}^j}\left(\eta_1 w \tau^{j+1} \circ p_1(x w) 
    +  \eta_0 \tau^{j+1} \circ p_0(x) \right) \md x\\
  &=& \sum\limits_{\vec{s}\in \{0,1\}^j}\int\limits_0^l\left(w \eta_1\tau^{j+1}_t\circ B\circ p_1 (x w)\right.\\
   && \left.+ \eta_0\tau^{j+1}_t\circ B \circ p_0 (x)\right) \md x\\
  &=& \sum\limits_{\vec{s}\in \{0,1\}^j}\int\limits_0^l\left(w \eta_1\tau^{j+1}_t\circ p^{j+1}_{\vec{s}_1,t} (x w)
    + \eta_0\tau^{j+1}_t\circ p^{j+1}_{\vec{s}_0,t} (x)\right) \md x
\end{eqnarray*}
Where the inequality follows out of the inequality $Bp \succ p$ for any bistochastic matrix $B$ and vector $p$, which is proved in \cite{HardyLP52}.
\end{proof}

The above lemma is the main ingredient for the first part of the main theorem and the rest of the proof is straightforward:
 \begin{thmnonumb}[First part of Theorem~\ref{th:main} in main body, giving the bound]
 In the work extraction game defined above, if one is given an initial density matrix $\rho=\sum_i \lambda_i\ket{e_i}\bra{e_i}$ and final density matrix $\sigma=\sum_j \nu_j\ket{f_j}\bra{f_j}$ with $\{\ket{e_i}\}$, $\{\ket{f_j}\}$ the respective energy eigenstates and both $\rho$ and $\sigma$ having finite rank, then the work $W^\varepsilon$ one can extract with certainty except with $\varepsilon$ probability respects 
 \begin{eqnarray*}
W^\varepsilon_S \leq kT\ln \left(M\left(\frac{G^{(T,H_i)}(\rho)}{1-\varepsilon}||G^{(T,H_f)}(\sigma)\right)\right). 
\end{eqnarray*}
 \end{thmnonumb}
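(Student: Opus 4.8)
The plan is to unfold the definition of the relative mixedness so as to reduce the statement to a single family of integral inequalities, and then to prove those inequalities by iterating the induction step, Lemma~\ref{lem:ind}, across all $m$ steps of an arbitrary strategy. Fix such a strategy, let $w=\exp\!\left(W^{\varepsilon}/kT\right)$ be the demanded work factor, and let $P_S\geq 1-\varepsilon$ be its success probability. By the definition of $M$, the claimed bound $W^{\varepsilon}\leq kT\ln M\!\left(G(\rho)/(1-\varepsilon)\,\|\,G(\sigma)\right)$ is equivalent to $m=w$ being feasible in that definition, i.e.\ to
\[
 \int_0^l G(\rho)^{\downarrow}(x)\,\md x \;\geq\; (1-\varepsilon)\int_0^{lw} G(\sigma)^{\downarrow}(y)\,\md y \qquad \forall\, l,
\]
where $(\cdot)^{\downarrow}$ denotes the descending reordering implemented by the permutations $\tau_t$. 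So it is enough to prove this inequality for every strategy.

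I would start from $\int_0^l G(\rho)^{\downarrow}$, the reordered partial integral of the initial Gibbs-rescaled state (which already upper-bounds the corresponding quantity after any initial thermalization, since a bistochastic map can only flatten a distribution and hence cannot increase its descending partial integrals~\cite{HardyLP52}), and apply Lemma~\ref{lem:ind} once per step. Each application splits a path by the choice $s_{j+1}\in\{0,1\}$: via Lemmas~\ref{lem:wext} and~\ref{lem:wstep_exact} the branch $s_{j+1}=1$ acquires a factor $w^{j+1}$ together with a rescaling $x\mapsto x\,w^{j+1}$ of the integration variable, while the branch $s_{j+1}=0$ is left untouched; the thermalization bistochastic matrix is absorbed exactly as in the proof of Lemma~\ref{lem:ind}. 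Crucially, the two branches have disjoint supports because the cut $a$ of Definition~\ref{def:wext} is chosen independently of the path, so the joint descending reordering decomposes along the branches. Iterating over all $m$ steps accumulates, along each full path $\vec{s}$, the total factor $w_{\vec{s}}=\prod_{j:\,s_j=1}w^{j}$ and a corresponding rescaling of the argument; after the change of variables $y=x\,w_{\vec{s}}$ that absorbs it, the upper limit $l$ becomes $l\,w_{\vec{s}}$, yielding
\[
 \int_0^l G(\rho)^{\downarrow}(x)\,\md x \;\geq\; \sum_{\vec{s}}\Big(\textstyle\prod_j \eta^{j}_{\vec{s}}\Big)\int_0^{l\,w_{\vec{s}}}\big[p^m_{\vec{s},t}\big]^{\downarrow}(y)\,\md y .
\]

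To finish, I would drop the nonnegative contributions of the unsuccessful paths $\vec{s}\notin G$ and use $w_{\vec{s}}\geq w$ on the successful ones (the integrand is nonnegative, so $\int_0^{l w_{\vec{s}}}\geq\int_0^{lw}$), obtaining a lower bound in which every surviving path shares the common upper limit $lw$. The per-path reordered integrals are then recombined into a single reordered integral of their mixture through the superadditivity of the descending partial-integral functional, $\int_0^{L}\big[\sum_{\vec{s}}c_{\vec{s}}\,p_{\vec{s}}\big]^{\downarrow}\leq\sum_{\vec{s}}c_{\vec{s}}\int_0^{L}[p_{\vec{s}}]^{\downarrow}$, applied with $c_{\vec{s}}=\prod_j\eta^{j}_{\vec{s}}$. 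Since $\sum_{\vec{s}\in G}\big(\prod_j\eta^{j}_{\vec{s}}\big)\,p^m_{\vec{s},t}=P_S\,q$ and $q=G(\sigma)$, this gives $\int_0^l G(\rho)^{\downarrow}\geq P_S\int_0^{lw}G(\sigma)^{\downarrow}\geq (1-\varepsilon)\int_0^{lw}G(\sigma)^{\downarrow}$ (using $P_S\geq 1-\varepsilon$), which is precisely the reduced inequality; taking $kT\ln$, and noting the strategy was arbitrary, proves the bound.

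The step I expect to be the main obstacle is the bookkeeping of the reordering operators $\tau^{j}_t$ through the iteration together with the path-dependent changes of variables: at each step one reorders a joint distribution whose branches carry different rescalings, so one must verify that the single descending reordering is compatible with the per-path substitutions $y=x\,w_{\vec{s}}$ and that a bistochastic (thermalization) map can be inserted at every level without disturbing the ordering. This is exactly the content that Lemma~\ref{lem:ind}, the disjointness of the branch supports, and the majorization property of bistochastic maps~\cite{HardyLP52} are designed to supply.
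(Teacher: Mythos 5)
Your proposal is correct and follows essentially the same route as the paper's own proof: reduce the bound to the feasibility of $w$ in the definition of $M$, iterate Lemma~\ref{lem:ind} over the $m$ steps, change variables to move the accumulated factors $w_{\vec{s}}$ into the integration limits, discard unsuccessful paths, bound $w_{\vec{s}}\geq w$ on $G$, and identify the remaining sum with $P_S\int_0^{lw}q$ using $P_S\geq 1-\varepsilon$. The only (harmless) deviation is that where the paper recombines the successful paths into $P_S\,q$ via a single joint reordering exploiting disjoint supports, you invoke subadditivity of the descending partial-integral functional, which reaches the same inequality.
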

\begin{proof}
 Define $p^0_{\vec{s}'}=p$. W.l.o.g. $\vec{s}'=\{0,\ldots,0\}$ (the first probability distribution is independent of the path afterwards).
 Inductively using lemma \ref{lem:ind} one gets:
 \begin{eqnarray*}
   \lefteqn{\int\limits_0^l p(x) \md x} \\
 &=& \int\limits_0^l p^0_{\vec{s}'}(x) \md x \\
 &\geq& \sum\limits_{\vec{s}\in \{0,1\}^m} \int\limits_0^{l}
    \left(\prod\limits_{j=1}^m\eta_{\vec{s}}^j\right)
    \left(\prod\limits_{j=1}^m w_{\vec{s}}^j\right) 
    \tau_t^m \circ p_{\vec{s}}\left(x \prod\limits_{j=1}^m w_{\vec{s}}^j\right) \md x\\
&=& \sum\limits_{\vec{s}\in \{0,1\}^m} \int\limits_0^{l\left(\prod\limits_{j=1}^m w_{\vec{s}}^j\right) }
    \left(\prod\limits_{j=1}^m\eta_{\vec{s}}^j\right)
    \tau_t^m \circ p_{\vec{s}}\left(x \right) \md x\\
&\geq& \sum\limits_{\vec{s}\in G} \int\limits_0^{l\left(\prod\limits_{j=1}^m w_{\vec{s}}^j\right) }
    \left(\prod\limits_{j=1}^m\eta_{\vec{s}}^j\right)
    \tau_t^m \circ p_{\vec{s}}\left(x \right) \md x\\
&\geq& \sum\limits_{\vec{s}\in G} \int\limits_0^{l w }
    \left(\prod\limits_{j=1}^m\eta_{\vec{s}}^j\right)
    \tau_t^m \circ p_{\vec{s}}\left(x \right) \md x\\
 &=& P_S \int\limits_0^{lw} q(x)\md x
\end{eqnarray*}
where $\tau_t^m$ is the permutation which maximizes the expression of the right hand side of the first inequality ($t $ stands for ``after thermalizing'', while $m$ stands for the $m$'th time one applies lemma \ref{lem:ind}). 
Therefore (with $P_S=1-\varepsilon$):
\begin{eqnarray*}
 W^{\varepsilon}&=& kT \ln(w)\\
&\leq& kT \ln\left(\max\left\{ m\left|\int\limits_0^{l} p(x_1)dx_1\geq \int\limits_0^{lm} \left(1-\varepsilon\right)q(x_2)dx_2\,\forall l\right\}\right.\right) \\
&&=kT\ln \left(M\left(\frac{G^{(T,H_i)}(\rho)}{1-\varepsilon}||G^{(T,H_f)}(\sigma)\right)\right).
\end{eqnarray*}
This proves the first part of the main theorem.
\end{proof}

\section{Upper bound $W^\varepsilon$ given by relative mixedness is achievable}\label{achievable}

This section concerns the second statement of the main theorem (theorem \ref{th:main}).
We specify a protocol that achieves the bound given in theorem \ref{th:main}, i.e. it extracts $W^{\varepsilon}$ of work with a failure probability no greater than $\varepsilon$. The protocol is within the rules of the game (defined in section \ref{sec:game}). The protocol works for the initial ($\rho$) and final ($\sigma$) states taking the form $\rho=...\otimes \ket{\xi}\bra{\xi}$ and $\sigma=... \otimes \ket{\xi}\bra{\xi}$, where $\ket{\xi}$ is one of the energy eigenstates of a system with two energy eigenstates in total. This is a small restriction. It amounts to allowing the agent an extra two-level system in a known state, working as a catalyst in the sense that it aids the process but is ultimately unchanged by it.

\subsection{Guiding example}
Before giving the general protocol it is instructive to consider an example. 
We begin with a density matrix $\phi$ with energy eigenvalues $E_i(j)$, occupation probabilities $\lambda_i(j)$ and $A_i$ defined by
 $A_i(j) = \exp\left(\frac{-E_i(j)}{kT}\right)$. These are given by:
\begin{eqnarray*}
	\lambda_i&=& \left( \frac{2}{3}, \frac{1}{3}, 0 \right)\\
	A_i &=& \left( \frac{1}{3}, \frac{1}{3}, \frac{1}{3} \right)
\end{eqnarray*}
and therefore:
\begin{equation}
	p_i(x) = 
		\left\{ 
			\begin{array}{lll}
			 	2 &,\;& x \in \left(0,\frac{1}{3}\right] \\
				1 &,\;& x \in \left(\frac{1}{3},\frac{2}{3}\right] \\
				0 &,\;& x \in \left(\frac{2}{3}, 1\right] 
			\end{array}
		\right.
\end{equation}
The final state we want to reach is defined through:
\begin{eqnarray*}
	\lambda_f&=& \left( \frac{1}{2}, \frac{1}{2}, 0 \right)\\
	A_f &=& \left( \frac{1}{6}, \frac{1}{3}, 0 \right),
\end{eqnarray*}
and therefore:
\begin{equation}
	p_f(x) = 
		\left\{ 
			\begin{array}{lll}
			 	3 &,\;& x \in \left(0,\frac{1}{6}\right] \\
				\frac{3}{2} &,\;& x \in \left(\frac{1}{6},\frac{1}{2}\right] 
			\end{array}
		\right.
\end{equation}
With a risk $\varepsilon = \frac{1}{2}$ the work for this game is limited by $W= kT \ln \left(M\left(\frac{p_i}{1-\varepsilon}||p_f\right)\right)= kT\ln\left(\frac{4}{3}\right)$.
In this example we show how this amount of work can be extracted.

We first want to raise as many energy levels as we can to infinite energy, such that if we succeed (i.e. if these levels are empty and the action therefore costs $0$ work) we start with a more known state.
Unfortunately the sum of the occupation probabilities of the lowest levels will never yield exactly $\varepsilon$, so we need to change this first.

We start by raising the empty energy level to infinite energy, such that even if one mixes it completely with any other energy level it will stay empty.
Then we lower the energy of the empty level, while constantly  mixing this level with the first one. At the same time we enhance the energy of the first level, such that in total the energy of the work reservoir is unchanged with probability $1$ (the details of this action can be found below in definition \ref{def:shift} and the following lemma). We then have:

\begin{eqnarray*}
	\lambda_1&=& \left( \frac{1}{2}, \frac{1}{3}, \frac{1}{6} \right)\\
	A_1 &=& \left( \frac{1}{4}, \frac{1}{3}, \frac{1}{3} \right)
\end{eqnarray*}
\begin{equation*}
	p_1(x) = 
		\left\{ 
			\begin{array}{lll}
			 	2 &,\;& x \in \left(0,\frac{1}{3}\right] \\
				1 &,\;& x \in \left(\frac{1}{3},\frac{2}{3}\right] 
			\end{array}
		\right.
\end{equation*}
The lowest two occupation probabilities now sum up to $\varepsilon$. We enhance the energy of these two levels by doing a work extraction changing the energy of their states by $\infty$.
With probability $1-\varepsilon=\frac{1}{2}$ we get the work $0$ and the state:
\begin{eqnarray*}
	\lambda_2&=& \left( 1, 0,0 \right)\\
	A_2 &=& \left( \frac{1}{4}, 0, 0 \right)
\end{eqnarray*}
\begin{equation*}
	p_2(x) = 
			\begin{array}{lll}
			 	4 &,\;& x \in \left(0,\frac{1}{4}\right] 
			\end{array}
\end{equation*}
which in this case is a pure state (the state would not have been pure if we had chosen $\varepsilon$ to be smaller than $\frac{1}{3}$). 
With probability $\frac{1}{2}$ we get the work $-\infty$, in which case the work extraction cannot be successful in total. So in the case where the work extraction is successful the above state is the only one we need to consider.

Now we extract the work $W= kT\ln\left(\frac{4}{3}\right)$ on all the levels. This succeeds with probability $1$. The state afterwards is given by:
\begin{eqnarray*}
	\lambda_3&=& \left( 1, 0,0 \right)\\
	A_3 &=& \left( \frac{1}{3}, 0, 0 \right)
\end{eqnarray*}
\begin{equation*}
	p_3(x) = 
			\begin{array}{lll}
			 	3 &,\;& x \in \left(0,\frac{1}{3}\right] 
			\end{array}
\end{equation*}
Again we need two levels where we only have one. Acting again as defined in definition \ref{def:shift} on the first two levels we can get:
\begin{eqnarray*}
	\lambda_4&=& \left( \frac{1}{2}, \frac{1}{2},0 \right)\\
	A_4 &=& \left( \frac{1}{6}, \frac{1}{6}, 0 \right)
\end{eqnarray*}
\begin{equation*}
	p_4(x) = 
			\begin{array}{lll}
			 	3 &,\;& x \in \left(0,\frac{1}{3}\right] 
			\end{array}
\end{equation*}
The energy of the second level is now too high and we need to lower it by $kT\ln(2)$:
\begin{eqnarray*}
	\lambda_5&=& 
		\left\{ 
			\begin{array}{lll}
			 	\left( 1,0,0 \right) &,\;& \textnormal{with probability } \frac{1}{2}\\
				\left( 0,1,0 \right) &,\;& \textnormal{with probability } \frac{1}{2}
			\end{array}
		\right.\\
	A_5 &=& \left( \frac{1}{6}, \frac{1}{3}, 0 \right)
\end{eqnarray*}
The work extracted in this step is in both cases at least $0$. So by measuring whether the energy in the work-reservoir
has been enhanced by at least $W= kT\ln\left(\frac{4}{3}\right)$, we get a ``yes'' and the wanted final state with probability $\frac{1}{2}$. 

\subsection{General case}
To make the idea clearer we start giving the general algorithm and will then give the proof of the second part of the main theorem, which builds on lemmas proved later on. We assume here that we have at least $n/2$ energy levels with $0$ occupation probability,
but make sure that in the end these levels have again $0$ occupation probability (note, that this does not change the upper bound for the work). 
We assume that the levels are ordered in descending order
of their Gibbs rescaled probability. 

\begin{defin}[Work extraction algorithm]
Let $p$ and $p_f$ be Gibbs rescaled probability distributions of two states $\rho$ and $\sigma$, with the same number of levels $n$.\\
Let $\rho$, $\sigma$ have at least $n/2$ levels with occupation probabilities $\lambda_e=0$.\\
Define $W= kT \ln(M^{\varepsilon}(p,q))$.
\begin{enumerate}
 \item Do a work extraction on the levels $k+1,\ldots,n$ by $-\infty$ (such that their width becomes $0$).\\
	If there is no $k$ for which $1-\varepsilon = \sum_{i=1}^{k}\lambda(i)$: \\
  Split the level $k$ for which $\sum_{i=1}^{k-1}\lambda(i)<1-\varepsilon < \sum_{i=1}^{k}\lambda(i)$ (see the corollary to lemma \ref{lem:shift}, below).\\
 \item Make a work extraction on all levels by $W$ (i.e. stretch their Gibbs rescaled probability distributions such that it just majorizes the final one).
 \item Thermalize the obtained state to get the final state (up to permutation).
 \item Permute the levels of the obtained state such, that one gets the final state.
\end{enumerate}
\end{defin}
\begin{figure}
 \centering
 \includegraphics{./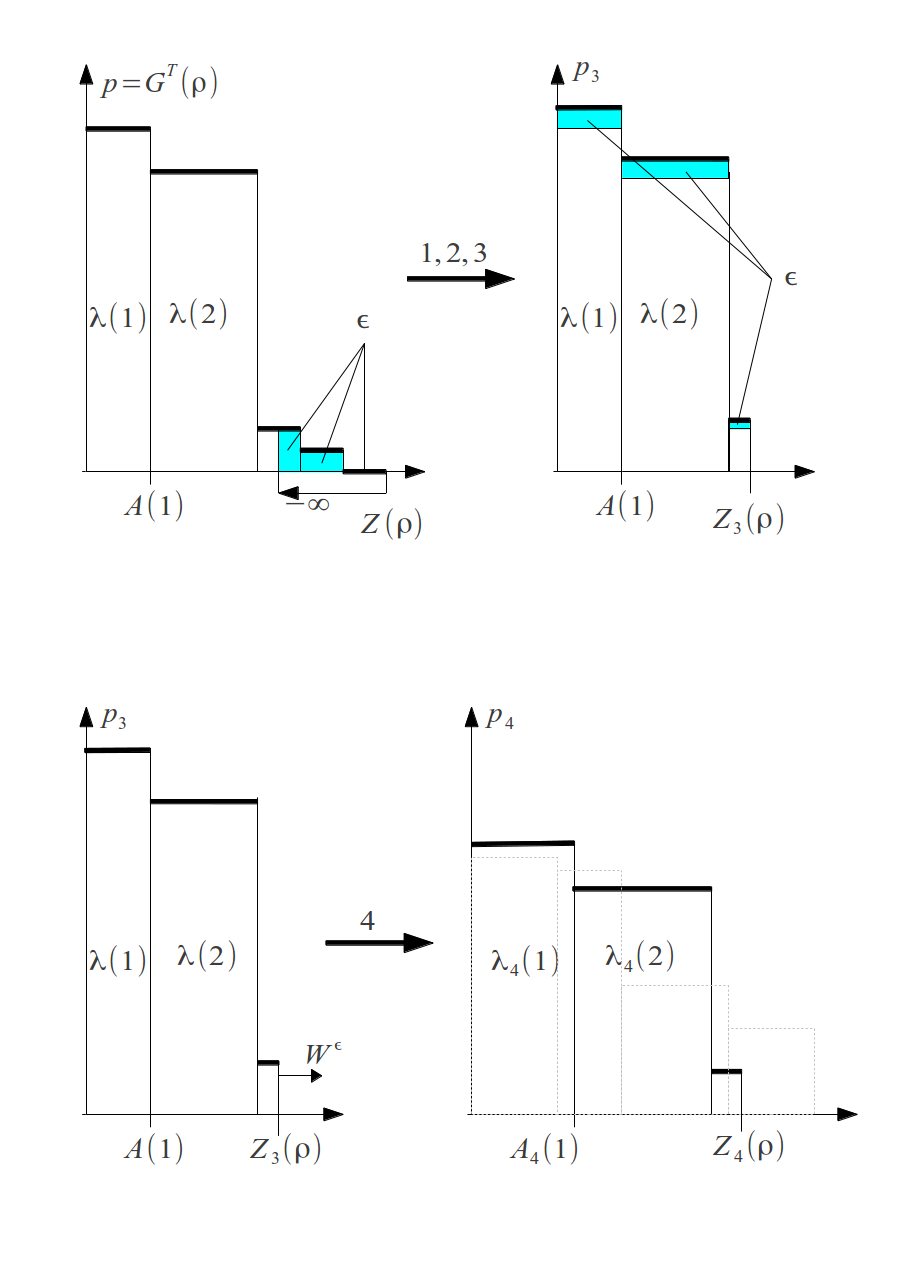}
 \caption{Work extraction algorithm: We choose the last levels such that the sum of their occupation probabilities equals $\varepsilon$, 
	then we lift them to infinity, which succeeds with probability $1-\varepsilon$ (step 1). Afterwards we extract the work $W^{\varepsilon}$ 
	and get a state which still majorises the wanted final one (step 2). 
	Thus we can get to the wanted state by doing a thermalization (step 3, see lemma \ref{lem:ass}).}
 \label{fig:walg}
\end{figure}
\begin{thm}[Bound can be achieved (second part of main theorem)]
 Let $p$ and $p_f$ be Gibbs rescaled probability distributions of two states $\rho$ and $\sigma$, with the same number of levels $n$.\\
	Let $\rho$, $\sigma$ have at least $n/2$ levels with occupation probabilities $\lambda_e=0$.\\
	Define $W= kT \ln(M^{\varepsilon}(p,q))$.\\
	The work extraction algorithm on $\rho$ yields the work $W$ with probability $1-\varepsilon$. 
	If the work extraction is successful, the final state is given by $\sigma$ with probability $1$.
\end{thm}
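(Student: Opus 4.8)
The plan is to follow the working medium through all six steps of the algorithm, maintaining a running tally of (i) its Gibbs-rescaled distribution and (ii) the energy delivered to the reservoir, and to certify that every operation other than the single genuine extraction costs no work by appeal to Lemmas~\ref{lem:shift} and~\ref{lem:ass}. Two steps are immediate: raising the empty levels to infinite energy (step 1) alters only the partition function and, since those levels carry no probability, transfers nothing to the reservoir; and the concluding relabelling (step 6) is a permutation, hence a free thermalisation. Everything then reduces to understanding steps 2--5.

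I would first show that steps 2 and 3 together condition the state on its $1-\varepsilon$ most probable weight. Writing $S_{k-1}=\sum_{i=1}^{k-1}\lambda(i)$ and $\delta=(1-\varepsilon)-S_{k-1}$, the isothermal shift of step 2 between the empty level $e$ and level $k$ by $w=1-\delta/\lambda(k)$ moves probability $\lambda(k)-\delta=S_k-(1-\varepsilon)$ onto $e$ while leaving $\delta$ on $k$, and by Lemma~\ref{lem:shift} it preserves the common Gibbs-rescaled height $\lambda(k)/A(k)$ of the two resulting blocks and costs zero work. Hence levels $k+1,\dots,n,e$ carry total probability exactly $\varepsilon$, so the extraction ``by $-\infty$'' of step 3 fails precisely when the system sits in one of them, i.e.\ with probability $\varepsilon$; in the complementary branch no work is spent and the surviving renormalised distribution is exactly the descending truncation of $p/(1-\varepsilon)$ at cumulative area $1$. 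This pins the success probability to $1-\varepsilon$ and the work so far to $0$.

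The extraction proper is step 4: lowering every surviving level by $W=kT\ln(M^\varepsilon)$ delivers $W$ to the reservoir with certainty, and by Lemma~\ref{lem:wext} each block widens by $M^\varepsilon$ and shortens by $1/M^\varepsilon$, giving the post-extraction distribution $\hat p(x)=\frac{1}{M^\varepsilon}\,\tilde p(x/M^\varepsilon)$, where $\tilde p$ is the truncation from step 3. The decisive step is the majorisation $\hat p\succ q$. Substituting $u=x/M^\varepsilon$ gives $\int_0^l\hat p(x)\,dx=\int_0^{l/M^\varepsilon}\tfrac{p(u)}{1-\varepsilon}\,du$, so feeding $l'=l/M^\varepsilon$ into the defining inequality of $M^\varepsilon=M\!\left(\tfrac{p}{1-\varepsilon}\,\|\,q\right)$, namely $\int_0^{l'}\tfrac{p}{1-\varepsilon}\ge\int_0^{l'M^\varepsilon}q$, yields $\int_0^l\hat p\ge\int_0^l q$ for all $l$ (the truncation only caps the left side at $1$, which still dominates $\int_0^l q\le 1$). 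Normalisation of both distributions then forces $Z_{\hat p}\le Z_q=Z_\sigma$, so the assimilation is applicable.

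With $\hat p\succ q$ established, Lemma~\ref{lem:ass} gives that the assimilation of step 5 reaches a state with the final eigenvalues and energies of $\sigma$ at zero work and unit probability, and step 6 permutes it into $\sigma$ itself. Summing the contributions, the reservoir gains $W=kT\ln(M^\varepsilon)$ with probability $1-\varepsilon$, and conditioned on success the medium is left in $\sigma$, as claimed. I expect the main obstacle to be not the final change of variables but the bookkeeping around steps 2 and 3: one must check that Lemma~\ref{lem:shift} genuinely applies with an \emph{empty} partner level so that the split is free and height-preserving, that enough empty levels survive to supply the $n+1$ ancillas the assimilation consumes, and that conditioning on the successful branch reproduces $p/(1-\varepsilon)$ exactly rather than merely up to normalisation.
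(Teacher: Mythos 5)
Your proposal is correct and follows essentially the same route as the paper's own proof: tracking the six steps of the work extraction algorithm, using Lemma~\ref{lem:shift} to make the tail probability exactly $\varepsilon$ at zero cost, conditioning on the success branch of the $-\infty$ extraction, and invoking Lemma~\ref{lem:ass} once majorisation holds. Your explicit change-of-variables verification that $\hat p\succ q$ after step 4 is a detail the paper merely asserts ``by the definition of $W$,'' so your write-up is, if anything, slightly more complete.
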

\begin{proof}
	The work extraction in step 1. succeeds with probability $\varepsilon$ and if it does not succeed it yields $0$ work (else $-\infty$).\\
	After step 1. the occupation probabilities are given by $\lambda_1 (i)= \frac{\lambda(i)}{1- \varepsilon}$ for $i= 1, \ldots, k$ (post-selecting on the case, in which the state was not one of the less likelier) and $\lambda_1 (i)=0$ else 
	(if the work extraction ``succeeds'' and our algorithm fails). See the corollary to lemma \ref{lem:shift}, below.\\
	After step 2. by the definition of $W$ we have that $p_2(i) \succ p_f(i)$, the extracted work is $W$. 
	Therefore one can thermalize the obtained state to get the final state $\rho$ 
	(up to permutation) with probability $1$ (see lemma \ref{lem:ass}, below).
	After the permutation (if the levels have some special physical meaning) we get the final state $\rho$ with probability 1.\\
	In total we get the final state $\rho$ with probability $1$, if the work extraction succeeds and the extracted work is $W$ with probability $1-\varepsilon$.
\end{proof}

To start, we need some algorithm which allows us to shift some probability from one level to the other, if they are in thermal equilibrium. We only want to change these two levels  (say $j$, $k$), so the sum of their occupation probabilities  remains constant 
($\lambda_j +\lambda_k=const$). Also we hope to be able to do this without needing to do any work, so we keep our total knowledge of these
levels constant. To achieve this it seems a good idea to have $p_j + p_k = const$ and constantly thermal equilibrium. This is the guiding idea for the
following algorithm. Instead of doing this (rather complicated) proof one also could have assumed that one can split levels in a physical fashion (see the corollary to the next lemma for details). Then one would have got the ``isothermal shift'' for free, by simply splitting the level $k$ in two parts and afterwards removing the level $j$. But this would have been a further assumption. So the following definition and subsequent lemma can also be seen to show it possible (in principle) to achieve a splitting of a level by just having one further empty level a heat bath and a work reservoir (which remains untouched with probability $1$).

\begin{figure}
 \centering
 \includegraphics{./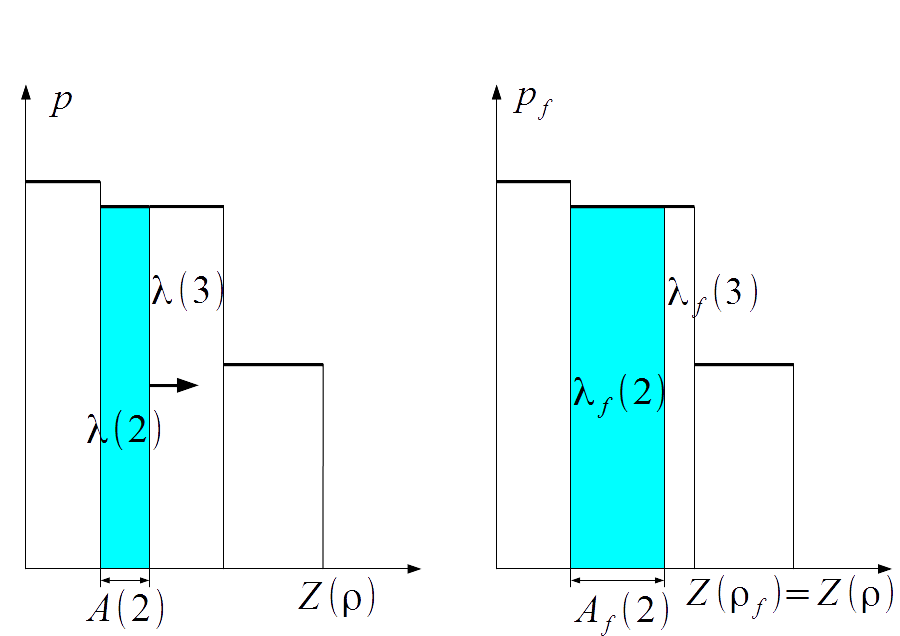}
 \caption{Isothermal shift: The isothermal shift of the boundary between the levels 2 and 3 in direction 3 leaves
	 $p$, $\lambda_2 + \lambda_3$ and $A_2 + A_3$ invariant, while it increases $\lambda_2$ and $A_2$. The work cost is $0$.}
 \label{fig:shift}
\end{figure}

\begin{defin}[Isothermal shift of boundary]\label{def:shift}
 Let $A(j)=\exp\left(\frac{-E_j}{kT}\right)$, where $E_j$ is the energy eigenvalue of the $j$'th level.\\
 Let the levels $j$, $k= j+1$ have the same Gibbs rescaled probability.\\
 We call the limit $n\rightarrow \infty$ of the following process an isothermal shift of the boundary between $j$ and $k$ by 
$w\in\left(-\frac{A(j)}{A(j)+A(k)},\frac{A(k)}{A(j)+A(k)}
\right)$ in direction $k$:
\begin{enumerate}
 \item Do a permutation, which brings the level $j$ in front and level $k$ as second.
 \item Do a work extraction on level $j$ by:
  \[w_1=1+\frac{w}{n}\frac{A(j)+A(k)}{A(j)}\]
 \item Do a permutation, which brings the level $k$ in front and level $j$ second.
 \item Do a work extraction on level $k$ by: 
  \[w_2=1-\frac{w}{n}\frac{A(j)+A(k)}{A(k)}\]
 \item Do a thermalization totally mixing the two levels $j$ and $k$ and letting all others untouched 
	(i.e. the matrix with entries $1/2$ in $(1,1)$, $(1,2)$, $(2,1)$ and $(2,2)$ and $\delta_{m,l}$ everywhere else, 
	such that the first entry of the vector it is applied on, is the probability of the level $j$ after work extraction and 
	the second is the probability of the level $k$).
 \item Restart with 1. $n$ times in total, redefining $A(j)$ and $A(k)$ as above for the probabilities after this process.
 \item Do a permutation, which brings back the levels $j$ and $k=j+1$ at their position at the beginning (we show below, that this is possible).
\end{enumerate}
\end{defin}

Instead of the first four actions, we could have simply said we do extract the work $w_1$ on the level $j$ and the work $w_2$ on the level $k$.
Then we would have had to continue with doing the total mixing also between these levels (instead of at the first and second position of the
matrix) and so on. What we mean here with doing a work extraction on the level $j$ is the action: do a permutation bringing 
the level $j$ in front, extract work, permute the level back. 

In later definitions we will make use of 
this. Here we do not, since the algebra would get slightly more complicated.

The following Lemma shows that the above process costs no work with probability $1$ and that it can indeed be seen as a shift of the separation between the levels.
\begin{lem}[Action of the isothermal shift of boundary]\label{lem:shift}
Let $A(j)=\exp\left(\frac{-E(j)}{kT}\right)$, where $E(j)$ is the energy eigenvalue of the $j$'th level.\\
 Let the levels $j$, $k= j+1$ have the same Gibbs rescaled probability.\\
 After an isothermal shift of the boundary between $j$ and $k$ by $w\in\left(-\frac{A(j)}{A(j)+A(k)},\frac{A(k)}{A(j)+A(k)}
\right)$ in direction $k$:
\begin{enumerate}
 \item \begin{enumerate} \item the energy eigenvalues of all levels but $j$ and $k$ remain constant. 
  \item At the end $A_f(j)=\exp\left(\frac{-E_f(j)}{kT}\right)$ is given by $A_f(j)= A(j) + w (A(j)+A(k))$ and for the level $k$: $A_f(k)= A(k) - w (A(j)+A(k))$ ($E_f(j)$ is the energy of the eigenvalue $j$ after the shift).
\end{enumerate}
 \item with probability $1-(\lambda(j)+\lambda(k))$, the occupation probabilities of the final state are given by $\frac{\lambda(l)}{1-(\lambda(j)+\lambda(k))}$ for $l\neq j,k$ and $0$ for $l= j,k$. 
 \item With probability $\lambda(j)+\lambda(k)$, the occupation probabilities of the final state are given by $\frac{A_f(l)}{A(j)+A(k)}$ for $l= j,k$ and $0$ else.\\
 \item With probability $1$ the energy in the work reservoir is changed by $W=0$.
\end{enumerate}
 \end{lem}

\begin{proof}
1.(a) just follows out of the algorithm, since we did not do any work extraction on any levels and this is the only way we can change energies in
our game.
For 1.(b) we need to look at how the energy eigenvalues of the $j$'th and $k$'th level change each of the $n$ times one goes through the algorithm
in definition \ref{def:shift}. directly from the algorithm we get, that in the first time one goes through it $A(j)$ changes to $A_1(j)=\exp\left(\frac{-E(j)+kT\ln(w_1)}{kT}\right)$ and we get
$A_1(j)=w_1 A(j)= A(j) + \frac{w}{n} \left(A(j)+A(k)\right)$ and by the same argument $A_1(k)=w_2 A(k)= A(k) - \frac{w}{n} \left(A(j)+A(k)\right)$.
Since $A_1(j)+ A_1(k)=A(j)+A(k)$ we see, that after $l$ times one goes through the algorithm, one ends up with: 
$A_l(j)= A(j) + (l-1) \frac{w}{n} \left(A(j)+A(k)\right) + \frac{w}{n} \left(A(j)+A(k)\right)=A(j) + l \frac{w}{n} \left(A(j)+A(k)\right)$ and
$A_l(k)= A(k) - l \frac{w}{n} \left(A(j)+A(k)\right)$. With $l=n$ we get what is stated in 1. (b).

In order to derive 2. and 3. we need to have a closer look at how the occupation probabilities change each of the $n$ times we go through the algorithm. The occupation probabilities are given
by the Gibbs rescaled probabilities multiplied with the corresponding $A(l)$.\\
 Let $q$ be the Gibbs rescaled probability distribution after step 1. of the $i$'th time one goes through the algorithm in definition \ref{def:shift}.
 After step 2. we have:
 \[
q(x)\Rightarrow\left\{\begin{array}{ll}
                          \frac{q\left(\frac{x}{w_1}\right)\Theta_{(0, A_j]}(x)}{w_1 \eta(q_j)}&\textnormal{, with prob. } \eta(q_j)\\
			  \frac{q\left(x-A_jw_1 +A_j\right)\Theta_{(A_j,Z(q)]}(x)}{1-\eta(q_j)}&\textnormal{, with prob. } 1- \eta(q_j)\\
                         \end{array}\right.
\]
where $\eta(q_j)= \int_0^{A_j}q(x)\md x$ and $Z(q)$ is the partition function of $q$. \\
After step 4. we thus have: 
\[
    \left\{\begin{array}{ll}
		\frac{q\left(\frac{x}{w_2}\right)\Theta_{(0,A_k]}(x)}{w_2 \eta(q_k)}&\textnormal{, w. prob. } \eta(q_k)\\
		\frac{q\left(\frac{x-A_j}{w_1}+A_j-A_k w_2 +A_k\right)\Theta_{(A_k,A_k+A_j]}(x)}{w_1 \eta(q_j)}&\textnormal{, w. prob. } \eta(q_j)\\
		\frac{q\left(x-A_jw_1-A_k w_2 +A_k+A_j\right)\Theta_{(A_j+A_k,Z]}(x)}{1- \eta(q_j)-\eta(q_k)}&,\; 1- \eta(q_j)-\eta(q_k)\\
           \end{array}\right.
\]
Noting that $q(x) = q(x/w_2)$ for $x\in(0, A_k]$ and similarly for $x\in (A_j,A_k+A_j]$ and $x-A_jw_1-A_k w_2 +A_k+A_j=x$, we can rewrite this as:
\[
    \left\{
		\begin{array}{ll}
		\frac{q\left(x\right)\Theta_{(0,A_k w_2]}(x)}{w_2 \eta(q_k)}&\textnormal{, w. prob. } \eta(q_k)\\
		\frac{q(x)\Theta_{(A_k w_2,A_k+A_j]}(x)}{w_1 \eta(q_j)}&\textnormal{, w. prob. } \eta(q_j)\\
		\frac{q\left(x\right)\Theta_{(A_j+A_k,Z]}(x)}{1- \eta(q_j)-\eta(q_k)}&,\; 1- \eta(q_j)-\eta(q_k)\\
           \end{array}
		\right.
\]

Which means that after step 5. we get:
\[
 \left\{\begin{array}{ll}
		\frac{q\left(x\right)\Theta_{(0,A_k + A_j]}(x)}{w_2\eta(q_k)}\frac{w_2\eta(q_k)}{\eta(q_j)+\eta(q_k)}&,\;\eta(q_k)\\
		\frac{q(x)\Theta_{(0,A_k+A_j]}(x)}{w_1 \eta(q_j)}\frac{w_1\eta(q_j)}{\eta(q_j)+\eta(q_k)}&,\; \eta(q_j)\\
		\frac{q\left(x\right)\Theta_{(A_j+A_k,Z]}(x)}{1- \eta(q_j)-\eta(q_k)}&,\; 1- \eta(q_j)-\eta(q_k)
           \end{array}\right.
\]

For 2. note that with probability $1-(\lambda(j)+\lambda(k))$ we get after the first time one goes through the algorithm: $q_j=q_k=0$ 
(which just means, that the state is measured to be orthogonal to $j$ and $k$).
And therefore in the subsequent steps we have $\eta(q_j)=\eta(q_k)=0$. So we get with probability $1-(\lambda(j)+\lambda(k))$, the final
probability distribution: 
\[
 \frac{p\left(x\right)\Theta_{(A_j+A_k,Z]}(x)}{1-(\lambda(j)+\lambda(k))}
\]
Since the energy eigenvalues of these levels are unchanged, we get $\frac{\lambda(l)}{1-(\lambda(j)+\lambda(k))}$ 
for $l\neq j,k$ and $0$ for $l= j,k$ for the occupation probabilities, which proves 2.\\
The final Gibbs rescaled probabilities of the levels $j$ and $k$ have the same value (since we completely mix them in step 5.). Their
integral ($\int_0^{A_j+A_k}q(x) \md x$), after the first time one goes through the algorithm keeps $1$ (with probability $\lambda(j)+\lambda(k)$). 
As noticed before, $A_f(j)+A_f(k)=A(j)+A(k)$. Thus we get that with probability
$\lambda(j)+\lambda(k)$ the occupation probabilities of the levels are given by: $\frac{A_f(l)}{A(j)+A(k)}$ for $l= j,k$ and $0$ else. Which proves 3.

Suppose in the first time one goes through the algorithm the
state is orthogonal to the levels $j,k$: then the energy in the work reservoir is unchanged throughout the whole $n$ times
one goes through the algorithm and for this case, 4. follows trivially.\\ 
We now look at the other case (the case where the state is projected onto the levels
$j, k$ the first time one goes through the algorithm). \\
Let $\vec{s}\in\{1,2\}^n$. Define $\sigma(2)=1$ and $\sigma(1)=-1$. Define $\alpha_1 = \frac{A(j)}{A(j)+A(k)}$ and $\alpha_2 = 1-\alpha_1$.\\
In the $l$'th time one goes through the algorithm one either gets the logarithmical work
\begin{eqnarray*}
  w_l(1)&=&1+\frac{w}{n}\frac{A_l(j)+A_l(k)}{A_l(j)}\\
	&=&1+ \frac{w}{n}\frac{A(j)+A(k)}{A(j) + (l-1)\frac{w}{n} \left(A(j)+A(k)\right)} \\
	&=& \frac{\alpha_1+l\frac{w}{n}}{\alpha_1+(l-1)\frac{w}{n}}
\end{eqnarray*}
or the similarly derivable value for $w_l(2)$ ($A_l$ is defined in the proof of 1.(b)). Thus we can write:
\[
 w_l(s_l)= \frac{\alpha_{s_l} + \sigma(s_l) l \frac{w}{n}}{\alpha_{s_l} + \sigma(s_l) (l-1)\frac{w}{n}}
\]
In total we get the logarithmical work:
\[
 w_{tot}=\prod\limits_{l=1}^n \frac{\alpha_{s_l} + \sigma(s_l) l \frac{w}{n}}{\alpha_{s_l} + \sigma(s_l) (l-1)\frac{w}{n}}
\]
with probability (given, that we have the case where the state is projected onto the levels
$j, k$ the first time one goes through the algorithm):
\[
 P(\vec{s}|j \vee k)=\prod\limits_{l=1}^n \left(\alpha_{s_l} + \sigma(s_l) (l-1)\frac{w}{n}\right)
\]
The expectation value of $w_{tot}$ can be computed as follows (for $n<\infty$):
\begin{eqnarray*}
	E(w_{tot})&=&\sum\limits_{\vec{s}} P(\vec{s}|j \vee k) w_{tot}(\vec{s}) \\
 		&=& \sum\limits_{\vec{s}} \prod\limits_{l=1}^n \left(\alpha_{s_l} + \sigma(s_l) l \frac{w}{n}\right) \\
 		&=& \prod\limits_{l=1}^n \left(\sum\limits_{\vec{s}} \alpha_{s_l} + \sigma(s_l) l \frac{w}{n}\right) \\
		&=& \prod\limits_{l=1}^n \left(\underbrace{\alpha_1 + \alpha_2}_{=1} + \underbrace{(\sigma(l) + \sigma(2))}_{=0} l \frac{w}{n}\right)\\
		&=& 1
\end{eqnarray*}

We now look at how much the work $W= \ln(w_{tot})$ changes, if in step $l$ one replaces $s_l$ by $\hat{s}_l$ 
(remember that $s_j=0 \Leftrightarrow \hat{s}_j=1$ and vice versa):
\begin{eqnarray*}
 &&W(s_1, \ldots, s_n) - W(s_1, \ldots,\hat{s}_l,\ldots, s_n) \\
	&& = 
	\ln\left(
		\frac{\alpha_{s_l} + \sigma(s_l) l \frac{w}{n}
			}{\alpha_{s_l} + \sigma(s_l) (l-1)\frac{w}{n}
		}
	\right)
	- 
	\ln\left(
		\frac{\alpha_{\hat{s}_l} + \sigma(\hat{s}_l) l \frac{w}{n}
			}{\alpha_{\hat{s}_l} + \sigma(\hat{s}_l) (l-1)\frac{w}{n}
		}
	\right)
\end{eqnarray*}
with $c=w\sigma(s_l)$ (and therefore $w\sigma(\hat{s}_l)=-c$), $a=\alpha_{s_l}$ (and $\alpha_{\hat{s}_l}=1-a$), $x=a+c\frac{l}{n}$ and $y=1-a-c\frac{l}{n}$ we get:
\begin{eqnarray*}
 &&\left|W(s_1, \ldots, s_n) - W(s_1, \ldots,\hat{s}_l,\ldots, s_n)\right|\\
	&&= 
	\left|\ln\left(
		\frac{x\left(y+\frac{c}{n}\right)}{\left(x-\frac{c}{n}\right)y}
	\right)\right|\\
	&&=
	\left|\ln\left(
		1 + \underbrace{\frac{1}{n}\frac{cy+cx}{xy\left(1-\frac{c}{xn}\right)}}_{z}
	\right)\right|\\
	&&\leq\frac{1}{n}|z|\\ 
	&&=:q_l.
\end{eqnarray*}

Using the McDiarmid inequality~\cite{McDiarmid} we get that the probability that $W$ differs from its expectation value is bounded by:
\[P\left(|W(\vec{s})-E(W)|\geq \delta\right)
	\leq 2 \exp\left(\frac{-2\delta^2}{\sum\limits_l q_l^2}\right)
	\leq 2 \exp\left(\frac{-2\delta^2}{\frac{1}{n}|z|^2}\right)
\]
which tends to $0$ for any $\delta>0$. Therefore we get that the work in this process is given by $0$ with probability $1$, which proves 4.
\end{proof}

\begin{coro}
	Using the above lemma one can split up any level $k$ into two parts by using an empty level $e$:
	\begin{enumerate}
		\item Permuting the levels such, that the empty level $e$ comes before the level k.
		\item Doing a work extraction by $\infty $ on the level $e$ (such that its energy is $\infty$, while its width is $0$, this costs no work, since the level is empty).
		\item Do an isothermal shift of the level $e$ in direction $k$ by $w \in (0,1)$.
	\end{enumerate}
	Then by the previous lemma the final overall distribution is the same as the initial, apart from the two levels $e$ and $k$, which have now occupation probabilities:
	
	\[
		\lambda_f(e)=w \lambda(k),
		\]
		\[
		\lambda_f(k)=(1-w) \lambda(k)
	\]
	and have energies $E$ with $\exp(-E_k/kT)=A$:
	
	\[
		A_f(e)=w A(k),\]
		\[
		A_f(k)=(1-w) A(k).
	\]		
\end{coro}
The corollary directly follows from the lemma. Next we need an algorithm which makes it possible to get the end state $\sigma$ out of the initial state $\rho$, if $p \succ p_f$
(the generalization of the step $4 \rightarrow 5$ in the example).

The idea for the algorithm is that we first take the biggest eigenvalues of $\rho$, such that their area (i.e. the sum of their occupation probabilities) is equal to the biggest occupation probability ($\lambda_f(1)$ of $\sigma$). Then we mix them and make a work extraction, such that their total width (i.e. the sum of $\exp(-E(j)/kT)$) is the same as that of the final energy level $1$. then we continue with the second and so forth.

To write down the algorithm, we first need two definitions simplifying the notation:
\begin{defin}[Generalized sum]
 If $c \in \mathbb{R}$, $c\geq1$, we define $\sum_{i=1}^c d_i := \sum_{i=1}^{\lfloor c \rfloor} d_i + (c-\lfloor c \rfloor)d_{\lceil c \rceil}$.
If $c \in \mathbb{R}$, $0\leq c<1$, we define $\sum_{i=1}^c d_i :=c \cdot d_1$. 
\end{defin}
(Note that the above definition reduces to the usual sum if $c \in \mathbb{N}$).
\begin{defin}[Gibbs-equivalent and Gibbs-expanding]\label{def:geq}
	We say two tuples of $(\rho,H_i)$, $(\sigma,H_f)$ are Gibbs-equivalent (for a given temperature) if they give rise to the same Gibbs-rescaled distribution (where both are defined, $0$ else). A transform is similarly said to be Gibbs-equivalent if it changes a tuple to a Gibbs-equivalent one. Finally a transform is said to be Gibbs-expanding if it changes a tuple $(\rho,H_i)$ to another one $(\sigma,H_f)$ with $G^T(\rho) \succ G^T(\sigma)$.
\end{defin}

\begin{lem}[Optimal Gibbs-expanding transforms] 	\label{lem:ass}
	Let $\rho$, $\sigma$ be two states, diagonal in their energy-basis of dimension $n$.Let $\rho$ and $\sigma$ have at least $n/2$ empty levels. Let $G^T(\rho) \succ G^T(\sigma)$.

	Then one can transform $\rho$ into $\sigma$ with $0$ work with probability $1$.
	
	In other words: Optimal Gibbs-expanding transforms exist and yield at least $0$ work. 
\end{lem}

\begin{proof}
	W.l.o.g. let the levels of $\rho$ and $\sigma$ be ordered in descending order.\\
	Let $\lambda_{i (f)}(j)$ denote the $j$'th level of the initial (final) state.\\
	Define $a_1 \in \mathbb{R}$ as the number of needed levels of $\rho$ s.t. the total area is equal to the area at the end:
	\[
		\sum\limits_{j=1}^{a_1} \lambda_i (j) =\lambda_f (1)
	\]
	(if $a_1 \notin \mathbb{N}$ one needs to split the level $\lceil a_1 \rceil$ as in the above corollary).\\
	Define $c$ as the width of the final first level:
	\[
		\sum\limits_{j=1}^{c} A_i (j) =A_f (1)
	\]
	where $A_{i(f)}(j) = \exp(-E_{i(f)}(j)/kT)$.\\
	Now we get because of $G^T(\rho) \succ G^T(\sigma)$:
	\[
		\int\limits_0^{A_f(1)} G^T(\rho) \md x \geq \int\limits_0^{A_f(1)} G^T(\sigma) \md x
	\]
	which by $A_f (1)=\sum_{j=1}^{c} A_i (j)$ can be stated as:
	\[
		\sum_{j=1}^{c} A_i (j) \cdot \left( \frac{\lambda_i(j)}{A_i (j)} \right) \geq A_f(1) \cdot \left( \frac{\lambda_f(1)}{A_f(1)} \right) =\lambda_f(1)=\sum\limits_{j=1}^{a_1} \lambda_i (j)
	\]
	therefore: $c\geq a_1$ and finally:
	\[
		\frac{\sum\limits_{j=1}^{a_1} \lambda_i (j)}{\sum_{j=1}^{a_1} A_i (j)}
		\geq \frac{\sum\limits_{j=1}^{a_1} \lambda_i (j)}{\sum_{j=1}^{c} A_i (j)}
		= \frac{\lambda_f (1)}{A_f(1)}
	\]
	which means that one can change the energy of the first $a_1$ such that it is equal to the energy of the level $1$ at the end, with 0 risk at no cost, since either successful or not, the energy gained will be at least $0$. The occupation probabilities $\lambda$ will obviously not be changed by this (apart the total mixing of the first $a_1$ levels).\
Now we could go on and prove the same for the second level and so forth, but there is an easier way:\\
The only ingredient we needed for the above reasoning to work was $G^T(\rho) \succ G^T(\sigma)$. But this is equivalent to $G^T(\rho)-K \succ G^T(\sigma)-K$ for any constant $K$, especially for $K=\lambda_f(1)$. Explicitly:
\[
	\int\limits_0^{l} G^T(\rho) \md x -\lambda_f(1) \geq \int\limits_0^{l} G^T(\sigma) \md x -\lambda_f(1) \;\forall l
\]
Remembering $\lambda_f (1)=\sum_{j=1}^{a_1} \lambda_i (j) = \sum_{j=1}^{a_1} A_i (j) \cdot \left( \frac{\lambda_i(j)}{A_i (j)}\right)$ the above can be rewritten as:
\[
	\int\limits_{\sum_{j=1}^{a_1}A_i (j)}^{l} G^T(\rho) \md x  \geq \int\limits_{A_f(1)}^{l} G^T(\sigma) \md x  \;\forall l
\]
i.e. we get the same requirement for the remaining levels. Which means, that we can inductively apply our argument. Since the number of non-empty levels of $\sigma$ is at most $n/2$ it follows that we need at most $n/2$ empty levels to be able to split all the levels at the right place.
\end{proof}

With this lemma we can now classify the operations which cost $0$ work (with risk $0$) and their reverse also costs $0$ work: these are exactly those which do not change the Gibbs-rescaled probability distribution and are optimal:\\
From the above lemma it follows that any optimal Gibbs-equivalent transform costs no work. Secondly, if the initial and the final state are Gibbs-equivalent such a transform exists (again by the above lemma), so it is reversible. On the other hand if a transform is not Gibbs-equivalent either it or its reverse cost more than $0$ work (by the first part of theorem \ref{th:main}). 

As an aside: this, together with the triangle inequality, proves that the symmetrised version of the mixing distance $D(a,b)={{\bf\sf M}}(a||b)+{{\bf\sf M}}(b||a)\geq 0$ is a metric on the set of probability distributions on the positive reals ordered in descending order.

\begin{figure}
 \centering
 \includegraphics{./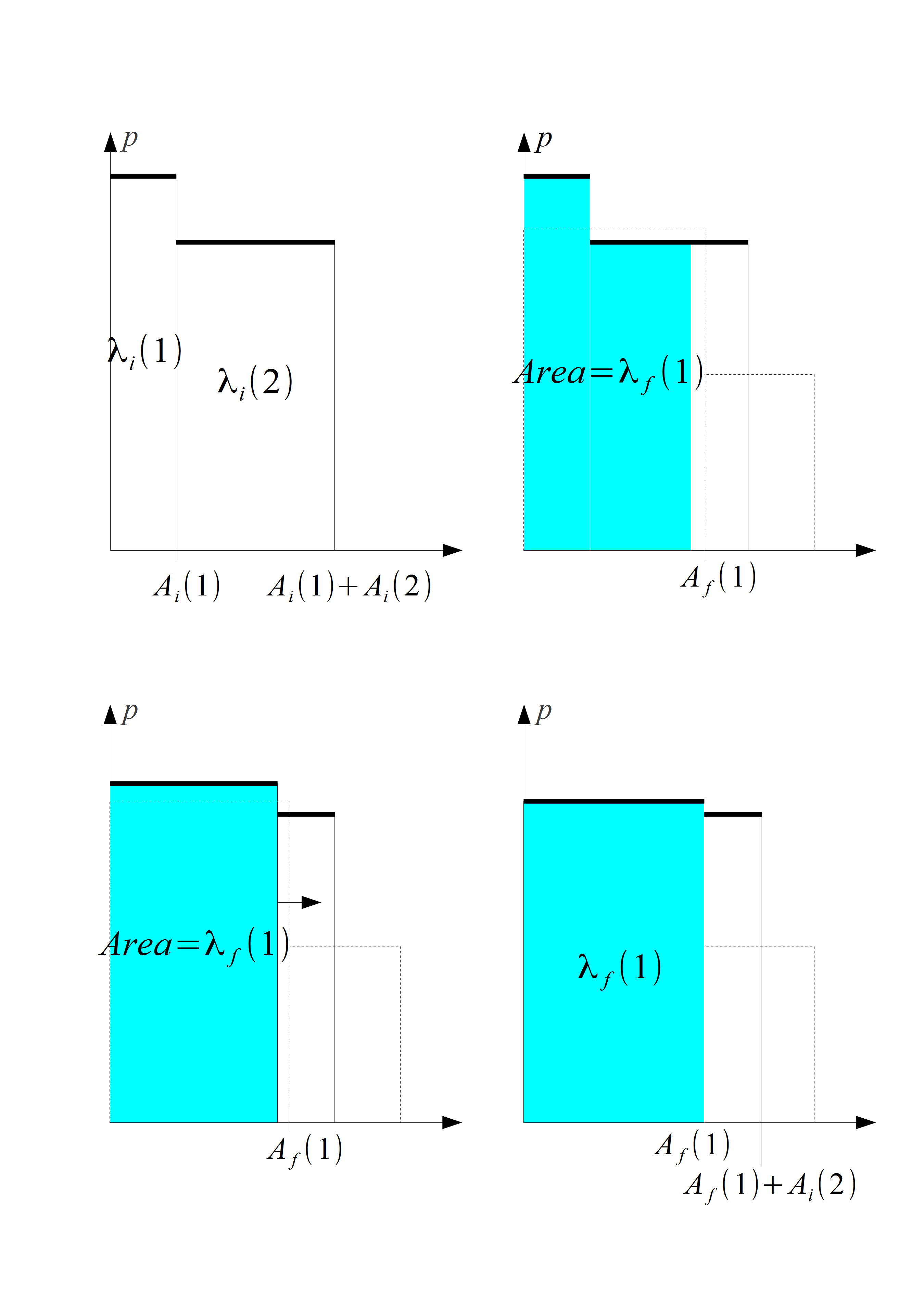}
 \caption{
		Gibbs-expanding transforms: One can get a state $\sigma$ out of a state $\rho$ if $p_i \succ p_f$ (with $p_i$ the Gibbs rescaled probability distribution of $\rho$ and $p_f$ that of $\sigma$), by doing the following steps for each final energy level ($j$): take as many levels (or part of levels) as needed, such that the sum of their occupation probabilities equals the occupation probability of the level $j$ (first and second pictures). Then thermalize and do a work extraction to stretch the distribution to the wanted size (third--to--fourth picture). The final $A_f(j) = \exp(-E(j)/kT)$ is bigger than the initial sum, because of $p_i \succ p_f$---therefore it is really a stretching and not a squeezing: the extracted work is at least $0$.	
	}

 \label{fig:ass}
\end{figure}

\section{Entropy increase law}
Consider the interaction of the working medium system with the heat bath. Let $S$ be the Von Neumann entropy of the system, $\beta$ the inverse temperature associated with the bath, and $\langle E \rangle =\sum_i \lambda_i E_i$ the expected internal energy of the system. This section compares the standard law for entropy increase:
\begin{equation}
\label{eq:entropyincreaselemma}
\Delta S\geq \beta \Delta \langle E \rangle,
\end{equation}
with the one we propose should replace it: 
\begin{equation}
\label{eq:majcond}
W^0(\rho\rightarrow \rho' )\geq 0.
\end{equation}

\subsection {Our model respects standard expression} 
\begin{lem}\label{lem:entropyincrease}In the model for thermalisation used here Eq.~\ref{eq:entropyincreaselemma} is always respected.
\end{lem}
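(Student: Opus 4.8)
The plan is to recast the claimed inequality $\Delta S\geq\beta\Delta\langle E\rangle$ as the statement that the relative entropy of the system's distribution to the Gibbs state can only decrease under a thermalisation, and then to invoke the classical data-processing inequality. First I would recall from Definition~\ref{def:therm} and the three assumptions following it that a thermalisation maps the occupation probabilities as $\vec P\mapsto\vec P'=B\vec P$, where $B$ is a column-stochastic matrix leaving the Gibbs state $\vec g$ invariant, with $g_i=e^{-\beta E_i}/Z$ and $Z=\sum_j e^{-\beta E_j}$, and where the energy levels $E_i$ (and hence $Z$) are untouched. Since the states are diagonal, the eigenvalues coincide with the occupation probabilities, so $\langle E\rangle=\sum_i P_i E_i$ and $S=-\sum_i P_i\ln P_i$.

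The central identity is obtained by writing out the Kullback--Leibler divergence from the Gibbs state. Using $\ln g_i=-\beta E_i-\ln Z$ one has
\[
 D(P\|g)=\sum_i P_i\ln\frac{P_i}{g_i}=\beta\langle E\rangle-S+\ln Z .
\]
Because $B$ is column-stochastic and $Bg=g$, the data-processing inequality gives $D(P'\|g)=D(BP\|Bg)\leq D(P\|g)$. Since the thermalisation leaves the spectrum and therefore $\ln Z$ unchanged, cancelling the common $\ln Z$ yields $\beta\langle E\rangle'-S'\leq\beta\langle E\rangle-S$, which rearranges to exactly $\Delta S\geq\beta\Delta\langle E\rangle$, i.e.\ Eq.~\ref{eq:entropyincreaselemma}. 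Equivalently, this is just the non-increase of the free energy $\langle E\rangle-TS$ under an isothermal stochastic map fixing the Gibbs state.

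The one nontrivial ingredient is the monotonicity of relative entropy under a stochastic map. For the diagonal (classical) setting considered here this is elementary, following from the log-sum inequality (equivalently the joint convexity of $D$), so I would either cite it as standard or supply a one-line derivation; no quantum version is required. The only remaining points need bookkeeping care: verifying that $B$ is genuinely column-stochastic, so that it maps distributions to distributions and data processing applies in the correct direction, and confirming that $Z$ is invariant, which holds because a thermalisation never alters the energy levels. I expect these to be routine, so the main effort lies in setting up the free-energy/relative-entropy reformulation cleanly rather than in any hard estimate.
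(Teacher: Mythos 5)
Your proof is correct, but it takes a genuinely different route from the paper's. The paper first reduces to a single two-level coupling with the bath (asserting that any allowed thermalisation can be realised by concatenating such couplings), then normalises the two affected occupation probabilities to a binary distribution and exploits concavity of the binary entropy: the tangent line at the thermal point has slope $\beta(E_1-E_2)$, which exactly matches the slope of the (linear) energy term, and a three-case analysis of where $\overline{\lambda_1}$ begins and ends relative to its thermal value finishes the argument. You instead work directly with the full model of Definition~\ref{def:therm}: the identity $D(P\|g)=\beta\langle E\rangle-S+\ln Z$ together with monotonicity of classical relative entropy under the Gibbs-preserving stochastic map $B$ (and invariance of $Z$, since the levels are untouched) gives Eq.~\ref{eq:entropyincreaselemma} in a few lines, with no decomposition and no case analysis. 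What your route buys: generality and brevity --- it covers every stochastic $B$ with $Bg=g$ at once, so you never need the paper's implicit claim that allowed thermalisations factor into two-level interactions, which is the one fragile step in the paper's argument. What the paper's route buys: it is self-contained at the level of elementary concavity and does not require the data-processing inequality as an external input. One bookkeeping point: keep the entropy in nats (natural logarithm), as you did, or carry the $\ln 2$ factors consistently, so that the cancellation of $\ln Z$ and the rearrangement to $\Delta S\geq\beta\Delta\langle E\rangle$ are exact.
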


\begin{proof}
We firstly recall the model and define certain notation.

Recall that the thermalisation model states that when two levels, 1 and 2, are coupled to the heat bath, their ratio $\lambda_1/\lambda_2$ gets closer to $\exp(-\beta(E_1-E_2))$, and the other $\lambda$'s are untouched. In our model one may concatenate several such interactions to implement any allowed multi-level interaction with the bath. It will therefore suffice to show that Eq.~\ref{eq:entropyincreaselemma} holds for a single two-level interaction with the heat bath. 

For notational convenience let the probability of being in level 1 or 2 be called $\lambda_{12}:=\lambda_1 + \lambda_2$. This is then constant for the given two-level interaction with the bath. In the extreme case of the two levels interacting with the bath for an arbitrary amount of time we have $\lambda_1:=\lambda_1^{T}$ and $\lambda_2:=\lambda_2^T$ ($T$ reminds us of the temperature dependence). These values must then obey the relation 
\begin{equation}
\lambda_1^T/\lambda_2^T=\exp(-\beta(E_1-E_2))
\end{equation}
We also assume without loss of generality that $E_2\leq E_1$. This implies that $\lambda_1^T\leq 0.5\lambda_{12}$.

Now we begin to prove the statement.
Firstly we simplify $\Delta S$ by noting that only two levels change their probabilities. We write   
\begin{eqnarray*}
S&=&-\sum_i \lambda_i \log \lambda_i\\
 &=& -\lambda_1 \log \lambda_1-(\lambda_{12}-\lambda_{1}) \log (\lambda_{12}-\lambda_{1})  -\sum_{i=3}^{i_{\max}} \lambda_i \log \lambda_i\\
 &\equiv & S_{12}-\sum_{i=3}^{i_{\max}} \lambda_i \log \lambda_i. 
\end{eqnarray*}

We see that in any two-level interaction 
\begin{equation}
\label{eq:s12}
\Delta S=\Delta S_{12}.
\end{equation}

It is helpful to re-express $S_{12}$ in terms of an actual entropy $\overline{S_{12}}$, so that we can use known properties 
of entropies to make statements about $S_{12}$. We let $\overline {\lambda_1}:=\lambda_1/\lambda_{12}$ and $\overline {\lambda_2}:=\lambda_2/\lambda_{12}$ such that $\overline {\lambda_1}+\overline {\lambda_2}=1$. We define   
\begin{eqnarray*}\overline{S_{12}}:=-\overline{\lambda_1} \log \overline{\lambda_1}-\overline{\lambda_2} \log \overline{\lambda_2} .\end{eqnarray*}
One can then see in a few lines of algebra that 
\begin{eqnarray*}S_{12}=\lambda_{12}\overline{S_{12}}-\lambda_{12} \log \lambda_{12}.\end{eqnarray*}
It follows that
\begin{equation}
\Delta S_{12}=\lambda_{12}\Delta \overline{S_{12}}.
\end{equation}

We accordingly now want to show that  
$\lambda_{12}\Delta \overline{S_{12}}\geq \beta \Delta \langle E \rangle.$

We can now use a well known property of the Shannon/von Neumann entropy: $\overline{S_{12}}$ is concave in $\overline{\lambda_1}=\lambda_1/\lambda_{12}$. The function is accordingly upper bounded by any tangential line, as in Figure~\ref{fig:2ndlawboundimg}. 
\begin{figure}[h]
\begin{center}
 \includegraphics[angle=0, width=9cm, height=8cm]{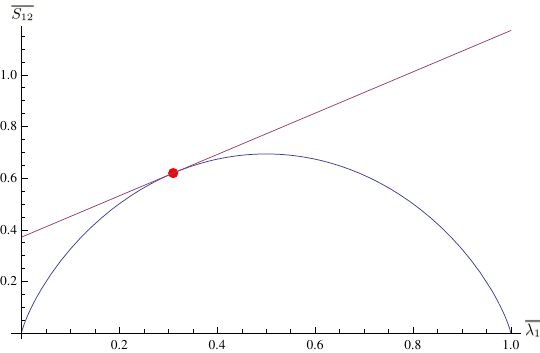} 
 \caption{The entropy $\overline{S_{12}}$ is a function of $\overline{\lambda_1}$. The red dot corresponds to the thermal state in question, i.e. $\overline{\lambda_1}=\overline{\lambda_1}^T$. The tangential upper bound has gradient $\beta (E_2-E_1)$. } 
 \end{center}
\label{fig:2ndlawboundimg} 
\end{figure}
Consider the tangential line at $\lambda_1=\lambda_1^T$. 
At that point it follows from a few lines that 
\begin{equation}
\label{eq:gradient}
\frac{d}{d\lambda_1}S_{12}|_{\lambda_1=\lambda_1^T}= \frac{d}{d\overline{\lambda_1}}\overline{S_{12}}=\beta(E_1-E_2).
\end{equation}
 
Note now that $\langle E\rangle$ may similarly to the entropy be written as  
\begin{eqnarray*}
\langle E\rangle&=&-\sum_i \lambda_i E_i\\
 &\equiv & \langle E\rangle_{12}+\langle E\rangle_{rest}, 
\end{eqnarray*}
such that $\Delta \langle E\rangle=\Delta \langle E\rangle_{12}=(\Delta \lambda_1)(E_1-E_2)$, with $\Delta \lambda_1=\lambda_1'-\lambda_1$ the change in $\lambda_1$. So $\langle E\rangle(\lambda_1)$ is a line with gradient given by 
\begin{eqnarray*}\frac{\Delta \langle E\rangle}{\Delta \lambda_1}=E_1-E_2.\end{eqnarray*}
Similarly 
\begin{eqnarray*}\frac{\Delta \langle E\rangle}{\Delta \overline{\lambda_{1}}}=\frac{1}{\lambda_{12}}(E_1-E_2).\end{eqnarray*}

Comparing this with the gradient of the tangential line to $\overline{S_{12}}$ in Eq.~\ref{eq:gradient}, we see that $\frac{1}{\lambda_{12}}\beta \langle E\rangle_{12}$ {\em has the same gradient} as the tangential line. We therefore only need to show that the change in the tangential line is upper bounded by the change in the entropy curve, as it is equivalent to  showing that $\Delta \overline{S_{12}}\geq \frac{1}{\lambda_{12}}\beta \langle E\rangle_{12}$. This must hold for all possible initial and final values of $\overline{\lambda_1}$ and all possible values of $\overline{\lambda_1}^T$(recall that we assumed without loss of generality that $\overline{\lambda_1}^T\geq 0.5$ ). These can be grouped into three cases. 

\begin{enumerate}
 \item $\overline{\lambda_1}\leq \overline{\lambda_1}^T$. Here the tangential bound above implies that $\Delta \overline{S_{12}}\geq \frac{\beta}{\lambda_{12}} \langle E\rangle_{12}\geq 0$.  
 \item ${\lambda_1}^T\leq \overline{\lambda_1}\leq 0.5$. Here the tangential bound implies that $0\geq \Delta \overline{S_{12}}\geq \frac{\beta}{\lambda_{12}} \langle E\rangle_{12}$.   
 \item $\overline{\lambda_1}\geq 0.5$, also after the interaction. Here the tangential bound implies that $\overline{\Delta S_{12}}\geq 0 \geq \frac{\beta}{\lambda_{12}} \langle E\rangle_{12}$.   
 \end{enumerate}  
This implies the lemma.
\end{proof}

\subsection{Evolutions respecting standard expression may violate Kelvin's second law}
Recall that our condition on thermalising evolutions was stronger than Eq. \ref{eq:entropyincreaselemma}. There are, as mentioned in the main body, examples of evolutions that respect Eq. \ref{eq:entropyincreaselemma} but violate our condition: Eq~\ref{eq:majcond}. In this subsection we consider whether these evolutions may violate Kelvin's second law: {\em No process is possible in which the sole result is the absorption of heat from a reservoir and its complete conversion into work.}

We use standard results concerning majorisation, as well as our main theorem. We will consider degenerate energy levels for simplicity so that Eq.~\ref{eq:entropyincreaselemma} reduces to $\Delta S\geq 0$. We now only assume that the evolution is represented by a stochastic matrix (which it is if the map is Markovian). We do not assume it is the type of thermalisation used hitherto as that would automatically respect Eq\ref{eq:majcond}. 

\begin{lem}Any stochastic matrix A which for some state violates Eq.\ref{eq:majcond} but
respects the entropy condition $\Delta S\geq 0$ will for some input state, namely the uniform distribution, violate $\Delta S\geq 0$.
\end{lem}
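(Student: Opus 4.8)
The plan is to argue by contradiction, reducing the majorisation condition Eq.~\ref{eq:majcond} to a statement about double stochasticity and then exploiting the maximality of the entropy of the uniform distribution. First I would unpack the hypothesis in the degenerate regime. Since all energies coincide, the weights $e^{-E_i/kT}$ are equal and the Gibbs-rescaling $G^T$ acts as a trivial overall scaling that leaves majorisation relations intact; hence by the main theorem (with $\varepsilon=0$) we have $W^0(\rho\to\rho')=kT\ln M(\rho\|\rho')$, so that Eq.~\ref{eq:majcond} is equivalent to $M\ge 1$, i.e.\ to $\rho\succ A\rho$. Thus the assumption that $A$ violates Eq.~\ref{eq:majcond} on some state simply says that there exists a probability vector $p$ with $p\not\succ Ap$.

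Next I would record the two standard ingredients. By the Hardy--Littlewood--P\'olya theorem~\cite{HardyLP52} (the same majorisation fact already invoked in Lemma~\ref{lem:ind}), a stochastic matrix $A$ satisfies $Ap\prec p$ for \emph{every} probability vector $p$ if and only if $A$ is doubly stochastic. Second, among all distributions on a fixed number of outcomes the uniform distribution $u=(1/n,\dots,1/n)$ is the unique maximiser of the Shannon entropy, so $S(v)\le S(u)$ for every distribution $v$, with equality precisely when $v=u$.

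With these in hand the argument is short. Suppose, for contradiction, that the given $A$ nonetheless respects $\Delta S\ge 0$ on the uniform input, i.e.\ $S(Au)\ge S(u)$. Since $A$ is (column-)stochastic, $Au$ is again a normalised probability distribution, so entropy maximality forces $S(Au)\le S(u)$; combining the two inequalities gives $S(Au)=S(u)$ and therefore $Au=u$. For a column-stochastic matrix, $Au=u$ means exactly that every row sum equals $1$, i.e.\ that $A$ is doubly stochastic. But then the Hardy--Littlewood--P\'olya direction yields $Ap\prec p$ for all $p$, so $A$ respects Eq.~\ref{eq:majcond} on every state, contradicting the assumption that it fails majorisation somewhere. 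Hence $S(Au)<S(u)$, which is precisely the assertion that $A$ violates $\Delta S\ge 0$ on the uniform distribution.

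The only delicate link in this chain is the middle equivalence: that $Au=u$ (together with column-stochasticity) is the same as double stochasticity, and that via Hardy--Littlewood--P\'olya this in turn forces majorisation for \emph{all} inputs, not merely the given one. Once that implication is secured, the remainder follows immediately from the strict concavity of the entropy at the uniform point, and I anticipate no computational obstacle.
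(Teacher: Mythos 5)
Your proposal is correct and follows essentially the same route as the paper: both rest on the Hardy--Littlewood--P\'olya equivalence between bistochasticity and respecting the majorisation condition for all inputs, the fact that a stochastic matrix fixes the uniform distribution iff it is bistochastic, and the unique entropy-maximality of the uniform distribution. Your only difference is packaging the argument as a contradiction rather than the paper's direct two-step deduction, which is a cosmetic rearrangement.
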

\begin{proof}
(i) Eq.\ref{eq:majcond} is respected iff the matrix is bistochastic. Thus A is NOT bistochastic.\\
(ii) The uniform distribution is invariant under a stochastic matrix iff it is bistochastic.
Thus A does NOT preserve the uniform distribution. Now the uniform distribution is unique in having maximal von Neumann entropy. Thus $\Delta S\geq 0$ is violated if the input state is the uniform distribution.
\end{proof}

\begin{lem} Consider a state changing to another one. Suppose: (i) the von Neumann entropy is increased, (ii) Eq.\ref{eq:majcond} is violated , and (iii) the evolution is a stochastic matrix. Then this evolution--applied to the thermal state--would allow for the violation of Kelvin's second law within our game: deterministic work extraction would be possible from a cycle where the system is in the thermal state both initially and finally.
\end{lem}
\begin{proof} Recall that we are for simplicity considering degenerate energy levels in this subsection. The thermal state is then the uniform distribution. Apply A to this (at no work cost as it represents an interaction with the heat bath). Now we have a state $\sigma$ other than the uniform distribution, so it must majorise the uniform distribution. 

To see that this implies {\em deterministic} work extraction we firstly show that $W^0 >0$ for some process using A and allowed operations within the game. Consider taking $n$ copies of $\sigma$ and going to the von Neumann limit by taking $n$ to infinity as well as taking the risk of failure $\varepsilon$ to 0. To evaluate $W^\varepsilon$ in this limit it is convenient to use Theorem~\ref{thm:max_ent} which re-expresses $W^\varepsilon$. Recall that in the von Neumann limit the smooth max entropy reduces to the von Neumann entropy $S$. We therefore have, for the case of degenerate levels:
 \begin{eqnarray*}\lim_{n\rightarrow \infty, \varepsilon \rightarrow 0} \frac{W^{\varepsilon}(\sigma^{\otimes n} \rightarrow \tau^{\otimes n})}{n}=(H_{\max}(\tau)-S(\sigma))kT\ln2,\end{eqnarray*}
where we have also used the well-known additivity of both entropies: $H_{\max}(\rho^{\otimes n})=nH_{\max}(\rho)$ and $S(\rho^{\otimes n})=nS(\rho)$ 
 In this case $\tau=\id/d$, i.e. the maximally mixed state associated with a $d$-dimensional Hilbert space. Moreover $H_{\max}(\id/d)-S(\sigma)>0$ since the uniform distribution is unique in having maximal von Neumann entropy and $H_{\max}\geq S$. Thus $W^0 >0$ for that process.

Recall secondly the subtlety that we proved that $W^{\varepsilon}(\sigma\rightarrow \sigma')$ is {\em achievable} within the game when there is access to a catalyst system. Consider extracting work from $n$ copies of $\sigma\otimes \ket{\xi}\bra{\xi}$ which will be set to $n$ copies of $\id/d\otimes \ket{\xi}\bra{\xi}$ at the end. Now $H_{\max}(\id/d \otimes \ket{\xi}\bra{\xi} )-S(\sigma \otimes \ket{\xi}\bra{\xi})>0$ as neither entropy of a state is changed by adding a pure system in this way. Thus including the catalyst system does not change the statement that $W^0>0$ for the above procedure in the von Neumann limit.  Accordingly this process violates Kelvin's law.
\end{proof}

\section{Recovering the relative min-entropy}
We now show that when restricting our main theorem to the appropriate limit we recover the result of eq.~\ref{eq:AbergOppenheim} which, as discussed in the main body, was given in~\cite{Aberg11,HorodeckiO11}. Recall that this statement was
\begin{equation*}
\label{eq:AbergOppenheim2}
W^\varepsilon=kT\ln (2) D_0^\varepsilon (\sigma||\rho_T),
\end{equation*}
which should hold for the case where the final state $\rho_T$ is a thermal state on the same energy levels as the initial state $\sigma$. 

The definition of $D_0^\varepsilon(.||.)$ is as given in~\cite{Datta09} (where it is called $D_{\min}$): $D_0(\rho||\sigma):=-\log Tr(\Pi_{\rho}\sigma)$, where $\Pi_{\rho}$ is the projector onto the support of $\rho$. The smooth version is defined as $D_0^{\varepsilon}(\rho||\sigma):=\sup_{\bar{\rho}\in B^{\varepsilon}(\rho)} D_0(\bar{\rho}||\sigma)$, where $B^{\varepsilon}(\rho)$ is the set of states within $\varepsilon$ trace distance of $\rho$.   

One may first consider the special case of degenerate energy levels, as in~\cite{DahlstenRRV11} (recall that it was shown in~\cite{Aberg11} that this is a special case of~\ref{eq:AbergOppenheim}). In this case the final state (even without the Gibbs rescaling) is a uniform distribution with support $d$ at least as large as that of the initial state and taken to physically correspond to the system dimension (for $n$ qubits or bits $d=2^n$). The relative entropy expression becomes in this case 
\begin{eqnarray*}D_0^{\varepsilon}(\rho || d^{-1}\id )=\log{d}-H_{\max}^{\varepsilon}(\rho).\end{eqnarray*}
To check that this agrees with the relative mixedness expression note that the 'stretching factor' $m$ where $M(\rho||\sigma)=\log m$ is given by $m=\frac{\| \rmsupp(q)\|}{\| \rmsupp(p^{\varepsilon})\|}$. It follows that the two expressions do indeed agree in this case.

We now consider the case of non-degenerate levels. We begin with deriving the relative mixedness expression for a more general case, where the final state is some thermal state but not necessarily of the same Hamiltonian. Then we specialise to the case where it is of the same Hamiltonian, and show that the relative entropy expression is recovered.

\begin{thm}\label{thm:max_ent}
 \[
   W^{\varepsilon} = kT \ln (2) \left(H_{\max}(q) - H_{\max}^{\varepsilon}(p)\right)
 \]
where $p= G^T(\rho)$ is the Gibbs rescaled probability distribution corresponding to the initial state $\rho$ and $q=G^T(\sigma)$ is the 
one corresponding to the final thermal state $\sigma$.
\end{thm}
For the proof of this theorem a technical lemma on the smooth max-entropy is needed.
\begin{lem}\label{lem:supp_eps}
 Let $p$ be a monotonously falling probability function on $[0,\infty)$ and $d_{\varepsilon}$ be defined through 
\[
  \int\limits_0^{d_{\varepsilon}} p(x)/(1-\varepsilon) \md x =1
\]
Then:
\[
  d_{\varepsilon} = 2^{H_{\max}^{\varepsilon}(p) }
\]
\end{lem}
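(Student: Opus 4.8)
The plan is to unfold the definition of the smooth max-entropy and reduce the statement to a simple extremal-set problem. In the continuous setting the smooth max-entropy of a distribution $p$ is the minimal logarithm of the support measure over the $\varepsilon$-ball,
\[
  \smoothhmax(p) = \min_{q\in B^\varepsilon(p)} \log_2\big(|\rmsupp(q)|\big),
\]
where $B^\varepsilon(p)$ is the set of (sub-normalised) distributions that differ from $p$ by at most weight $\varepsilon$ and $|\rmsupp(q)|$ is the Lebesgue measure of the support (the continuous analogue of $\rank^\varepsilon$). I would first observe that, for a prescribed support set $S$, discarding mass is always cheaper than redistributing it, so the optimal candidate is the restriction $q=p\cdot\mathbbm{1}_{S}$, whose discarded weight is $\int_{S^{c}}p = 1-\int_{S}p$. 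Membership in $B^\varepsilon(p)$ is then equivalent to $\int_{S}p\ge 1-\varepsilon$, and the whole problem becomes: find the set $S$ of smallest measure on which $p$ carries probability at least $1-\varepsilon$. Establishing $d_\varepsilon = 2^{\smoothhmax(p)}$ reduces to showing this smallest measure is exactly $d_\varepsilon$.

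For the upper bound I would exhibit an explicit minimiser. Taking $S=(0,d_\varepsilon]$ and $q=p\cdot\mathbbm{1}_{(0,d_\varepsilon]}$, the defining relation $\int_0^{d_\varepsilon} p(x)\,\md x = 1-\varepsilon$ shows that $q\in B^\varepsilon(p)$ (its missing weight is precisely $\varepsilon$), while $|\rmsupp(q)| = d_\varepsilon$. Hence $\smoothhmax(p)\le\log_2 d_\varepsilon$.

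The substantive direction is the lower bound, where the monotonicity of $p$ does the work. For any admissible $S$ one needs $\int_{S} p \ge 1-\varepsilon = \int_0^{d_\varepsilon} p$. I would invoke the bathtub principle: since $p$ is non-increasing, among all sets of a fixed measure $|S|$ the left-most interval $(0,|S|]$ captures the maximal possible mass $\int_0^{|S|}p$. Consequently, if $|S|<d_\varepsilon$ then $\int_{S} p \le \int_0^{|S|} p < \int_0^{d_\varepsilon} p = 1-\varepsilon$, contradicting admissibility. Thus every admissible support has measure at least $d_\varepsilon$, giving $\smoothhmax(p)\ge\log_2 d_\varepsilon$. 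Combining the two bounds yields $\smoothhmax(p)=\log_2 d_\varepsilon$, equivalently $d_\varepsilon = 2^{\smoothhmax(p)}$.

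The hard part will be pinning down the precise smoothing convention so that the budget matches the clean factor $1/(1-\varepsilon)$ in the statement --- in particular whether the $\varepsilon$-ball is measured in $L^1$ norm, in trace distance (with its factor $\tfrac12$), or as a sub-normalised truncation, since these shift the answer by the $\log(1/\varepsilon)$-type constants alluded to earlier. Once the convention is fixed to ``discard at most weight $\varepsilon$'', the rearrangement argument is routine; the only measure-theoretic care needed is to justify the bathtub step for a general monotone $p$ with possible flat pieces or atoms, where ties are broken by including the left-most mass first.
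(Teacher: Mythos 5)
Your proposal is correct and follows essentially the same route as the paper: the paper's part (i) exhibits the renormalized truncation $p\cdot\mathbbm{1}_{(0,d_\varepsilon]}/(1-\varepsilon)$ as the explicit minimizer, and its part (ii) is your bathtub step, written out as a chain of integral inequalities (and stated only for monotone smoothing candidates, which is where your version is actually cleaner --- your argument applies to an arbitrary support set $S$, whereas the paper needs the candidate's support to be an initial interval; that restriction is harmless only because decreasing rearrangement does not increase $L^1$ distance, a point the paper leaves implicit).

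The one loose end you flag --- the smoothing convention --- does need to be closed, because guessing wrongly changes the answer: trace distance applied to sub-normalized truncations would give an effective budget of $2\varepsilon$ and hence $d_{2\varepsilon}$. The paper smooths over \emph{normalized} distributions within trace distance $\delta(p,q)=\tfrac{1}{2}\int|p-q|\,\md x\le\varepsilon$, and under that convention your reduction survives intact, for the following reason. For normalized $q$ supported on $S$, normalization forces $\int (q-p)\,\md x=0$, so the positive and negative parts of $q-p$ carry equal weight and
\[
\delta(p,q)=\int_{\{q<p\}}(p-q)\,\md x\;\ge\;\int_{S^{c}}p\,\md x\;=\;1-\int_{S}p\,\md x,
\]
with equality attained by the renormalized truncation (the paper's own candidate). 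The factor $\tfrac{1}{2}$ is thus exactly compensated by the mass that renormalization must re-deposit inside $S$: admissibility of a support $S$ is equivalent to $\int_S p\,\md x\ge 1-\varepsilon$, which is precisely the criterion your ``discard at most weight $\varepsilon$'' convention produces. With this identification your explicit upper bound and bathtub lower bound prove the lemma as stated; the only remaining care is the tie-breaking you already noted (take $d_\varepsilon$ minimal when $p$ vanishes on a stretch), a point the paper glosses over as well.
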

\begin{proof}
Let $d_{\varepsilon}$ be defined as above.
We need to show two things:
\begin{description}
 \item[i)] $\exists p^{\varepsilon}$ probability function on $[0, \infty)$ with $\| \rmsupp(p^{\varepsilon}) \| = d_{\varepsilon}$ 
   and trace-distance $\delta(p,p^{\varepsilon}) < \varepsilon$.
 \item[ii)] $\| \rmsupp(p^{\varepsilon}) \| \geq d_{\varepsilon}$ $\forall$ $p^{\varepsilon}$ monotonously decreasing probability functions on 
   $[0, \infty)$ with $\delta(p,p^{\varepsilon}) < \varepsilon$.
\end{description}
Then we get that $H_{\max}^{\varepsilon} (p) = \log_2\left(\min_{\delta(p,p^{\varepsilon})<\varepsilon} (\| \rmsupp(p^{\varepsilon}) \| )\right) = 
\log_2(d_{\varepsilon})$,
as said in the lemma.
The proof of i) goes as follows:
Define $p^{\varepsilon}(x)=p(x) \left(\int_0^{d_{\varepsilon}} p(x)\right)^{-1}$ for $x\leq d_{\varepsilon}$ and  $p^{\varepsilon}(x)=0$ for $x>d_{\varepsilon}$. 
This $p^{\varepsilon}$ is therefore normalized to one, has support $[0,  d_{\varepsilon}]$ and the following equation shows that it is also $\varepsilon$-near
to $p$:
\begin{eqnarray}
	\delta(p,p^{\varepsilon}) &=& \frac{1}{2}\left(\int\limits_{0}^{\infty} \left|p^{\varepsilon}(x) - p(x)\right|  \md x\right)\nonumber\\
	&=& \frac{1}{2}\left(\int\limits_{0}^{d_{\varepsilon}} \left|p^{\varepsilon}(x) - p(x)\right| \md x + \int\limits_{d_{\varepsilon}}^{\infty} p(x) \md x\right) \nonumber\\
	&=& \frac{1}{2}\left(\int\limits_{0}^{d_{\varepsilon}} \left(p^{\varepsilon}(x) - p(x)\right) \md x + \int\limits_{d_{\varepsilon}}^{\infty} p(x) \md x \right) \nonumber\\
	&=& \frac{1}{2}\left(1-\int\limits_{0}^{d_{\varepsilon}} p(x) \md x+ \int\limits_{d_{\varepsilon}}^{\infty} p(x)\md x \right)\nonumber\\
	&=&\int\limits_{d_{\varepsilon}}^{\infty} p(x) \md x\nonumber \\
	&<& \varepsilon \nonumber
\end{eqnarray}
which concludes the proof of i). ii) is proven on the next page (for typographical reasons).
\newpage
For the proof of ii) assume, that: $\exists p^{\varepsilon}$ like above, s.t. $\|\rmsupp(p^{\varepsilon})\| \leq d_{\varepsilon}$, then:
\begin{eqnarray*}
	\lefteqn{\frac{1}{2}\left(\int\limits_{0}^{\infty} \left|p^{\varepsilon}(x) - p(x)\right|\md x \right) }\\
	 &=&  \frac{1}{2}\left[\int\limits_{0}^{d_{\varepsilon}} \left|p^{\varepsilon}(x) - p(x)\right| \md x + 
	  \underbrace{\int\limits_{d_{\varepsilon}}^{\infty} \underbrace{\left|\overbrace{p^{\varepsilon}(x)}^{=0:\;x>d_{\varepsilon}} - p(x)\right|}_{p(x)}\md x}_{\geq \varepsilon} \right]\\
	&\geq& \frac{1}{2}\left[
		\varepsilon 
		+ 			
				\int\limits_{0}^{d_{\varepsilon}} \left(p^{\varepsilon}(x) - p(x)\right) \md x 	
		\right] \\
	&\geq& \frac{1}{2}\left[
		\varepsilon 
		+ (1-1)
		+ \underbrace{\int\limits_{d_{\varepsilon}}^{\infty} \left(p(x)-p^{\varepsilon}(x)\right) \md x}_{\geq \varepsilon}
		\right] \\
	&\geq& \varepsilon
\end{eqnarray*}
which is a contradiction to 
$\delta(p,p^{\varepsilon}) =\frac{1}{2}\left(\int\limits_{0}^{\infty} \left|p^{\varepsilon}(x) - p(x)\right| \right) < \varepsilon$.
\end{proof}
Now we have all we need to prove the theorem above:
\begin{proof}
let $p^{\varepsilon}$ be a probability function with the smallest possible support such that $\delta(p,p^{\varepsilon}) \leq \varepsilon$
and define $d_{\varepsilon}$ as in lemma \ref{lem:supp_eps}. For $l \leq d_{\varepsilon}$ the requirement for maximal work extraction reads (using the lemma)
 \begin{eqnarray*}
    \int\limits_0^l \frac{p(x)}{1-\varepsilon} \md x &\geq& \frac{l}{d_{\varepsilon}} \int\limits_0^{d_{\varepsilon}} \frac{p(x)}{1-\varepsilon} \md x
    = \frac{l}{\| \rmsupp(q)\| } \frac{\| \rmsupp(q)\| } {\| \rmsupp(p^{\varepsilon})\| } \\
&&= \int\limits_0^{l \frac{\| \rmsupp(q)\| } {\| \rmsupp(p^{\varepsilon})\| }} q(x) \md x
 \end{eqnarray*}
The above is an equation in the case $l= d_{\varepsilon}$. Which shows that the maximal $w$ as defined in theorem \ref{th:main} is given by
\[w = \frac{\| \rmsupp(q)\| } {\| \rmsupp(p^{\varepsilon})\| }= 2^{ \left(H_{\max}(q) - H_{\max}^{\varepsilon}(p)\right)}\]
\end{proof}
Eq.~\ref{eq:AbergOppenheim} is a special case of the above theorem, recovered when the final state is a Gibbs state and has also the same energy eigenvalues as the initial.
\begin{coro}
Let $\rho$ be a diagonal state with energy eigenvalues $E_i$ and $\sigma^T$ be the Gibbs state with the same energy eigenvalues $E_i$ at the bath temperature $T$. Then the maximal extractable work at risk $\varepsilon$ is given by:
  \[
   W^{\varepsilon} = kT \ln (2)  D^{\varepsilon}_{0} (\rho,\sigma^T)
 \]
\end{coro}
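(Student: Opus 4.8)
The plan is to derive the corollary directly from Theorem~\ref{thm:max_ent}, which already supplies $W^\varepsilon = kT\ln(2)\,\bigl(H_{\max}(q) - H_{\max}^\varepsilon(p)\bigr)$ with $p = G^T(\rho)$ and $q = G^T(\sigma^T)$. It therefore suffices to show that, when $\sigma^T$ is the Gibbs state sharing the energy levels $E_i$ of $\rho$, the combination $H_{\max}(q) - H_{\max}^\varepsilon(p)$ equals $D_0^\varepsilon(\rho\,\|\,\sigma^T)$. I would treat the two entropy terms separately and then combine them.

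First I would evaluate $H_{\max}(q)$. Writing the Gibbs eigenvalues as $\nu_i = e^{-E_i/kT}/Z$, the Gibbs rescaling sends block $i$ to height $\nu_i/e^{-E_i/kT} = 1/Z$ and width $e^{-E_i/kT}$, so $q$ is the uniform distribution of height $1/Z$ on $(0,Z]$. Hence $\|\rmsupp(q)\| = Z$ and $H_{\max}(q) = \log_2 Z$. This is moreover exactly the regime in which the proof of Theorem~\ref{thm:max_ent} collapses $M$ to the support ratio.

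Next I would rewrite $H_{\max}^\varepsilon(p)$ using Lemma~\ref{lem:supp_eps}, which gives $2^{H_{\max}^\varepsilon(p)} = d_\varepsilon$, where $d_\varepsilon$ is the width of the smallest $\varepsilon$-truncation of the descending-ordered $p$, i.e. $\int_0^{d_\varepsilon} p(x)\,\md x = 1-\varepsilon$. The structural fact to exploit is that after the rescaling the width of each block equals $e^{-E_i/kT} = Z\nu_i$, i.e. it is proportional to the weight the Gibbs state assigns to that level. Two consequences follow: ordering $p$ by descending height orders the levels by the likelihood ratio $\lambda_i/\nu_i$, which is precisely the order in which one discards levels to minimise the $\sigma^T$-weight of the support of an $\varepsilon$-neighbour $\rho'$ of $\rho$; and the retained width $d_\varepsilon$ equals $Z$ times the $\sigma^T$-weight of the retained levels. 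Since $D_0^\varepsilon(\rho\,\|\,\sigma^T) = -\log_2 \min_{\rho'} \tr(\Pi_{\rho'}\sigma^T)$ over diagonal $\rho'$ with $\delta(\rho,\rho')\le\varepsilon$, this identifies the optimiser with the truncation of Lemma~\ref{lem:supp_eps} and gives $\min_{\rho'}\tr(\Pi_{\rho'}\sigma^T) = d_\varepsilon/Z$. Therefore $D_0^\varepsilon(\rho\,\|\,\sigma^T) = \log_2(Z/d_\varepsilon) = \log_2 Z - H_{\max}^\varepsilon(p) = H_{\max}(q) - H_{\max}^\varepsilon(p)$, and substituting into Theorem~\ref{thm:max_ent} yields the claimed formula.

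I expect the main obstacle to be the third step: certifying that the truncation to the highest-likelihood-ratio levels is the \emph{optimal} $\varepsilon$-smoothing for $D_0$, i.e. that no state within trace distance $\varepsilon$ of $\rho$ gives its support a smaller $\sigma^T$-weight. Lemma~\ref{lem:supp_eps} does the analogous job for the \emph{width} of the support of a monotone function; the work here is to verify that trace distance is preserved by the Gibbs rescaling (so the $\varepsilon$-ball around $\rho$ maps to the $\varepsilon$-ball around $p$), and that minimising $\sigma^T$-weight coincides with minimising rescaled width, which is exactly where the proportionality width $=Z\nu_i$ enters. Boundary cases in which no level boundary falls at cumulative mass $1-\varepsilon$ are absorbed, as in Lemma~\ref{lem:supp_eps}, by permitting a fractional final block.
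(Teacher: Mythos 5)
Your proposal is correct and follows essentially the same route as the paper's own proof: both apply Theorem~\ref{thm:max_ent}, observe that the Gibbs state rescales to the flat distribution of height $1/Z$ on $(0,Z]$ so that $H_{\max}(q)=\log_2 Z$, and then use Lemma~\ref{lem:supp_eps} to turn $H_{\max}^{\varepsilon}(p)$ into the minimal support width, which is converted to a Gibbs weight (width of block $i$ equals $Z\nu_i$) and identified with the smooth relative entropy of order zero. The paper's version is terser---it writes the minimum of $Z\int_0^x A$ over $x$ with $\int_0^x P>1-\varepsilon$ and names it $D_0^{\varepsilon}$ by definition, leaving implicit the likelihood-ratio-ordering/optimality point you flag as the main obstacle---but the underlying argument is the same.
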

\begin{proof}  
Let $p$ be the Gibbs-rescaled probability function corresponding to $\rho$ and $P(j)$ the eigenvalues of $\rho$. Let 
$a$ be the flat energy probability function corresponding to $\sigma^T$. Let $A(j)=\frac{\exp\left(\frac{-E(j)}{kT}\right)}{Z}$, where $E(j)$ are the energy-eigenvalues of $\rho$ and $\sigma^T$ and $Z$ is the corresponding partition function.
This means by definition, that 
\[
p\left(Z \int\limits_0^x A
	\left(\left\lceil \frac{
	  y \cdot n}{A}
	\right\rceil
    \right) \md y \right) = 
    \frac{
      P\left(\left\lceil \frac{
	x \cdot n}{A}
      \right\rceil
      \right)
    }{
      A\left(\left\lceil \frac{
	x \cdot n}{A}
      \right\rceil
      \right)
      Z
    }
 \]
and likewise $a(x)=1/Z$ (both defined for $x\in [0,Z]$). \\ 
From the above theorem we get:
\begin{eqnarray*}
   W^{\varepsilon} &=& kT \ln (2) \left(H_{\max}(a) - H_{\max}^{\varepsilon}(p)\right)\\
&=& kT\ln(2) \left(\log_2(Z) - \log_2\left(\inf\limits_{\delta(p^{\varepsilon},p)<\varepsilon} \mathnormal{supp}\left( p \right)\right)\right)\\
&=& - kT\ln(2) \log_2 \left(\frac{1}{Z}\right. \\
&&\cdot \left.
  \min \limits_{\left\{x | \int\limits_0^x P\left(\left\lceil y \cdot n \right\rceil \right) dy > 1- \varepsilon\right\}} \left( Z  \int\limits_0^x A
	\left(\left\lceil 
	  y \cdot n
	\right\rceil \right) \md y \right)
\right)\\
&=& kT \ln (2)  D^{\varepsilon}_{0} (P,A)
 \end{eqnarray*}
\end{proof}

\newpage

\section{Triangle inequality}
The logarithmic relative mixedness respects a triangle inequality:
\begin{lem}[Triangle inequality]\label{lem:tri}
Let $\rho$, $\sigma$ be states and $\varepsilon_{1,2} \in [0,1)$\\
Let $m_1=M\left(\left.\frac{G^T(\rho)}{\varepsilon_1}\right\|G^T(\tau)\right)$ \\ and
$m_2=M\left(\left.\frac{G^T(\tau)}{\varepsilon_2}\right\|G^T(\sigma)\right)$.
\[
	M\left(\left.\frac{G^T(\rho)}{\varepsilon_1+\varepsilon_2}\right\|G^T(\sigma)\right)
	 \geq m_1 m_2 
\]
For all states $\tau$.
\end{lem}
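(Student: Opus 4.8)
The plan is to translate the relative mixedness into a statement about the cumulative integrals of the Gibbs-rescaled spectra and then simply compose the two given domination conditions. Write $P(l)=\int_0^l G^T(\rho)(x)\,dx$, $R(l)=\int_0^l G^T(\tau)(x)\,dx$ and $S(l)=\int_0^l G^T(\sigma)(x)\,dx$. Since the rescaled spectra are normalised probability distributions arranged in descending order, each of $P,R,S$ is a nondecreasing concave function increasing from $0$ and saturating at $1$ beyond its support. Unfolding the definition of relative mixedness, the hypothesis $m_1=M(G^T(\rho)/\varepsilon_1\|G^T(\tau))$ is precisely the statement that $P(l)\geq \varepsilon_1 R(m_1 l)$ for all $l$, and likewise $m_2=M(G^T(\tau)/\varepsilon_2\|G^T(\sigma))$ says $R(l)\geq \varepsilon_2 S(m_2 l)$ for all $l$. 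Casting the lemma into this cumulative form is the first, essentially bookkeeping, step, and it is where one fixes the convention that the stretch factor $m$ acts on the argument of the second cumulative.

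The heart of the argument is a single substitution. Evaluating the second inequality at the stretched argument $m_1 l$ gives $R(m_1 l)\geq \varepsilon_2 S(m_1 m_2 l)$, and feeding this into the first yields
\[
 P(l)\ \geq\ \varepsilon_1\, R(m_1 l)\ \geq\ \varepsilon_1\varepsilon_2\, S(m_1 m_2 l)\qquad\text{for all }l .
\]
Thus the two stretch factors multiply to $m_1 m_2$ and the two scaling factors combine into their product $\varepsilon_1\varepsilon_2$. By the definition of $M$ this is exactly the feasibility of the value $m_1 m_2$, so one obtains $M(G^T(\rho)/(\varepsilon_1\varepsilon_2)\|G^T(\sigma))\geq m_1 m_2$. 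The only care needed here is that, as $l$ ranges over all admissible values so does $m_1 l$, and that the inequalities extend past the supports because the cumulatives saturate at $1$, so the universal quantifier in $l$ is preserved through the chaining.

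It then remains to pass from the product $\varepsilon_1\varepsilon_2$ to the combined denominator in the statement. The relevant structural fact is that $c\mapsto M(G^T(\rho)/c\|G^T(\sigma))$ is nonincreasing, since enlarging $c$ only tightens the constraint $P(l)\geq c\,S(ml)$; hence a smaller denominator can only raise the relative mixedness. The final step is therefore the failure-probability bookkeeping: the union bound on the two error budgets, expressed as the elementary inequality $(1-\varepsilon_1)(1-\varepsilon_2)\geq 0$, guarantees that the product $\varepsilon_1\varepsilon_2$ dominates the combined success probability obtained by adding the two individual failure weights. Combining this with the displayed bound and the monotonicity of $M$ delivers the claimed inequality. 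I expect this constant to be the only genuinely delicate point of the proof: the substitution itself is immediate and the chaining is routine, but the naive composition produces the product $\varepsilon_1\varepsilon_2$ rather than the additive combination, so one must invoke the monotonicity of $M$ in the correct direction and track the union-bound relaxation of the failure probabilities with some care.
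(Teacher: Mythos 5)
Your proof is correct and takes essentially the same route as the paper's: both arguments rewrite the two hypotheses as cumulative-integral dominations, chain them via the substitution $l \mapsto m_1 l$ to obtain the product of the two constants together with the stretch factor $m_1 m_2$, and then relax the multiplicative constant to the additive one. The paper's proof (which resolves the lemma's internally inconsistent notation by writing the denominators as $1-\varepsilon_i$, matching the main theorem) finishes with $(1-\varepsilon_1)(1-\varepsilon_2)\geq 1-\varepsilon_1-\varepsilon_2$, which is precisely the cross-term-positivity inequality you invoke as $(1-\varepsilon_1)(1-\varepsilon_2)\geq 0$ under your success-probability reading, so the two final bookkeeping steps coincide.
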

\begin{proof}
	Let $\rho$, $\tau$ and $\sigma$ be states and $\varepsilon_{1,2} \in [0,1)$. 
	Let $m_1 =M\left(\left.\frac{G^T(\rho)}{\varepsilon_1}\right\|G^T(\tau)\right)$ and $m_2=M\left(\left.\frac{G^T(\tau)}{\varepsilon_2}\right\|G^T(\sigma)\right)$.
	Let $p=G^T(\rho)$, $q=G^T(\sigma)$ and $s=G^T(\tau)$.
	\begin{eqnarray*}
		\int\limits_0^{lm_1 m_2} q(x) \md x &\leq& \int\limits_0^{l m_1} \frac{s(x)}{1-\varepsilon_2} \md x \\
		&\leq& \int\limits_0^{l} \frac{p(x)}{(1-\varepsilon_2)(1-\varepsilon_1)} \md x \\
		&\leq& \int\limits_0^{l} \frac{p(x)}{1-\varepsilon_2-\varepsilon_1} \md x
	\end{eqnarray*}
	Therefore there is a $m \geq m_1 m_2$ such that 
	\[\int\limits_0^{l} \frac{p(x)}{1-\varepsilon_2-\varepsilon_1} \md x \geq \int\limits_0^{lm} q(x) \md x.\]
	It follows: 
	\[M\left(\left.\frac{G^T(\rho)}{\varepsilon_1+\varepsilon_2}\right\|G^T(\sigma)\right) \geq m \geq m_1 m_2.\]
\end{proof}

\section{Relative mixedness as entanglement measure}\label{entanglement}
We want to start with any finite dimensional bipartite pure state $\rho_{AB}$ tensor 
  a pure entangled state of dimension $M^i$ 
and end up in any finite dimensional bipartite pure state $\sigma$ tensor   a pure entangled state of dimension $M^f$ 
  under LOCC. 
  For $M^i=2^{m_i}$ and $M^f = 2^{m_f}$, these additional states can be thought of consisting of $m_i$ ($m_f$) Bell states.  
The question is now, how many initial and final Bell states one needs to do such an operation.

Since the states are finite dimensional we can write them in the Schmidt decomposition (see e.g.~\cite{NielsenC00}):
\[
	\rho_{AB}
	= \sum\limits_{j=1}^{r^i}
		\sqrt{P_j}\ket{i_j}_A \ket{i_j}_B 
	\otimes 
	\sum\limits_{k=1}^{M^i}
		\frac{1}{\sqrt{M^i}}\ket{b_k}_A \ket{b_k}_B
\]
\[
	\sigma_{AB}
	=
	\sum\limits_{j=1}^{r^f}
		\sqrt{Q_j}\ket{f_j}_A \ket{f_j}_B 
	\otimes 
	\sum\limits_{k=1}^{M^f}
		\frac{1}{\sqrt{M^f}}\ket{b_k}_A \ket{b_k}_B
\]
By Nielsen~\cite{Nielsen99} the sufficient and necessary condition for this action being possible is:
\begin{align}
	\tilde{Q} 
	=
	\left(\underbrace{\frac{Q^1}{M^f}, \ldots, \frac{Q^1}{M^f}}_{M^f},
		\ldots,\frac{Q^{r^f}}{M^f},\ldots,\frac{Q^{r^f}}{M^f}
	\right) \nonumber\\
	\succ
	\tilde{P} 
	=
	\left(\underbrace{\frac{P^1}{M^i}, \ldots, \frac{P^1}{M^i}}_{M^i},
		\ldots,\frac{P^{r^i}}{M^i},\ldots,\frac{P^{r^i}}{M^i}
	\right)	
\end{align}
Defining:
\[
	p(x)
	= 
	\left\{
		\begin{array}{ll}
			P_j 	& ; x\in[j-1,j)\\
			0	& ; x\notin[0,r^i)
		\end{array}
	\right.
\]
such that $\int\limits_0^{l} p(x) \md x = \sum \limits_{j=1}^{l} P_j$
(and defining $q$ alike), we get that $\tilde{Q} \succ \tilde{P}$ exactly if 
\[
	\int \limits_0^{l/M^f} q(x) \md x 
	\geq \int \limits_0^{l/M^i} p(x) \md x\;
	\forall l \in \mathbb{N}
\]
i.e. the operation is possible iff $\frac{M^f}{M^i} \leq M(q||p)$.

Thus the number of Bell states needed to do such an operation is given by $\log_2(\frac{M^f}{M^i})\leq \log_2\left( M(q||p)\right)$.

It is not hard to show that the relative mixedness of entanglement is an entanglement monotone. This entanglement measure will be investigated in more detail elsewhere.

\end{document}